\documentclass[a4paper]{article}

\usepackage{fullpage}
\newcommand{\NLIPICS}{}

\usepackage{amsmath,amsthm,nicefrac}
\usepackage{amsfonts, amstext}
\usepackage[font={small,it}]{caption}

\usepackage{setspace}

\usepackage{enumitem}
\usepackage{xparse}

\usepackage[nameinlink]{cleveref}
\Crefname{algocf}{Algorithm}{Algorithms}
\crefname{algocfline}{line}{lines}
\Crefname{invariant}{Invariant}{Invariants}
\Crefname{claim}{Claim}{Claims}
\Crefname{observation}{Observation}{Observations}
\Crefname{subclaim}{Subclaim}{Subclaims}
\renewcommand{\paragraph}{\subsection}
\usepackage{epsfig}
\usepackage{amsthm,amssymb,mathrsfs}
\usepackage{xspace}
\usepackage{soul}
\usepackage{latexsym}
\usepackage{bbm}

\usepackage{framed}

\usepackage[dvipsnames]{xcolor}
\definecolor{DarkGray}{rgb}{0.66, 0.66, 0.66}
\definecolor{DarkPowderBlue}{rgb}{0.0, 0.2, 0.6}
\definecolor{fluorescentyellow}{rgb}{0.8, 1.0, 0.0}

\usepackage[ruled,vlined,linesnumbered,algonl]{algorithm2e}
\SetEndCharOfAlgoLine{}
\SetKwComment{Comment}{\footnotesize$\triangleright$\ }{}

\SetCommentSty{mycommfont}

\usepackage{thmtools,thm-restate}

\makeatletter
\setlength{\parindent}{7 mm}
\addtolength{\partopsep}{-2mm}
\makeatother
\allowdisplaybreaks

\newcommand{\maxmin}{{\sc MaxMin}\xspace}
\newcommand{\minmax}{{\sc MinMax}\xspace}

\newcommand{\lrat}[1]{\nicefrac{\elx^{(#1)}}{\eln^{(#1)}}}

\newcommand{\alert}[1]{{\color{red}#1}}


\sethlcolor{fluorescentyellow}

\newcommand{\initOneLiners}{%
    \setlength{\itemsep}{0pt}
    \setlength{\parsep }{0pt}
    \setlength{\topsep }{0pt}
}

\usepackage{amsthm}

\ifdefined\NLIPICS
\newtheorem{theorem}{Theorem}[section]
\newtheorem{lemma}[theorem]{Lemma}

\newtheorem{remark}[theorem]{Remark}
\newtheorem{corollary}[theorem]{Corollary}

\newtheorem{observation}[theorem]{Observation}

\theoremstyle{definition}

\renewcommand{\theinvariant}{(I\@arabic\c@invariant)}
\fi
\newtheorem{fact}[theorem]{Fact}
\makeatletter 

\makeatother

\newcommand{\e}{\epsilon}

\newcommand{\Ldiff}{\Delta}
\newcommand{\cX}{\mathcal{X}}

\newcommand{\xmin}{x_{\min}}

\newcommand{\ep}{{\sc EP}}
\newcommand{\gp}{{\sc GP}}

\newcommand{\poly}{\operatorname{poly}}
\newcommand{\argmin}{\operatorname{argmin}}

\newcommand{\dsum}{\displaystyle\sum}

\newcommand{\junk}[1]{}
\newcommand{\eat}[1]{}


\usepackage{comment}
\usepackage{bm}
\usepackage{todonotes}
\usepackage{float}

\excludecomment{noappendix}

\title{A General Framework for Learning-Augmented Online Allocation\thanks{An extended abstract of this paper will appear in the Proceedings of the 50th EATCS International Colloquium on Automata, Languages and Programming (ICALP 2023). IC was supported in part by ISF grant 1737/21. DP was supported in part by NSF grants CCF-1750140 (CAREER Award) and CCF-1955703.}}

\author{Ilan Reuven Cohen\thanks{Faculty of Engineering, Bar-Ilan University, Israel. ilan-reuven.cohen@biu.ac.il}
\and{Debmalya Panigrahi \thanks{
Department of Computer Science, Duke University, Durham, NC, USA.
debmalya@cs.duke.edu}}}

\usepackage[bibliography=common]{apxproof}
\usepackage{verbatim}
\renewenvironment{toappendix}{}{}

\begin{document}
\date{}

\newcommand{\Rnn}{\R_{\geq0}}
\newcommand{\R}{\mathbb{R}}
\newcommand{\bw}{\mathbf{w}}
\newcommand{\bg}{\mathbf{g}}
\newcommand{\bp}{\mathbf{p}}
\newcommand{\bu}{\mathbf{u}}
\newcommand{\bep}{\mathbf{\epsilon}}
\newcommand{\Rp}{\R_{> 0}}
\newcommand{\mps}{\Rp^{m \times n}}
\newcommand{\vps}{\Rp^{m}}
\newcommand{\vpsn}{\Rp^{n}}

\newcommand{\lsan}{\ell^{\textbf{SNT}}}
\newcommand{\lmks}{\ell^{\textbf{MKS}}}

\newcommand*\bell{\ensuremath{\boldsymbol\ell}}
\newcommand{\elx}{\ell_{\max}}
\newcommand{\eln}{\ell_{\min}}
\newcommand{\sumds}{\displaystyle\sum}
\newcommand{\sumdsj}{\displaystyle\sum_{j\in[n]}}
\newcommand{\fracl}[2]{{#1}\big/{(#2)}}

\maketitle

\pagenumbering{gobble}

\begin{abstract}
    Online allocation is a broad class of problems where items arriving online have to be allocated to agents who have a fixed utility/cost for each assigned item so to maximize/minimize some objective. This framework captures a broad range of fundamental problems such as the Santa Claus problem (maximizing minimum utility), Nash welfare maximization (maximizing geometric mean of utilities), makespan minimization (minimizing maximum cost), minimization of  $\ell_p$-norms, and so on. We focus on divisible items (i.e., fractional allocations) in this paper. Even for divisible items, these problems are characterized by strong super-constant lower bounds in the classical worst-case online model. 
    
    In this paper, we study online allocations in the {\em learning-augmented} setting, i.e., where the algorithm has access to some additional (machine-learned) information about the problem instance. We introduce a {\em general} algorithmic framework for learning-augmented online allocation that produces nearly optimal solutions for this broad range of maximization and minimization objectives using only a single learned parameter for every agent. As corollaries of our general framework, we improve prior results of Lattanzi et al. (SODA 2020) and Li and Xian (ICML 2021) for learning-augmented makespan minimization, and obtain the first learning-augmented nearly-optimal algorithms for the other objectives such as Santa Claus, Nash welfare, $\ell_p$-minimization, etc. We also give tight bounds on the resilience of our algorithms to errors in the learned parameters, and study the learnability of these parameters.
\end{abstract}

\clearpage

\pagenumbering{arabic}

\section{Introduction}\label{sec:introduction}
Recent research has focused on obtaining learning-augmented algorithms for many online problems to overcome pessimistic lower bounds in competitive analysis. In this paper, we consider the {\em online allocation} framework in the learning-augmented setting. In this framework, a set of (divisible) items have to be allocated online among a set of agents, where each agent has a non-negative utility/cost for each item. This framework captures a broad range of classic problems depending on the objective one seeks to optimize. In load balancing (also called {\em makespan minimization}), the goal is to {\em minimize the maximum} (\minmax) cost of any agent. A more general goal is to minimize the $\ell_p$-norm of the cost vector defined on the agents, for some $p \ge 1$. Both makespan minimization (which is $\ell_\infty$-minimization) and $\ell_p$-minimization are classic problems in scheduling theory and have been extensively studied in competitive analysis. In a different vein, the online allocation framework also applies to maximization problems, where the allocation of an item obtains some utility for the receiving agent. This includes the famous Santa Claus problem, where the goal is to {\em maximize the minimum} (\maxmin) utility of any agent, or the maximization of {\em Nash welfare} which is defined as the geometric mean of the agents' utilities. These maximization objectives have also been been extensively studied, particularly because of their connection to {\em fairness} in allocations.

\smallskip\noindent{\bf Learning-Augmented Online Allocation.}
In this paper, we consider the online allocation framework in the {\em learning-augmented} setting. Typically, online allocation problems are characterized by strong super-constant lower bounds in competitive analysis, e.g., $\Omega(\log m)$ for load balancing~\cite{AzarNR95}, $\Omega(p)$ for $\ell_p$-minimization~\cite{AwerbuchAGKKV95} and $\Omega(m)$ for both Santa Claus (folklore) and Nash welfare~\cite{BanerjeeGGJ22}. A natural question, then, is whether some additional (machine-learned) information about the problem instance (we call these {\em learned parameters}) can help overcome these lower bounds and obtain a near-optimal solution. In this paper, we answer this question in the affirmative. In particular, we give a simple, unified framework for obtaining near-optimal (fractional) allocations {\em using a single learned parameter for every agent}. Our result holds for both maximization and minimization problems, and applies to all objective functions that satisfy two mild technical conditions that we define below. Indeed, the most interesting  aspect of our techniques and results is this generality: prior work for online allocation problems, both in {\em competitive analysis} and {\em beyond worst-case algorithms}, has typically been specific to the objective at hand, and the techniques for maximization and minimization objectives bear no similarity. In contrast, our techniques surprisingly handles not only a broad range of objectives but applies both to maximization and minimization problems simultaneously. We hope that the generality of our methods will cast a new light on what is one of the most important classes of problems in combinatorial optimization.

Before proceeding further, we define the two technical conditions that the objective function of the online allocation problem needs to satisfy for our results to apply. Let $f:\vps\rightarrow \Rp$ be the objective function defined on the vector of costs/utilities of the agents. Then, the conditions are:
\begin{itemize}
    \item {\em Monotonicity:} 
    $f$ is said to be {\em monotone} if the following holds: for any $\ell, \ell' \in \vps$ such that $\ell_i \ge \ell'_i$ for all $i\in [m]$, we have $f(\ell) \ge f(\ell')$.
    \item {\em Homogeneity:} 
    $f$ is said to be {\em homogeneous} if the following holds: for any $\ell, \ell' \in \vps$ such that $\ell'_i = \alpha \cdot \ell_i$ for all $i\in [m]$, then we have 
    $f(\ell') = \alpha \cdot f(\ell)$. 
\end{itemize}
We say an objective function is {\em well-behaved} if it is both monotone and homogeneous. All online allocation objectives studied previously that we are aware of are well-behaved, including the examples given above. 

\subsection{Our Results}

We now state our main result below:
\begin{theorem}[Informal]\label{thm:main-informal}
    Fix any $\epsilon > 0$. For any online allocation problem with a well-behaved objective, there is an algorithm that achieves a competitive ratio of  $1-\epsilon$ for maximization problems or $1+\epsilon$ for minimization problems using a single learned parameter for every agent.
\end{theorem}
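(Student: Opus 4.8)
The plan is to fold all the objectives into one primal--dual template and then run a price-guided online matching against it. Fix the realized instance. Because $f$ is monotone and $1$-homogeneous, by an Euler-type identity any (sub/super)gradient $\beta$ of $f$ at a point $\ell$ satisfies $\sum_i\beta_i\ell_i=f(\ell)$. So I would take a fractional allocation $x^\star$ that is $(1\pm\Theta(\epsilon))$-optimal and, crucially, structurally \emph{non-degenerate} (for this I intend to first replace $f$ by a smoothed surrogate $f_\epsilon$ whose gradient is strictly positive and "informative'' even when $f$ itself is $\min$ or $\max$, and whose optimum is $(1\pm\Theta(\epsilon))$-optimal for $f$); write $\ell^\star_i=\sum_j c_{ij}x^\star_{ij}$ and let $\beta^\star\ge 0$ be the corresponding gradient, with per-item prices $\gamma^\star_j$ equal to $\min_i\beta^\star_i c_{ij}$ (minimization) or $\max_i\beta^\star_i c_{ij}$ (maximization). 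Non-degeneracy is chosen so that $(\beta^\star,\gamma^\star)$ also certifies near-optimality of the allocation LP, i.e. $\sum_j\gamma^\star_j$ is within $(1\pm\Theta(\epsilon))$ of $\opt$. The single learned parameter handed to agent $i$ is $\lambda_i:=\beta^\star_i\ell^\star_i$; since the algorithm knows $f$, it recovers $\opt=\sum_i\lambda_i$, the prices $\beta^\star$, and the target loads $\ell^\star$ from the vector $(\lambda_i)_i$ (with $\lambda_i=0$ read as "agent $i$ unused'').

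The algorithm is a continuous, price-guided water-filling with soft capacities. When item $j$ arrives, allocate it fractionally so as to raise the congestion $\ell_i/\ell^\star_i$ of the agents that are currently cheapest for $j$ in the \emph{scaled} metric $\beta^\star_i c_{ij}$ (for maximization: the agents that are currently \emph{best}, i.e. largest $\beta^\star_i p_{ij}$), never pushing congestion past $1$ (for maximization: never past "satisfaction'' $1$, after which the agent is dropped); break ties in $\beta^\star_i c_{ij}$ by equalizing congestion. This is exactly online fractional $b$-matching against the capacities $\ell^\star$, except that items travel along the $\beta^\star$-minimizing (resp.\ maximizing) edges rather than along the raw-cost edges --- it is the reweighting by $\beta^\star$ that lets the online process mimic the structure of $x^\star$, in particular telling it which agents to leave free for items that can only arrive later.

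The analysis has a clean two-line skeleton. Capacity feasibility: the procedure never needs to route an item onto a saturated agent, so the output $\ell^{\mathrm{ALG}}$ satisfies $\ell^{\mathrm{ALG}}_i\le\ell^\star_i$ for all $i$ (for maximization, $\ell^{\mathrm{ALG}}_i\ge(1-\Theta(\epsilon))\ell^\star_i$); then monotonicity and homogeneity give $f(\ell^{\mathrm{ALG}})\le f(\ell^\star)\le(1+\Theta(\epsilon))\opt$ (resp.\ $\ge(1-\Theta(\epsilon))\opt$), and rescaling $\epsilon$ finishes. To prove feasibility I would compare the online trajectory to $x^\star$ by an exchange / augmenting-path argument: if item $j$ arrives with all its $\beta^\star$-cheap non-saturated agents exhausted, trace an alternating sequence $j\to i_1\to j_1\to i_2\to\cdots$, where $i_t$ is an agent to which $x^\star$ sends $j_{t-1}$ and $j_t$ is an item that the algorithm placed on $i_t$; complementary slackness (items move only along $\beta^\star$-extremal edges, up to the $\Theta(\epsilon)$ slack built in) keeps every edge of this path cheap in both solutions, while global feasibility of $x^\star$ forces the path to terminate at an agent with spare capacity, a contradiction. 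A companion bound, $\sum_i\beta^\star_i\ell^{\mathrm{ALG}}_i\le(1+\Theta(\epsilon))\sum_j\gamma^\star_j\le(1+\Theta(\epsilon))\opt$, is used to account for the slack. (A smoothed exponential-weights variant of the water-filling yields the same guarantee and may be cleaner for the edge cases.)

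I expect two steps to be the main obstacles. First, the \emph{definition of the learned parameter}: a genuinely optimal LP solution can be degenerate in a way that destroys the information content of its dual (e.g.\ when many agents tie at the maximum, or for $f=\min$ where $\nabla f$ sits on the $\argmin$ and is blind to which non-minimizing agents still must be filled), so one must instead commit the $(1\pm\epsilon)$ budget to choosing a robust, slightly suboptimal primal--dual pair via the surrogate $f_\epsilon$, and then verify that $\opt$, $\beta^\star$, and $\ell^\star$ are all faithfully recoverable from the single number $\lambda_i=\beta^\star_i\ell^\star_i$ --- including correctly handling $\lambda_i=0$. Second, the \emph{exchange argument against adversarial arrival order}: since the algorithm commits irrevocably, one has to maintain an invariant (roughly, that the set of saturated agents at any prefix is dominated by what $x^\star$ would have saturated on the same prefix of items, modulo the prices) strong enough that the augmenting-path argument can be invoked at every step, for every order. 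Unifying maximization with minimization and handling the non-differentiable objectives $\max$ and $\min$ --- where the gradient is a set and one must pick the right extreme point of it, which is precisely what the smoothing $f_\epsilon$ does for us --- is the last wrinkle to check case by case.
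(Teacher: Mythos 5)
Your proposal takes a genuinely different route from the paper, but it contains two gaps that I do not think are repairable within the framework you describe.

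\textbf{The single learned number does not determine the price--capacity pair.} You set $\lambda_i=\beta^\star_i\ell^\star_i$ and claim that, since the algorithm knows $f$, it can recover $\opt$, the price vector $\beta^\star$, and the target loads $\ell^\star$ from $(\lambda_i)_i$. The first is true: Euler's identity for $1$-homogeneous $f$ gives $\sum_i\lambda_i=f(\ell^\star)=\opt$. But recovering $\beta^\star_i$ and $\ell^\star_i$ \emph{separately} from their product is a genuine inverse problem: you would need the map $\ell\mapsto(\,(\nabla f(\ell))_i\,\ell_i\,)_i$ to be injective, which you never establish, and which fails badly precisely for the objectives of interest. For $f=\min$ or $f=\max$ the gradient is supported on the $\argmin/\argmax$, so $\lambda_i=0$ for most agents and carries no information about $\ell^\star_i$. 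You propose a smoothing $f_\epsilon$, but when the optimum genuinely has all loads equal (the normal case for \minmax and \maxmin), any sensible smoothing still yields a nearly uniform $\beta^\star$, and $\lambda_i\approx\ell^\star_i/m$ reveals nothing that distinguishes agents with equal target loads. The decisive fact is that for equal target loads, $\lambda$ is constant across agents; yet the proportional weights $w_i$ in the paper are emphatically \emph{not} constant in that case. Your parameter simply does not have enough entropy per coordinate.

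\textbf{The stateful water-filling against learned capacities can be defeated by arrival order.} Even granting yourself two numbers per agent (both $\beta^\star_i$ and $\ell^\star_i$, as in Li--Xian), the price-guided greedy with ``equalize congestion among $\beta^\star$-cheapest agents'' can be forced into infeasibility. A minimal failure, in the restricted-assignment setting where $\beta^\star$ and $\ell^\star$ are necessarily uniform: three agents $A,B,C$ with target load $1$ each; items $j_1\in\{A,B\}$, $j_2\in\{B,C\}$, $j_3\in\{B\}$, each of unit size. Your algorithm sends half of $j_1$ to $B$, then $0.25$ of $j_2$ to $B$ (after bringing $C$ level with $B$), leaving only $0.25$ of $B$'s capacity for $j_3$, which needs a full unit. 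The unique offline-feasible routing $j_1\to A,\ j_2\to C,\ j_3\to B$ is structurally invisible to the stateful scheme because the duals do not break the tie, yet the decision at $j_1$ is already the one that matters. The alternating-path argument you sketch does not rescue this: the path is defined using the items the online process has \emph{already placed}, and because those placements can diverge from $x^\star$'s (as the example shows), the path need not ever reach an agent with spare capacity. This is the precise point where being memoryless is essential: the paper's allocation of item $j$ depends only on $(p_{\cdot,j},\bw)$ and not at all on the current loads, so arrival order is irrelevant by construction.

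\textbf{What the paper does instead.} The paper allocates $x_{i,j}\propto w_i\,p_{i,j}^\alpha$ and proves (i) for any fixed transformation, there is a unique set of parameters $\bw^*$ under which proportional allocation gives \emph{equal loads} on all agents, established constructively via a Sinkhorn-like fixed-point iteration (\Cref{thm:maincanonical}); (ii) this canonical load is monotone in $\alpha$ and converges to $\lsan$ as $\alpha\to\infty$ and to $\lmks$ as $\alpha\to-\infty$ (\Cref{thm:exponential}); and (iii) any well-behaved $f$ reduces to \maxmin/\minmax by scaling the weight matrix by $\ell^f_i$, with the scaling absorbed into $w_i$ (\Cref{thm:general}). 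The single parameter $w_i$ is a routing multiplier, not a dual price, and it encodes both the target load and the scarcity in a way that survives adversarial arrival order. I would encourage you to note that your example above is handled correctly by the paper's scheme: canonical parameters with $w_B\to 0$ and $w_A,w_C$ large route $j_1$ almost entirely to $A$ and $j_2$ almost entirely to $C$, regardless of arrival order.
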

We remark that the role of $\epsilon$ in the above theorem is to ensure that the learned parameter vector is of bounded precision.

\smallskip\noindent{\bf Comparison to Prior Work.} 
Lattanzi {\em et al.}~\cite{LattanziLMV20} were the first to consider online allocation in a learning-augmented setting. They considered a special case of the load balancing problem called restricted assignment, and showed the surprising result that a single (learned) parameter for each agent is sufficient to bypass the lower bound and obtain a nearly optimal (fractional) allocation. This result was further generalized by Li and Xian~\cite{LiX21} to the full generality of the load balancing problem, but instead of a single parameter, they now required two parameters for every agent. At a high level, their algorithm first uses one set of parameters to restrict the set of agents who can receive an item, and then solves the resulting restricted assignment problem using the second set of parameters. A a corollary of \Cref{thm:main-informal}, we improve this result by obtaining a near-optimal solution using a single learned parameter for every agent. In both these papers, as well as in our paper, the (fractional) allocation uses \emph{proportional allocation}. In the setting of online optimization, proportional allocations were used earlier by Agrawal {\em et al.}~\cite{agrawal2018proportional} for the (weighted) $b$-matching problem. As in our paper, they also gave an iterative algorithm for computing the parameters of the allocation. However, because the two problems are structurally very different (e.g., matching is a packing problem while our allocation problems are covering problems), the iterative algorithm in the Agrawal {\em et al.} paper is different from ours. To the best of our knowledge, our results for the other problems, namely Santa Claus, Nash welfare maximization, $\ell_p$-norm minimization, and other objectives that can be defined in the online allocation framework are the first results in learning-augmented algorithms for these problems.

\smallskip
We now state our additional results.

\smallskip\noindent{\bf Resilience to Prediction Error.} 
A key desiderata of learning-augmented online algorithms is resilience to errors in the learned parameters. In other words, one desires that the competitive ratio of the algorithm should gracefully degrade when the learned parameters used in the algorithm deviate from their optimal values. 
For well-behaved objectives for both minimization and maximization problems, we give an error-resilient algorithm whose competitive ratio degrades gracefully with prediction error:

\begin{theorem}[Informal]\label{noise}
    For any online allocation problem with a well-behaved objective, there is an (learning-augmented) algorithm that achieves a competitive ratio of $O(\alpha)$ when the learned parameter input to the algorithm is within a multiplicative factor of $\alpha$ of the optimal learned parameter for every agent. This holds for both minimization and maximization objectives.
\end{theorem}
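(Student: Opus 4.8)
The plan is to bootstrap the exact guarantee of \Cref{thm:main-informal} by a robustness analysis of the proportional allocation that underlies it, using the fact that \emph{any} per-agent cost/utility bound lifts to the objective through the two defining properties of a well-behaved $f$. For a parameter vector $w$, write $g_i(w)$ for the cost (minimization) or utility (maximization) that the proportional allocation puts on agent $i$. Two structural facts are relevant: $g(\cdot)$ is \emph{scale-invariant} (multiplying all $w_i$ by a common factor leaves the allocation, hence $g$, unchanged), and $g_i(w)$ is a sum over items $j$ of terms of the form $w_i$ divided by a pooled quantity $\sum_{i'} w_{i'}/c_{i'j}$, so that raising a coordinate $w_k$ weakly increases $g_k$ and weakly decreases every other $g_i$. \Cref{thm:main-informal} provides an ``optimal'' parameter vector $w^\star$ with $f(g(w^\star)) \in (1\pm\epsilon)\cdot\opt$, and its construction certifies the sharper, scale-calibrated fact that $g_i(w^\star)$ equals, up to a $(1\pm\epsilon)$ factor, the value encoded in $w^\star_i$ itself.

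The core step is a per-agent perturbation estimate. Crudely, each summand of $g_i(w)$ has numerator $w_i$, at most $\alpha$ times the corresponding numerator under $w^\star$, and pooled denominator at least $1/\alpha$ times the one under $w^\star$, so $g_i(w) \le \alpha^2\, g_i(w^\star)$ (and $\ge g_i(w^\star)/\alpha^2$), which already yields an $O(\alpha^2)$ competitive ratio after the lifting step below. To reach the claimed $O(\alpha)$, I would sharpen this using scale-invariance: rescale $w$ by the reciprocal of $\min_i (w_i/w^\star_i)$ so that $w \ge w^\star$ coordinate-wise, with overestimation factors $\beta_i = w_i/w^\star_i$ bounded by $\alpha$ when prediction error is measured by the (scale-invariant) spread of the ratios. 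Now every pooled denominator only grows relative to $w^\star$, so no cancellation is lost and $g_i(w) \le \beta_i\, g_i(w^\star) \le \alpha\, g_i(w^\star)$; the symmetric bound $g_i(w) \ge g_i(w^\star)/\alpha$ holds for maximization, where the grown denominators only shrink utilities. I expect this sharpening to be the main obstacle, because after perturbation $w$ is no longer the water-filling / fixed-point vector returned by \Cref{thm:main-informal}, so its analysis cannot be reused as a black box; one instead compares the two proportional allocations item by item, and it is precisely the one-sidedness of the rescaled error that keeps a second factor of $\alpha$ from entering through the shared denominators.

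Finally I would assemble the bound. Let $\ell = g(w)$ be the realized cost/utility vector. For minimization the previous step gives $\ell_i \le O(\alpha)\cdot g_i(w^\star)$ for all $i$; monotonicity and then homogeneity of $f$ give $f(\ell) \le f\bigl(O(\alpha)\cdot g(w^\star)\bigr) = O(\alpha)\cdot f(g(w^\star)) \le O(\alpha)(1+\epsilon)\cdot\opt$, a competitive ratio of $O(\alpha)$. For maximization, $\ell_i \ge g_i(w^\star)/O(\alpha)$, and the same two properties give $f(\ell) \ge f(g(w^\star))/O(\alpha) \ge (1-\epsilon)\,\opt/O(\alpha)$. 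Since $\epsilon$ appears only in the $(1\pm\epsilon)$ factor it is absorbed into the $O(\alpha)$, and because the argument never used anything about $f$ beyond monotonicity and homogeneity, the resilience bound holds uniformly for every well-behaved objective and for both optimization directions.
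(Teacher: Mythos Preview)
Your proposal is correct and follows essentially the same approach as the paper: obtain a per-agent bound $\ell_i(\bw)/\alpha \le \ell_i(\bw^\star) \le \alpha\,\ell_i(\bw)$ on the proportional-allocation loads, then lift to $f$ via monotonicity and homogeneity. The only cosmetic difference is that the paper avoids your rescaling step by using the identity $y_{i,j}/z_{i,j} = \sum_{i'} (\tau_{i'}/\tau_i)\, y_{i',j}$ with $\tau_i = w_i/w^\star_i$, which is a convex combination of values in $[1/\alpha,\alpha]$ and hence gives both bounds at once; this is equivalent to your one-sided argument after normalizing $\min_i \tau_i = 1$.
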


The above theorem is asymptotically tight for the \maxmin objective. But, interestingly, for the \minmax objective we can do better: 
\begin{theorem}[Informal]\label{noise-min}
    For the load balancing problem (\minmax objective), there is an (learning-augmented) algorithm that achieves a competitive ratio of $O(\log \alpha)$ when the learned parameter input to the algorithm is within a multiplicative factor of $\alpha$ of the optimal learned parameter for every agent. Moreover, the dependence $O(\log \alpha)$ in the above statement is asymptotically tight.
\end{theorem}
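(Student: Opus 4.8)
I would prove the upper and lower bounds separately.

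\textbf{Upper bound ($O(\log\alpha)$).} Normalize so that the optimal makespan is $1$, and let $\lambda^\star$ be the optimal per-agent parameter vector for the \minmax algorithm of \Cref{thm:main-informal}, which on its own certifies makespan $1+\eps$; the algorithm is instead handed $\lambda$ with $\lambda_i\in[\lambda^\star_i/\alpha,\alpha\lambda^\star_i]$ for every machine $i$. The structural observation that makes $O(\log\alpha)$ (rather than the generic $O(\alpha)$ of \Cref{noise}) possible is that a multiplicative error of $\alpha$ spans only $O(\log\alpha)$ geometric scales: bucket the machines into classes $V_0,\dots,V_{L-1}$ with $L=O(\log\alpha)$, according to $\lfloor\log_2(\lambda_i/\lambda^\star_i)\rfloor$. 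Within any one class the given parameters agree with the optimal ones up to a factor of $2$, so --- re-running the analysis behind \Cref{thm:main-informal} with a constant amount of extra slack --- the part of the allocation ``governed by'' that class is $O(1)$-competitive, i.e., contributes $O(1)$ to the load of any machine. Adding up the contributions of the $L=O(\log\alpha)$ classes to a fixed machine's load then yields makespan $O(\log\alpha)$. (Since there are at most $m$ machines, and one may additionally run the classical $O(\log m)$-competitive online algorithm, the bound is really $O(\min\{\log\alpha,\log m\})$, which is what one should expect.)

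\textbf{Lower bound ($\Omega(\log\alpha)$).} For tightness I would reuse the recursive construction underlying the $\Omega(\log m)$ lower bound for worst-case online load balancing, which applies even to fractional allocations: it proceeds in $\Theta(\log m)$ phases, where each phase releases a batch of items that forces the online algorithm to commit a constant fraction of its ``capacity'' onto the currently active machines, after which the active set is (roughly) halved; a cheap offline solution exists because the items released in a phase can be routed onto the machines retired at the end of that phase. I would run exactly this construction, but with only $\Theta(\log\alpha)$ phases on a pool of $\Theta(\alpha)$ machines, and choose the optimal parameter vector $\lambda^\star$ so as to encode the cheap offline routing, with the consequence that the parameters of machines retired in consecutive phases differ only by a bounded factor and the whole vector spans a factor of at most $\alpha^2$. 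The algorithm, however, is given a \emph{single} parameter value placed in the middle of this range: this value is within a multiplicative factor of $\alpha$ of $\lambda^\star_i$ for every machine $i$, yet it is identical across all machines the adversary wants to treat symmetrically in each phase, so the algorithm cannot tell them apart and is effectively an oblivious online algorithm on $\Theta(\log\alpha)$ nested levels, incurring makespan $\Omega(\log\alpha)$. (Here $\alpha\le\poly(m)$; for larger $\alpha$ the bound saturates at $\Omega(\log m)$, again matching.)

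\textbf{Main obstacle.} For the upper bound, the delicate step is decoupling the scales: a low-scale machine can still lie in the support of items whose support also meets high scales, and in a proportional-type allocation a large-parameter machine absorbs a disproportionate share of those items, so the ``$O(1)$ per class'' claim is not automatic --- one has to show that the leakage across classes is itself geometrically damped and telescopes, and this is the real technical core. For the lower bound, the crux is to engineer the instance so that the factor-$\alpha$ parameter slack masks exactly the $\Theta(\log\alpha)$ adversarial choices while still admitting a genuinely valid optimal parameter vector that certifies makespan $O(1)$: the slack budget must be spent to hide precisely the information the classical adversary exploits, and no more.
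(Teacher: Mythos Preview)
Your lower bound is essentially the paper's: the paper also runs the classical recursive $\Omega(\log m)$ restricted-assignment construction, sets all predicted parameters equal, and exhibits an optimal parameter vector (encoding the cheap offline routing) that lies within a factor $\alpha$ of the uniform prediction. So that part is fine.

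The upper bound has a genuine gap. Your bucketing is by $\lfloor\log_2(\lambda_i/\lambda^\star_i)\rfloor$, but $\lambda^\star$ is precisely what the algorithm does not know, so this cannot be an algorithm. If instead you mean it as an analysis of a \emph{single} proportional allocation with the given $\lambda$, it cannot succeed: the paper's own Lemma~\ref{lem:subaddF} shows that a fixed proportional allocation with $\eta$-approximate weights can blow up each machine's load by a full factor of~$\eta$, and this is tight --- a single machine whose parameter is too large by~$\eta$ already absorbs $\eta$ times its intended share. There is no per-class decomposition of the load of a fixed allocation in which each class contributes $O(1)$; the ``leakage across classes'' you flag as the obstacle is in fact the whole quantity, not a lower-order term to be telescoped away.

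What the paper does instead is \emph{adaptive}: it starts from the given $\hat\bw$, runs proportional allocation, and whenever some machine's load in the current phase exceeds $2T$ it \emph{halves} that machine's parameter and opens a new phase for it (\Cref{alg:online}). The key lemma (\Cref{lem:subadd}) says that once a machine's parameter has been halved down into the interval $[\tfrac12 w^\star_k,\,w^\star_k]$ while all others are still $\ge \tfrac12 w^\star_i$, its per-phase load is at most $2T$, so the halving stops. Hence each machine is halved at most $O(\log\eta)$ times, each phase contributes at most $2T$, and the total is $O(T\log\eta)$. The logarithm comes from the \emph{number of adaptive corrections}, not from a static scale decomposition --- that is the idea your sketch is missing.
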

An analogous statement was previously known only in the special case of restricted assignment~\cite{LattanziLMV20}.
\begin{remark} 
    We use a multiplicative measure of error $\alpha$ similar to  \cite{LattanziLMV20}.
For both \minmax and \maxmin objectives, we may assume w.l.o.g. that $\alpha \leq m$. This is because by standard techniques, it is possible to achieve $O(\min (\alpha,m))$ and $O(\log \min (\alpha,m))$ competitiveness for the \maxmin and \minmax objectives respectively. We also show that our bounds are asymptotically tight as a function of $\alpha$, in addition to matching existing lower bounds for the two problems as a function of $m$.
\end{remark}

\smallskip\noindent{\bf Learnability of Parameters.}
We also study the learnability of the parameters used in our algorithm. 
Following \cite{LiX21} and \cite{LavastidaMRX21a}, we adopt the PAC framework. We assume that each item is drawn independently (but not necessarily identically) from a distribution, and show a bound on the sample complexity of approximately learning the parameter vector under this setting. For the \maxmin and \minmax objectives, we show the following:
\begin{theorem}[Informal]\label{thm:learning-informal}
Fix any $\epsilon > 0$. For the online allocation problem with \maxmin or \minmax objectives, the sample complexity of learning a parameter vector that gives a $1-\epsilon$ (for \maxmin) or $1+\epsilon$ (for \minmax) approximation is $O(\frac{m}{\log m} \cdot \log \frac{m}{\e})$.
\end{theorem}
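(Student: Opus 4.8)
The plan is to follow the standard PAC template used in~\cite{LiX21,LavastidaMRX21a}: reduce to a finite family of candidate parameter vectors, bound the size of that family, and then run a uniform-convergence argument. The first step is \emph{discretization}. Using homogeneity of the objective we may normalize so that in the relevant instances every agent's (near-)optimal parameter lies in an interval of length $\poly(m/\epsilon)$; and because proportional allocation is robust to small multiplicative perturbations of its parameters --- this robustness is exactly what the precision remark after \Cref{thm:main-informal} encodes, and it is a direct consequence of monotonicity and homogeneity of $f$ --- rounding each coordinate to a power of $(1+\Theta(\epsilon))$ degrades the \maxmin/\minmax ratio by only a $1\pm O(\epsilon)$ factor. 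This produces a finite family $\mathcal{H}$ of ``canonical'' parameter vectors such that some $\ell^\star\in\mathcal{H}$ is near-optimal for the distribution, so it suffices to identify a good member of $\mathcal{H}$ from samples.

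The heart of the argument --- and the step I expect to be the main obstacle --- is bounding $|\mathcal{H}|$. The naive bound is $(\poly(m/\epsilon))^{m}$, giving $\log|\mathcal{H}| = O(m\log(m/\epsilon))$ and hence a sample complexity of $O(m\log(m/\epsilon))$; to obtain the claimed $O(\frac{m}{\log m}\cdot\log\frac m\epsilon)$ one must show that the number of \emph{effectively distinct} canonical vectors --- those inducing genuinely different proportional allocations on the distribution --- is only $(\poly(m/\epsilon))^{O(m/\log m)}$. Here I would exploit the special structure of the \maxmin and \minmax objectives: the quality of a parameter vector is governed only by coarse features of the allocation it induces (for \minmax, essentially which agents sit near their target load at each of the $O(\log m)$ dyadic load scales, and symmetrically for \maxmin), and one can exhibit an encoding of canonical vectors that uses only $O(m/\log m)$ symbols from a $\poly(m/\epsilon)$-size alphabet yet still determines all of these features up to the $\epsilon$ slack. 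Proving that such a compressed encoding exists and loses nothing relevant is the delicate part; everything else is bookkeeping.

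Given Steps 1 and 2, the finish is routine. Draw $N$ i.i.d.\ sample instances; for each $\ell\in\mathcal{H}$ and each sample, the online cost of proportional allocation with parameters $\ell$ and the offline optimum are both computable offline, so we may form, e.g., the empirical fraction of samples on which $\ell$ yields a $1\pm\epsilon$ approximation. A Chernoff bound for each fixed $\ell$ combined with a union bound over $\mathcal{H}$ shows that $N = O(\log|\mathcal{H}| + \log\frac1\delta) = O(\frac{m}{\log m}\cdot\log\frac m\epsilon)$ samples suffice for all of these empirical frequencies to be accurate simultaneously; outputting the empirically best vector (minimizer for \minmax, maximizer for \maxmin) then gives the stated guarantee. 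The one point that needs a little care beyond bookkeeping is that the empirical optimizer must be benchmarked against a \emph{fixed} $\ell^\star\in\mathcal{H}$ that is good \emph{in expectation} over the distribution: I would obtain such an $\ell^\star$ by running the parameter-computation subroutine of our main framework on the ``averaged'' instance in which each item is assigned its expected cost/utility vector, and arguing --- via linearity and the perturbation-robustness from Step 1 --- that it performs well on a random instance. Together with the uniform-convergence bound, this certifies that the learned vector is a $1\pm\epsilon$ approximation with the claimed probability.
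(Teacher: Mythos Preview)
Your overall PAC template (finite hypothesis class, existence of a good member, uniform convergence by Chernoff plus union bound) matches the paper, but you have misidentified where the $\frac{m}{\log m}$ saving comes from. In the paper the hypothesis class $\dec(m,\epsilon)$ has size $K^{2m}$ with $K = O\bigl(\frac{m}{\epsilon}\log\frac{m}{\epsilon}\bigr)$, so $\log|\dec| = \Theta(m\log\frac{m}{\epsilon})$; there is no compression to $(\poly(m/\epsilon))^{O(m/\log m)}$ and no ``encoding that uses $O(m/\log m)$ symbols.'' Your proposed Step~2 is therefore attacking a problem the paper never solves and that you would likely be unable to solve either. The $\frac{1}{\log m}$ factor instead comes from the \emph{small items assumption} ($p_{i,j}\le T/\zeta$ with $\zeta=\Theta(\log m/\epsilon^2)$), which you never invoke: because $\Ds=\times_j\Ds_j$, the empirical average $\frac{1}{H}\sum_h\ell_i(P_h,\bw)$ is a sum of $Hn$ independent summands each bounded by $T/(H\zeta)$, and a Bernstein/multiplicative-Chernoff bound gives failure probability $\exp(-\Omega(\epsilon^2 H\zeta))=\exp(-\Omega(H\log m))$ per vector. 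The union bound over $m\cdot K^{2m}$ events then only requires $H\log m \gtrsim m\log\frac{m}{\epsilon}$, i.e.\ $H = O\bigl(\frac{m}{\log m}\log\frac{m}{\epsilon}\bigr)$. Your proposal, which measures a $0/1$ indicator per sample and does not use the small-items structure, would only yield $O(m\log\frac{m}{\epsilon})$.

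A secondary gap is your ``averaged instance'' in the benchmark step. The allocation $x_{i,j}(\bw)=\frac{w_i p_{i,j}^\alpha}{\sum_{i'} w_{i'} p_{i',j}^\alpha}$ is nonlinear in $\bp_j$, so running the framework on the instance with item costs $\Es[\bp_j]$ does not control $\Es_{P}[\ell_i(P,\bw)]$. The paper instead forms the \emph{concatenated} instance $\mathbb{P}=\bigoplus_P \Pr_\Ds[P]\cdot P$, so that the load on $\mathbb{P}$ equals $\Es_{P\sim\Ds}[\ell_i(P,\bw)]$ exactly (each item's allocation depends only on its own column), and uses superadditivity of $\lsan$ to get $\lsan(\mathbb{P})\ge T$. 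The hard technical work is then \Cref{lem:limited}: showing a vector in the bounded-description class $\dec(m,\epsilon)$ is near-optimal for $\mathbb{P}$, which requires preprocessing $P$ to bound aspect ratios and then bounding the ratio vector $\bu$ and the Sinkhorn weights $\boldsymbol{\delta}$.
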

We note that a similar result was previously known for the \minmax objective (Li and Xian~\cite{LiX21}).
We also generalize this result to all well-behaved objectives subject to a technical condition of {\em superadditivity} for maximization or {\em subadditivity} for minimization. All the objectives described earlier in the introduction satisfy these conditions.

\paragraph{Our Techniques}
Our learning-augmented online algorithms for both minimization and maximization objectives follow from a single, unified algorithmic framework that we develop in this paper. This is quite surprising because in the worst-case setting, the online algorithms for the different objectives do not share any similarity (indeed have different competitive ratios), particularly between maximization and minimization problems. First, let us first consider the \minmax and \maxmin objectives. To use common terminology across these problems, let us call the cost/utility of an item $j$ to an agent $i$ the {\em weight} of item $j$ for agent $i$ and denote it $p_{i,j}$. Our common algorithmic framework uses proportional allocation according to the learned parameters of the agents. Let $w_i$ denote the parameter for agent $i$. Normally, proportional allocation would entail that we allocate a fraction $x_{i,j}$ of item $j$ to agent $i$ where $x_{i,j} = \frac{w_i p_{i, j}}{\sum_{i'} w_{i'} p_{{i'}, j}}$. But, this is clearly not adequate, since it would produce the same allocation for both the \maxmin and \minmax objectives. Specifically, if $p_{i, j}$ is {\em large} for a pair $i, j$, then $x_{i, j}$ should be large for the \maxmin objective and small for the \minmax objective respectively. To implement this intuition, we exponentiate the weight $p_{i, j}$ by a fixed value $\alpha$ that depends on the objective (i.e., is different for \maxmin and \minmax) and then allocate using fractions $x_{i,j} = \frac{w_i p_{i, j}^\alpha}{\sum_{i'} w_{i'} p_{{i'}, j}^\alpha}$. We call this an {\em exponentiated proportional} allocation (or \ep-allocation in short), and call $\alpha$ the {\em exponentiation constant}.

Let us fix any value of $\alpha$. It is clear that for both the \minmax and \maxmin objectives, an optimal allocation has {\em uniform} cumulative fractional weights (called {\em load}) across all agents. (Note that otherwise, an infinitesimal fraction of an item can be repeatedly moved from the most loaded to the least loaded agent to eventually improve the competitive ratio.) Following this intuition, we define a {\em canonical allocation} as one that sets learned parameters on the agents in a way that equalizes the loads on all agents. We show that the canonical allocation always exists and is {\em unique}. Indeed, this is true not only for all \ep-allocation algorithms, but for a much broader class of proportional allocation schemes that we called {\em generalized proportional} allocations (or \gp-allocations). In the latter class, we allow any transformation of the weights $p_{i, j}$ before applying proportional allocation. Thus, \ep-allocations represent the subclass of \gp-allocations where the transformation is exponentiation by the fixed value $\alpha$. We also give a simple iterative (Sinkhorn-like) algorithm for computing the optimal learned parameters, and establish its convergence properties, for \gp-allocations. 
\gp-allocations give an even larger palette of proportional allocation schemes to choose from than \ep-allocations, and we hope it will be useful in future work for problem settings that are not covered in this paper (e.g., non-linear utilities).

Finally, we need to set the value of $\alpha$ specifically for the \minmax and \maxmin objectives. Intuitively, it is clear that we need to set $\alpha$ to a large {\em positive} value for the \maxmin objective and a large {\em negative} value for the \minmax objective. Indeed, we show that in the limit of $\alpha\rightarrow\infty$ and $\alpha\rightarrow-\infty$, the canonical allocation defined above recovers optimal allocations for the \maxmin and \minmax objectives respectively. We also show a monotonicity property of the optimal objective (with the value of $\alpha$) that can be used to set $\alpha$ to a finite value (function of $\epsilon$) and obtain a $1-\epsilon$ (resp., $1+\epsilon$) approximation for the \maxmin (resp., \minmax) objective, for any $\epsilon > 0$.

Now that we have described the \ep-allocation scheme for obtaining nearly optimal algorithms for the \minmax and \maxmin objectives, we generalize to all well-behaved objective functions. This is quite simple. The main advantage of the \minmax and \maxmin objectives that is not shared by other objectives is the property that the optimal solution has uniform load across all agents. Now, suppose for a maximization objective, the load of agent $i$ in an optimal solution is $s_i$ (we call this the {\em scaling parameter} for agent $i$). For now, suppose these values $s_i$ are also provided offline as a second set of parameters. Then, we can first scale the weights $p_{i, j}$ using these parameters to obtain a new instance $q_{i, j} = \frac{p_{i,j}}{s_i}$. Clearly, the optimal solution for the original instance has uniform load across all agents for the transformed instance. Indeed, by the monotonicity of the maximization objective, this solution for the transformed instance is also optimal for the \maxmin objective. Using the above analysis for the \maxmin objective, we can now claim that there exist learned parameters $w_i$ for $i\in [m]$ such that setting $x_{i,j} = \frac{w_i q_{i, j}^\alpha}{\sum_{i'} w_{i'} q_{{i'}, j}^\alpha}$ gives an optimal solution to the original instance of the problem. Now, note that 
\[
    x_{i,j} 
    = \frac{w_i q_{i, j}^\alpha}{\sum_{i'} w_{i'} q_{{i'}, j}^\alpha}
    = \frac{(w_i/s_i^\alpha) p_{i, j}^\alpha}{\sum_{i'} (w_{i'}/s_{i'}^\alpha) p_{{i'}, j}^\alpha}
    = \frac{w'_i p_{i, j}^\alpha}{\sum_{i'} w'_{i'} p_{{i'}, j}^\alpha} 
    \text{ for } w'_i = w_i/s_i^\alpha.
\]    
It follows that by using learned parameters $w'_i$ in an \ep-allocation on the original instance, we can obtain an optimal solution for the original maximization objective. (The case for a minimization objective is identical to the above argument, with the \maxmin objective being replaced by the \minmax objective.) Finally, using the homogeneity of the objective function, we can also set $\alpha$ to a finite value (function of $\epsilon$) and obtain a $1-\epsilon$ (resp., $1+\epsilon$) approximation for the maximization (resp., minimization) objective, for any $\epsilon > 0$.



\subsection{Related Work}
Learning-augmented online algorithms were pioneered by the work of Lykouris and Vassilvikskii~\cite{LykourisV21} for the caching problem, and has become a very popular research area in the last few years. The basic idea of this framework is to augment an online algorithm with (machine-learned) predictions about the future, which helps overcome pessimistic worst case lower bounds in competitive analysis. Many online allocation problems have been considered in this framework in scheduling~\cite{PurohitSK18,AzarLT21,AzarLT22,BamasMRS20,ImKQP21,Mitzenmacher20}, online matching~\cite{AntoniadisGKK20,ChenI21,KumarPSSV19}, ad delivery~\cite{MahdianNS12,LavastidaMRX21b}, etc. The reader is referred to the survey by Mitzenmacher and Vassilvitskii~\cite{MitzenmacherV20,MitzenmacherV22} for further examples of online learning-augmented algorithms. The papers specifically related to our work are those of Lattanzi et al.~\cite{LattanziLMV20} and Li and Xian~\cite{LiX21} that we described above, and that of Lavastida et al.~\cite{LavastidaMRX21a} that focuses on the learnability of the parameters for the same problem.
As mentioned earlier, Agrawal {\em et al.}~\cite{agrawal2018proportional} used the proportional allocation framework earlier for the online (weighted) $b$-matching problem, and gave an iterative algorithm for computing the parameters of the allocation. 

We now give a brief summary of online allocation in the worst-case model. For minimization problems, two classic objectives are makespan (i.e., $\ell_\infty$ norm) and $\ell_p$ norm minimization for $p > 1$. The former was studied in several works (e.g.,~\cite{AzarNR95,AspnesAFPW97}), eventually leading to an asymptotically tight bound of $\Theta(\log m)$. This was later generalized to arbitrary $\ell_p$ norms, and a tight bound of $\Theta(p)$ was obtained for this case~\cite{AwerbuchAGKKV95,Caragiannis08}. For maximization objectives, there are $\Omega(m)$ lower bounds for many natural objectives such as \maxmin (see, e.g., \cite{HajiaghayiKPS22}) and Nash welfare~\cite{BanerjeeGGJ22}. Some recent work has focused on overcoming these lower bounds using additional information such as monopolist values for the agents~\cite{BanerjeeGGJ22,BarmanKM22}. While this improves the competitive ratio to sub-linear in $m$, lower bounds continue to rule out near-optimal solutions (or even constant factor approximations) that we seek in this paper. 

\eat{
\alert{General}
We say that an objective function $f:\vps\rightarrow \Rp$ is {\em monotone} if the following holds: for any $\ell, \ell' \in \vps$ such that $\ell_i \ge \ell'_i$ for all $i\in [m]$, we have $f(\ell) \ge f(\ell')$.

We say that an objective function $f:\vps\rightarrow \Rp$ is $s$-{\em homogeneous} a fixed $s \geq 1$, if the following holds: for any $\ell, \ell' \in \vps$ such that $\ell'_i = (1+\epsilon) \cdot \ell'_i$ for all $i\in [m]$ and a fixed $s \geq 1$, then we have $f( (1+\epsilon)\cdot \ell) = (1+\epsilon)^s \cdot f(\ell)$.
\alert{The definition of homogeneous changed}
\begin{theorem}
\label{thm:general}
    Fix any instance of an online allocation problem with divisible items where the goal is to maximize or minimize a monotone homogeneous objective function. Then, there exists an online algorithm and a learned parameter vector in $\vps$ that achieves a competitive ratio of $1-\epsilon$ (for maximization) or $1+\epsilon$ (for minimization).
\end{theorem}
}

\smallskip\noindent{\bf Organization.} 
For most of the paper, we only consider the \minmax and \maxmin objectives. We establish the notation in \Cref{sec:contributions} and give an overview of the results. Then, we prove these results by showing properties of \gp-allocations in \Cref{sec:canonical} and of \ep-allocations in \Cref{sec:ep-monotonicity-convergence}. 
Next, we give noise resilient algorithms in \Cref{sec:noise} and discuss learnability of the parameters in \Cref{sec:learning}. Finally, in \Cref{sec:general}, we extend our results to all well-behaved objective functions via simple reductions to the \maxmin and \minmax objectives.

\section{Preliminaries and Results}\label{sec:contributions}
\subsection{Problem Definition}
We have $n$ (divisible) items that arrive online and have to be (fractionally) allocated to $m$ agents. The weight of item $j\in [n]$ for agent $i\in [m]$ is denoted $p_{i,j}$ and is revealed when item $j$ arrives. We denote the {\em weight matrix}
\[
    P = 
    \begin{bmatrix}
    p_{1,1} & \ldots & p_{1,n}\\
    \vdots & \ddots & \vdots\\
    p_{m,1} & \ldots & p_{m,n}\\
    \end{bmatrix}
    \text{ where all $p_{i,j}> 0$ for all $i\in [m], j\in [n]$}.\footnote{For notational simplicity (e.g., avoid dividing by 0), we will assume that all weights are strictly positive instead of being nonnegative. In both problems, this is without loss of generality. In load balancing, an item with weight $0$ for an agent can be removed by assigning it to the agent. For Santa Claus, we can replace weight $0$ by an arbitrarily small positive weight $\delta > 0$.}
\]

A feasible allocation is given by an {\em assignment matrix}
\[
    X = 
    \begin{bmatrix}
    x_{1,1} & \ldots & x_{1,n}\\
    \vdots & \ddots & \vdots\\
    x_{m,1} & \ldots & x_{m,n}\\
    \end{bmatrix}
    \text{ where } x_{i,j} \in [0, 1] \text{ for all } i\in [m], j\in [n] \text{ and } \sum_{i=1}^m x_{i,j} = 1 \text{ for all } j\in [n].
\]

Note that every item has to be fully allocated among all the agents. 
We use $\cX$ to denote the set of feasible solutions.
The total weight of an agent $i$ corresponding to an allocation $X$ (we call this the {\em load} of $i$) is given by 
\[
    \ell_i(P,X) = \sum_{j\in [n]} x_{i,j}\cdot p_{i,j},
\]
and the vector of loads of all the agents is denoted $\bell(P,X)$.

The load balancing problem is now defined as
\[
    \min_{X\in \mathcal{X}} \Big\{T: \ell_i(P,X) \leq T \text{ for all } i\in [m]\Big\},
\]    
%
while the Santa Claus problem is defined as
\[
    \max_{X\in \mathcal{X}} \Big\{T: \ell_i(P,X) \geq T \text{ for all } i\in [m]\Big\}.
\]


\subsection{Exponentiated and Generalized Proportional Allocations}
Our algorithmic framework is simple: when allocating item $j$, we first exponentiate the weights $p_{i, j}$ to $p_{i, j}^\alpha$ for some fixed $\alpha$ (called the {\em exponentiation constant}) that only depends on the objective being optimized. Next, we perform proportional allocation weighted by the learned parameters $w_i$ for agents $i\in [m]$:
\[
    x_{i,j} = \frac{p_{i,j}^\alpha \cdot w_i}{\displaystyle\sum_{i'\in [m]}p_{i',j}^\alpha\cdot w_{i'}}.
\]
We call this an {\em exponentiated proportional} allocation or \ep-allocation in short.

Our main theorem is the following:
\begin{theorem}\label{thm:epa}
    For the load balancing and Santa Claus problems, there are \ep-allocations that achieve a competitive ratio of $1+\e$ and $1-\e$ respectively, for any $\e > 0$.
\end{theorem}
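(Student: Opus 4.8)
The plan is to fix the exponentiation constant $\alpha$ appropriately — a large positive value for Santa Claus (\maxmin) and a large negative value for load balancing (\minmax) — and then argue in two stages. In the first stage, I would show that for \emph{any} fixed $\alpha$, there exists a learned parameter vector $w \in \vps$ whose induced \ep-allocation equalizes the loads $\ell_i(P,X)$ across all agents $i \in [m]$; call this the \emph{canonical allocation} for the exponent $\alpha$. Existence and uniqueness of this canonical allocation should follow from the general theory of \gp-allocations promised in \Cref{sec:canonical}, since \ep-allocations are a special case (the transformation being $p \mapsto p^\alpha$). The point of this first stage is that a load-equalizing allocation is automatically a candidate optimum: for \minmax, nothing is gained by having a spread of loads, and similarly for \maxmin, so the canonical allocation is exactly the right target object. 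Let $T(\alpha)$ denote the common load value of the canonical allocation for exponent $\alpha$.

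In the second stage I would analyze the limiting behavior of $T(\alpha)$ as $\alpha \to +\infty$ and $\alpha \to -\infty$. The key claim is that $\lim_{\alpha\to\infty} T(\alpha) = \opt_{\maxmin}(P)$ and $\lim_{\alpha\to-\infty} T(\alpha) = \opt_{\minmax}(P)$. The intuition is that exponentiating weights by a large $\alpha$ makes the proportional allocation concentrate item $j$ on the agent(s) for whom $p_{i,j}$ is largest (for $\alpha \to +\infty$) or smallest (for $\alpha \to -\infty$), which is precisely what an optimal \maxmin (resp.\ \minmax) solution wants to do — up to the tie-breaking handled by the weights $w_i$, which is exactly the degree of freedom the canonical allocation uses to then balance the loads. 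I would combine this with a monotonicity statement (again, presumably from \Cref{sec:ep-monotonicity-convergence}): $T(\alpha)$ is monotone in $\alpha$, increasing toward $\opt_{\maxmin}$ and decreasing toward $\opt_{\minmax}$ respectively. Given monotone convergence to the optimum, for any $\e > 0$ one can pick a finite $\alpha = \alpha(\e)$ with $T(\alpha) \ge (1-\e)\,\opt_{\maxmin}(P)$ (resp.\ $T(\alpha) \le (1+\e)\,\opt_{\minmax}(P)$), and the \ep-allocation with the corresponding canonical $w$ is the desired algorithm. Note that the allocation rule itself is online — $x_{i,j}$ depends only on column $j$ of $P$ and on $w$ — so once $w$ (the learned parameter) is fixed, feasibility and the online property are immediate.

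The main obstacle I expect is the limiting argument in the second stage, specifically showing $T(\alpha)$ actually \emph{attains} (in the limit) the true optimum rather than some strictly suboptimal balanced value. The subtlety is that the canonical allocation is constrained to be of the exponentiated-proportional form, so a priori its balanced load could be bounded away from $\opt$; one must show that the family of \ep-allocations is rich enough — as $\alpha$ grows and $w$ ranges over $\vps$ — to approximate an arbitrary optimal (balanced) allocation. I would handle this by taking an optimal balanced allocation $X^*$, and for each large $\alpha$ exhibiting a choice of $w = w(\alpha)$ (or appealing to the uniqueness/continuity of the canonical solution) under which the \ep-allocation converges entrywise to $X^*$ on the support where $X^*$ puts mass, using that $p_{i,j}^\alpha$ dominates $p_{i',j}^\alpha$ by an exponential factor whenever $p_{i,j} > p_{i',j}$. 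Combining this convergence with continuity of the load functionals $\ell_i(P,\cdot)$ would give $T(\alpha) \to \opt$. The rest — extracting a finite $\alpha(\e)$ from monotone convergence and checking the approximation guarantee — is routine.
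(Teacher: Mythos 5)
Your proposal matches the paper's proof essentially verbatim: the paper derives \Cref{thm:epa} as an immediate corollary of \Cref{thm:exponential} (monotonicity of the canonical load $\ell^*(P,\alpha)$ in $\alpha$, plus convergence to $\lsan(P)$ as $\alpha\to\infty$ and to $\lmks(P)$ as $\alpha\to-\infty$), which in turn rests on the existence and uniqueness of the canonical (load-equalizing) allocation established for all \gp-allocations in \Cref{thm:maincanonical}. You also correctly flag the genuine technical obstacle — showing the balanced canonical load actually converges to the true optimum rather than a strictly suboptimal balanced value — and sketch the same resolution the paper carries out (constructing $w(\alpha)$ so the \ep-allocation converges to the support of an optimal balanced $X^*$, which the paper does via an auxiliary graph, a ratio vector, and Sinkhorn's theorem); one small caveat is that the weights $w_i$ do more than tie-break (they effectively replace the comparison of raw $p_{i,j}$ by $p_{i,j}/u_i$ for a suitable ratio vector $u$), but your final paragraph already corrects for this.
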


\smallskip\noindent{\bf The Canonical Allocation.} 
In order to define an \ep-allocation and establish \Cref{thm:epa}, we need to specify two things: the vector of learned parameters $\bw \in \vps$ and the exponentiation constant $\alpha$. First, we focus on the learned parameters. For any fixed $\alpha$ and a weight matrix $P$, we use learned parameters $\bw \in \vps$ that result in {\em equal load} for every agent. We call this the {\em canonical allocation}. The corresponding learned parameters and the load of every agent are respectively called the {\em canonical parameters} (denoted $\bw^*$) and the {\em canonical load} (denoted~$\ell^*$). 

Apriori, it is not clear that a canonical allocation should even exist, and even if it does, that it is unique. Interestingly, we show this existence and uniqueness not just from \ep-allocations but for the much broader class of proportional allocations where {\em any} function $f:\Rp\rightarrow\Rp$ (called the {\em transformation function}) can be used to transform the weights rather than just an exponential function. I.e.,
\[
    x_{i,j} = \frac{f(p_{i,j}) \cdot w_i}{\displaystyle\sum_{i'\in [m]}f(p_{i',j})\cdot w_{i'}}.
\]
We call this a {\em generalized proportional} allocation or \gp-allocation in short.

We show the following theorem for \gp-allocations:
\begin{theorem}\label{thm:maincanonical}
For any weight matrix $P \in \mps$ and any transformation function $f:\Rp\rightarrow\Rp$, the canonical load for a \gp-allocation exists and is unique. Moreover, it is attained by a unique (up to scaling) set of canonical parameters.
\end{theorem}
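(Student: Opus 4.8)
The plan is to recast the problem as finding a fixed point of an iterative map and to use a potential/Lyapunov-type argument together with a strict monotonicity observation to get uniqueness. Fix the weight matrix $P$ and transformation $f$, and write $q_{i,j} = f(p_{i,j}) > 0$. For a parameter vector $\bw \in \vps$ the load of agent $i$ is
\[
    \ell_i(\bw) \;=\; \sum_{j\in[n]} \frac{q_{i,j}\, w_i}{\sum_{i'\in[m]} q_{i',j}\, w_{i'}}\, p_{i,j}.
\]
Since the allocation is scale-invariant in $\bw$, we may normalize, say $\sum_i w_i = 1$, and work on the (open) simplex $\Delta$. A canonical allocation is a $\bw \in \Delta$ with $\ell_i(\bw)$ independent of $i$; note that $\sum_i x_{i,j} = 1$ forces $\sum_i \ell_i(\bw) = \sum_j \sum_i x_{i,j} p_{i,j}$, which is not constant in $\bw$, so I should be careful — the natural invariant to exploit is not $\sum_i \ell_i$. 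Instead I would look at the Sinkhorn-like update $w_i \mapsto w_i \cdot \ell^{\mathrm{target}} / \ell_i(\bw)$ (renormalized), or equivalently rescale each $w_i$ by $1/\ell_i(\bw)$ and renormalize, and argue this map has a unique fixed point on $\Delta$.

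For \textbf{existence}, I would pursue two routes and keep whichever is cleaner. Route (a), topological: the map $F(\bw)_i \propto w_i/\ell_i(\bw)$ is continuous on $\Delta$; the main issue is behavior near the boundary, where some $w_i \to 0$. If $w_i \to 0$ then $x_{i,j}\to 0$ for every $j$, so $\ell_i(\bw) \to 0$, hence $w_i/\ell_i(\bw)$ blows up relative to the others, which should push the iterate back into the interior — so $F$ maps a suitable compact interior sub-simplex into itself and Brouwer applies. Route (b), variational: exhibit a strictly convex (or concave) potential $\Phi(\bw)$ on $\log$-coordinates whose stationary point is exactly the canonical allocation; a natural candidate is built from $\sum_j \log\big(\sum_i q_{i,j} w_i\big)$ together with a linear term in $\log w_i$ with coefficients $\ell_i$ at the target — but since the target load is itself unknown, I would instead parametrize: for each prescribed common value $t>0$, ask whether there is $\bw$ with $\ell_i(\bw)=t$ for all $i$, solve that by convex duality, and then observe that exactly one value of $t$ is consistent with a normalized $\bw$ (using that $\bw \mapsto \ell_i(\bw)$ scales correctly). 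I expect Route (b) to be the more robust path and the one the authors take, because it delivers uniqueness for free.

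For \textbf{uniqueness}, the key structural fact I would isolate is a \emph{strict} monotonicity / ordering property: if $\bw$ and $\bw'$ are two parameter vectors (normalized the same way) and $i$ is an index maximizing $w'_i/w_i$, then for \emph{every} item $j$ the ratio $x'_{i,j}/x_{i,j} = \big(w'_i/w_i\big)\big/\big(\sum_{i'} q_{i',j} w'_{i'} \big/ \sum_{i'} q_{i',j} w_{i'}\big)$ is $\ge 1$, with strict inequality for at least one $j$ unless $\bw' = \bw$; hence $\ell_i(\bw') \ge \ell_i(\bw)$, and symmetrically the minimizing index gets a load that decreased. If both $\bw,\bw'$ were canonical with common loads $\ell^*$ and $(\ell^*)'$, this forces $(\ell^*)' \ge \ell^*$ from the max index and $(\ell^*)' \le \ell^*$ from the min index, so $\ell^* = (\ell^*)'$; then chasing the equality case of the strict-monotonicity step forces $w'_i/w_i$ to be constant over $i$, i.e. $\bw' = \bw$ up to scaling. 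This simultaneously proves uniqueness of the canonical load and of the parameters.

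The \textbf{main obstacle} I anticipate is the boundary/compactness step in the existence argument: showing the fixed-point iteration (or the variational problem) does not degenerate with some $w_i\to 0$, which requires a quantitative lower bound on $\ell_i(\bw)$ when $w_i$ is small relative to the other coordinates — this uses that all $q_{i,j}>0$ (the footnote's positivity assumption is exactly what makes this work). A secondary subtlety is pinning down the unknown common load $t$ in the duality approach; I would handle it by a one-dimensional intermediate-value argument using that $\ell_i(\lambda \bw) $ and the relation between scaling $\bw$ and scaling all loads. Once existence is in hand, the uniqueness argument above is purely combinatorial and should go through cleanly.
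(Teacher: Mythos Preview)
Your uniqueness argument is essentially the paper's: the paper's Lemmas~\ref{lem:unique} and~\ref{lem:uniq} are exactly the ``look at the index maximizing $w'_i/w_i$'' ordering trick you describe, and they conclude both that the canonical load is unique and that the parameters are unique up to scaling.

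For existence, however, the paper does \emph{not} take either of your routes. It proves existence constructively by analyzing the Sinkhorn-like iteration $w_i \mapsto w_i/\ell_i(\bw)$ (Algorithm~\ref{alg:iterative}) and showing it converges: first a weak per-iteration monotonicity of $\elx$ and $\eln$ (Lemma~\ref{lem:notincreasing}, proved via Milne's inequality after a reduction to two agents), then a quantitative strict decrease of $\elx/\eln$ over finitely many iterations (Lemma~\ref{lem:strictsingle}). So your guess that the authors go variational is wrong; their argument is longer but yields an explicit offline algorithm and a convergence rate, which they highlight as independently interesting.

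Your Route~(a) can be made to work, but the boundary reasoning you give is off. As $w_i\to 0$ (with the other coordinates bounded away from $0$), $\ell_i(\bw) = w_i\cdot h_i(\bw)$ where $h_i(\bw)=\sum_j q_{i,j}p_{i,j}/\sum_{i'}q_{i',j}w_{i'}$, so $w_i/\ell_i(\bw) = 1/h_i(\bw)$ does \emph{not} blow up --- it has a finite positive limit. The correct observation is that $h_i$ is continuous and strictly positive on the \emph{closed} simplex (since $\sum_{i'}q_{i',j}w_{i'}\ge \min_{i'}q_{i',j}>0$), so the normalized update $F(\bw)_i \propto 1/h_i(\bw)$ extends continuously to the closed simplex and maps it into the open interior; Brouwer then gives an interior fixed point. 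This is a genuinely shorter and non-constructive alternative to the paper's convergence proof. Your Route~(b) is too vague as written: the ``potential whose stationary point is the canonical allocation'' is not identified, and the parametrize-by-$t$ idea runs into the difficulty that the constraints $\ell_i(\bw)=t$ are scale-invariant in $\bw$, so the ``exactly one $t$ is consistent with normalization'' step needs more care.
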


We prove \Cref{thm:maincanonical} algorithmically by giving a simple iterative (offline) algorithm that converges to the set of canonical parameters (see \Cref{alg:iterative}). 
We will show later that the canonical allocations produced by appropriately setting the value of the exponentiation constant $\alpha$ are respectively optimal (fractional) solutions for the Santa Claus and the load balancing problems. Therefore, an interesting consequence of the iterative convergence of this algorithm to the canonical allocation is that it gives a simple alternative {\em offline} algorithm for computing an optimal fractional solution for these two problems. To the best of our knowledge, this was not explicitly known before our work. 

An interesting direction for future research would be to explore other natural classes of transformation functions, other than the exponential functions considered in this paper. Since \Cref{thm:maincanonical} holds for any transformation function, they also admit a canonical allocation, and it is conceivable that such canonical allocations would optimize objective functions other than the \minmax and \maxmin functions considered here. For example, one natural open problem is following: are there a transformation functions whose canonical allocations correspond to maximizing Nash Social Welfare or minimizing $p$-norms of loads?

\smallskip\noindent{\bf Monotonicity and Convergence of \ep-allocations.}
Now that we have defined the learned parameters in \Cref{thm:epa} as the corresponding canonical parameters, we are left to define the values of the exponentiation constant $\alpha$ for the \maxmin and \minmax problems respectively. We show two key properties of canonical loads of \ep-allocations. First, we show that the canonical load is monotone nondecreasing with the value of $\alpha$. This immediately suggests that we should choose the largest possible value of $\alpha$ for the \maxmin problem since it is a maximization problem, and the smallest possible value of $\alpha$ for the \minmax problem since it is a minimization problem. Indeed, the second property that we show is that in the limit of $\alpha \rightarrow\infty$, the canonical load converges to the optimal objective for the Santa Claus problem (we denote this optimal value $\lsan$) and in the limit of $\alpha \rightarrow -\infty$, the canonical load converges to the optimal objective for the load balancing problem (we denote this optimal value $\lmks$). 

For a fixed $\alpha$, let $X(P,\alpha,\bw)$ denote the assignment matrix and $\bell(P,\alpha,\bw)$ the load vector for a learned parameter vector $\bw$. Let $\bell^*(P,\alpha)$ denote the corresponding canonical load. We show the following properties of canonical \ep-allocations:
\begin{theorem}
\label{thm:exponential} For any weight matrix $P \in \mps$, the following properties hold for canonical \ep-allocations:
\begin{itemize}
    \item The monotonicity property:
For $\alpha_1, \alpha_2 \in \R$ such that $\alpha_1 \geq \alpha_2$, we have $\ell^*(P,\alpha_1) \geq \ell^*(P,\alpha_2)$.
    \item The convergence property:
$\displaystyle\lim_{\alpha \rightarrow \infty} \ell^*(P,\alpha) = \lsan(P)$ and
$\displaystyle\lim_{\alpha \rightarrow -\infty} \ell^*(P,\alpha) = \lmks(P)$.
\end{itemize}
\end{theorem}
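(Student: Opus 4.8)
The plan is to treat the two parts separately, using throughout the elementary fact that for any fixed $\alpha$ the canonical \ep-allocation is feasible for \emph{both} problems — all its loads equal $\ell^*(P,\alpha)$ — so it certifies $\lmks(P)\le\ell^*(P,\alpha)\le\lsan(P)$. In particular $\ell^*(P,\cdot)$ is bounded, and $\lmks(P)>0$ since every feasible allocation has $\max_i\ell_i(P,X)\ge\tfrac1m\sum_j\min_i p_{i,j}>0$. Consequently it suffices, for the convergence statement, to show that every subsequential limit of $\ell^*(P,\alpha)$ is $\ge\lsan(P)$ as $\alpha\to+\infty$ and $\le\lmks(P)$ as $\alpha\to-\infty$; together with the sandwich this pins down both limits.

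\emph{Convergence} (case $\alpha\to+\infty$; $\alpha\to-\infty$ is symmetric). Let $x_{i,j}(\alpha)$ be the entries of the canonical assignment matrix $X(P,\alpha,\bw^*(\alpha))$, normalize so $\max_i w^*_i(\alpha)=1$, set $\theta_i(\alpha)=\tfrac1\alpha\log w^*_i(\alpha)\le 0$, and write $p_{\max},p_{\min}$ for the largest and smallest entries of $P$. The $\theta_i$ are bounded below: comparing $x_{i,j}(\alpha)$ with the agent realizing $w^*=1$ yields $x_{i,j}(\alpha)\le w^*_i(\alpha)(p_{\max}/p_{\min})^{\alpha}$, hence $\lmks(P)\le\ell^*(P,\alpha)\le n\,p_{\max}\,w^*_i(\alpha)(p_{\max}/p_{\min})^{\alpha}$, forcing $\theta_i(\alpha)\ge\log(p_{\min}/p_{\max})-o(1)$. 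So along any $\alpha_t\to\infty$, after passing to a subsequence, $\theta_i(\alpha_t)\to\hat\theta_i$, $x_{i,j}(\alpha_t)\to\hat x_{i,j}$ and $\ell^*(P,\alpha_t)\to L$. Since $x_{i,j}(\alpha)=e^{\alpha(\theta_i(\alpha)+\log p_{i,j})}\big/\sum_k e^{\alpha(\theta_k(\alpha)+\log p_{k,j})}$, the limit $\hat X$ is a feasible allocation supported, for each $j$, on $\operatorname{argmax}_i(\nu_i p_{i,j})$ with $\nu_i:=e^{\hat\theta_i}>0$; and passing to the limit in $\ell_i\!\bigl(P,X(P,\alpha_t,\bw^*(\alpha_t))\bigr)=\ell^*(P,\alpha_t)$ gives $\ell_i(P,\hat X)=L$ for every $i$. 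Now a dual‑pricing argument with prices $\nu_i$: for any feasible $X$,
\[
 \sum_i \nu_i\,\ell_i(P,X)=\sum_j\sum_i \nu_i x_{i,j}p_{i,j}\ \le\ \sum_j\max_i \nu_i p_{i,j}\ =\ \sum_i \nu_i\,\ell_i(P,\hat X)\ =\ L\sum_i\nu_i,
\]
so $\min_i\ell_i(P,X)\le L$; taking $X$ optimal for Santa Claus gives $\lsan(P)\le L$, hence $L=\lsan(P)$. For $\alpha\to-\infty$ the support becomes $\operatorname{argmin}_i(\nu_i p_{i,j})$, the inequalities above reverse, and one gets $\max_i\ell_i(P,X)\ge L$ for every feasible $X$, hence $L=\lmks(P)$.

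\emph{Monotonicity.} Write $g(\alpha)=\ell^*(P,\alpha)$, $\lambda=\log\bw$, and let $x_{i,j}$ denote the canonical allocation at $\alpha$. The canonical parameters solve $\ell_i(P,\alpha,\bw)=g$ for all $i$, modulo the scaling $\lambda\mapsto\lambda+t\mathbf 1$, and the Jacobian of $\lambda\mapsto(\ell_i)_i$ there is $J=g\,I-M$ with $M_{ik}=\sum_j x_{i,j}p_{i,j}x_{k,j}>0$. Since $M$ is positive with every row sum equal to $g$, Perron--Frobenius gives that $g$ is its Perron value, $\ker J=\operatorname{span}(\mathbf 1)$, and $\ker J^{\mathsf T}=\operatorname{span}(\mu)$ for a strictly positive left Perron eigenvector $\mu$ of $M$ (so $\mu^{\mathsf T}\mathbf 1\ne 0$). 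Hence the implicit function theorem applies on the quotient, $g$ and $\bw^*$ are $C^1$ in $\alpha$, and differentiating $\ell_i(P,\alpha,\bw^*(\alpha))=g(\alpha)$ gives $c+J\dot\lambda^{*}=g'(\alpha)\mathbf 1$, where $c_i=\partial_\alpha\ell_i(P,\alpha,\bw)\big|_{\bw=\bw^*}=\sum_j x_{i,j}p_{i,j}(\log p_{i,j}-v_j)$ and $v_j=\sum_k x_{k,j}\log p_{k,j}$. Left‑multiplying by $\mu^{\mathsf T}$ (which annihilates $J\dot\lambda^{*}$) and using $\mu>0$,
\[
 g'(\alpha)=\frac{\mu^{\mathsf T}c}{\mu^{\mathsf T}\mathbf 1},\qquad \mu^{\mathsf T}c=\sum_{j}\operatorname{Cov}_{x_{\cdot,j}}\!\bigl(\mu_\cdot p_{\cdot,j},\,\log p_{\cdot,j}\bigr),
\]
the second identity because, for each $j$, $x_{\cdot,j}$ is an exponential family in $\alpha$ with sufficient statistic $\log p_{\cdot,j}$. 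So monotonicity reduces to showing this sum of covariances is nonnegative.

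This last inequality is the main obstacle. My plan for it is to show the Perron weights are aligned with $P$: for each $j$ the vector $(\mu_i p_{i,j})_i$ is comonotone with $(p_{i,j})_i$ (i.e.\ $p_{i,j}\ge p_{k,j}\Rightarrow\mu_i p_{i,j}\ge\mu_k p_{k,j}$), so every $\operatorname{Cov}_{x_{\cdot,j}}(\mu_\cdot p_{\cdot,j},\log p_{\cdot,j})$ is a covariance of two comonotone functions of $i$ and hence $\ge 0$. Such alignment should fall out of the fixed‑point relation $g\,\mu_k=\sum_j x_{k,j}\bigl(\sum_i\mu_i x_{i,j}p_{i,j}\bigr)$ defining $\mu$ (equivalently, complementary slackness for $g(\alpha)=\max_{\bw}\min_i\ell_i(P,\alpha,\bw)$, whose optimal multipliers are $\propto\mu$). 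The delicate point is that with three or more agents the orderings of $p_{\cdot,j}$ differ across items, so the inequality must be argued from the global fixed‑point structure rather than item by item; if pure comonotonicity fails, expanding $\mu^{\mathsf T}c$ through the fixed‑point relation and balancing the resulting sign‑indefinite per‑item pieces against each other is the fallback.
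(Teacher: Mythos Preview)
Your convergence argument is correct and is a genuinely different---and in several ways cleaner---route than the paper's. The paper proves convergence by first structurally characterising optimal solutions (building an auxiliary directed graph on the agents, showing it has no negative cycles, extracting a ``ratio vector'' $u_i$, reducing to a restricted related instance, and invoking Sinkhorn's theorem to produce explicit weights $\bw$), and then showing that with $\hat w_{\alpha,i}=w_i/u_i^\alpha$ the \ep-allocation tends to that optimal solution. Your approach bypasses all of this: you extract a subsequential limit by compactness on $(\theta_i,x_{i,j},\ell^*)$, observe that the limiting allocation is supported on $\operatorname{argmax}_i(\nu_ip_{i,j})$, and certify optimality by the one-line LP-duality inequality $\sum_i\nu_i\ell_i(P,X)\le\sum_j\max_i\nu_ip_{i,j}=L\sum_i\nu_i$. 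This is short and conceptually transparent; the paper's route, on the other hand, yields more structural information (the ratio vectors $u_i$) that it reuses later for the learnability results.

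Your monotonicity argument, however, has a real gap. You correctly derive $g'(\alpha)=\mu^{\mathsf T}c/\mu^{\mathsf T}\mathbf 1$ with $\mu^{\mathsf T}c=\sum_j\operatorname{Cov}_{x_{\cdot,j}}(\mu_\cdot p_{\cdot,j},\log p_{\cdot,j})$, but the plan to show each covariance is nonnegative via comonotonicity of $(\mu_ip_{i,j})_i$ and $(p_{i,j})_i$ will not work: $\mu$ is a single global vector, while the orderings of $(p_{i,j})_i$ vary across $j$, so $\mu$ cannot in general be aligned with all of them simultaneously. Your stated fallback (``expand through the fixed-point relation and balance sign-indefinite pieces'') is too vague to constitute a proof.

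The paper's argument avoids the Perron vector altogether. It proves the \emph{discrete} comparison directly: for any $\alpha>\alpha'$ and \emph{any} weight vectors $\bw_\alpha,\bw_{\alpha'}$, there is some agent $k$ with $\ell_k(P,\alpha,\bw_\alpha)\ge\ell_k(P,\alpha',\bw_{\alpha'})$ (Lemma~\ref{lem:monotoneaid}); applying this to canonical weights immediately gives $\ell^*(P,\alpha)\ge\ell^*(P,\alpha')$. The key is the specific test vector $c_i=(w_{\alpha,i}/w_{\alpha',i})^{1/(\alpha-\alpha')}$: with this choice one shows $\sum_i c_i\bigl(\ell_i(P,\alpha,\bw_\alpha)-\ell_i(P,\alpha',\bw_{\alpha'})\bigr)\ge 0$ \emph{item by item}, and after the substitutions $a_i=w_{\alpha',i}p_i^{\alpha'}$, $b_i=p_ic_i$ the per-item inequality reduces exactly to Callebaut's inequality $\bigl(\sum z_i^2\bigr)\bigl(\sum y_i^2\bigr)\ge\bigl(\sum z_i^{1+\theta}y_i^{1-\theta}\bigr)\bigl(\sum z_i^{1-\theta}y_i^{1+\theta}\bigr)$. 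The point is that this $c$ is engineered to make the per-item algebra close up; it is not the left Perron vector of your Jacobian, which is why your differential route gets stuck.
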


Clearly, \Cref{thm:exponential} implies \Cref{thm:epa} as a corollary when $\alpha$ is set sufficiently large for the Santa Claus problem and sufficiently small for the load balancing problem.

In the rest of the paper, we will prove \Cref{thm:maincanonical} and \Cref{thm:exponential}. 

\eat{

Given a processing matrix $P\in \mps$, the scheme uses a transformation matrix $G \in \mps$ where $g_{i,j}$ corresponds to $p_{i,j}$, and a precomputed weight vector $\bw \in \Rp^{m}$. In this weight-based scheme, the assignment of job $j$ on machine $i$ is proportional to $g_{i,j} \cdot w_i$, i.e.,
$$x_{i,j}(G,\mathbf{w}) = \frac{g_{i,j} \cdot w_i}{\displaystyle\sum_{i'\in [m]}g_{i',j}\cdot w_{i'}}.$$
Accordingly, the loads of the machines using the weight-based scheme denoted as $\bell(P,G,\bw)$, where,
$$ \ell_i(P,G,\bw) = \ell_i(P,X(G,\bw)) = \sum_{j\in[n]} x_{i,j}(G,\bw)\cdot p_{i,j}.$$

\subsection{Optimal assignments}

As mentioned, our main result is showing that for any processing matrix $P \in \mps$, 
one can approximate the optimal makespan/Santa objective by using this scheme.
Moreover, the transformation matrix can be computed online, i.e. the transformation values used for the job $j$ assignment, can be computed using the processing times job $j$.

\begin{theorem}
For any $\epsilon > 0$, there exists an assignment by the general weight based scheme, using a precomputed vector $\bw \in \vps$, and transformation matrix that can be computed online, which results in an $\epsilon$-approximate to the Makespan/Santa objectives.
\end{theorem}

\subsection{The canonical load}
    Given $P,G \in \mps$, we will mainly focus on a weight vector $\bw^* \in \vps$ which results in an equal load for all of the machines. Formally, $\bw^* = \bw^*(P,G)$ is the \emph{canonical weight} vector of $P,G$ if for any $i,k\in [m]$ we have $\ell_i(P,G,\bw^*) = \ell_{k}(P,G,\bw^*)$. Accordingly, let $\ell^* = \ell^*(P,G)$ be the \emph{canonical load}, the resulting load on all machines using $\bw^*(G,P)$. We will show that the \emph{canonical load} always exists and is uniquely defined for any two such matrices. Moreover, we will show that a simple Sinkhorn-like procedure converges to this value.
We propose the \emph{iterative weight update method},
\begin{figure}[H]
\begin{itemize}
    \item Initialize: $\bw^{(0)} \leftarrow \mathbf{1}^m$ 
\end{itemize}
\noindent Iteration $r$:
    \begin{itemize}
    \item Compute $\bell^{(r)}$:
    \subitem $\ell^{(r)}_i \leftarrow \ell_i(P,G,\bw^{(r)})$, for all $i\in[n]$.
    \item Set $\bw^{(r+1)}$:
    \subitem $w^{(r+1)}_i \leftarrow w^{(r)}_i \cdot \frac{\ell^{(r)}_1}{\ell^{(r)}_i}$, for all $i\in[n]$.
\end{itemize}
\label{fig:my_label}
\end{figure}

We will prove:
\begin{theorem}\label{thm:maincanonical}
The canonical load property:
For any processing matrix $P \in \mps$, transformation matrix $G\in \mps$, the canonical load $\bell^* = \bell^*(P,G)$ is uniquely-defined, and the iterative weight update method converges to it.
\end{theorem}

\subsection{The exponential transformation matrices}
Next, we focus on the exponential transformation matrices, i.e., matrices $G$ where $g_{i,j} = p_{i,j}^\alpha$, for some fixed $\alpha \in \R$. Clearly, in online setting, given $\bp_j$ the processing time of a newly arrived job $j$, one can compute $\bg_j$ the corresponding transformation values. 
For ease of notation, given a fixed $\alpha$ we denote
$X(P,\alpha,\bw)$ as the corresponding assignment matrix,
and $\bell(P,\alpha,\bw)$ as the corresponding load vector, and  $\bell^*(P,\alpha)$ as the corresponding canonical load.
By Theorem~\ref{thm:maincanonical}, $\bell^*(P,\alpha)$ is uniquely defined for any $\alpha$. We will show two additional important properties, monotonicity and convergence to the optimal solutions.

\begin{theorem}
\label{thm:exponential} For any processing matrix $P \in \mps$:
\begin{itemize}
    \item The monotonicity property:
For $\alpha, \gamma \in \R$, such that $\alpha \geq \gamma$ we have $\ell^*(P,\alpha) \geq \ell^*(P,\gamma)$.
    \item The convergence property:
$\displaystyle\lim_{\alpha \rightarrow \infty} \ell^*(P,\alpha) = \lsan(P)$ and
$\displaystyle\lim_{\alpha \rightarrow -\infty} \ell^*(P,\alpha) = \lmks(P)$.
\end{itemize}
\end{theorem}

}



\section{Canonical Properties of Generalized Proportional Allocations}\label{sec:canonical}
In this section, we prove Theorem~\ref{thm:maincanonical}. For notational convenience, we define a transformation matrix $G\in \mps$ where $G(i, j) = f(p_{i,j})$ for the transformation function $f$. Using this notation, we denote by $x_{i, j}(G, \bw)$ the fractional allocation of item $j$ to agent $i$, and by $\ell_i(P, G, \bw)$ the load of agent $i$ (we use $\bell(P, G, \bw)$ to denote the vector of agent loads) under the \gp-allocation corresponding to the transformation matrix $G$ and learned parameters $\bw$. 

We say two sets of learned parameters $\bw, \bw'$ are {\em equivalent} (denoted $\bw\equiv\bw'$) if there exists some constant $c > 0$ such that $w'_i = c\cdot w_i$ for every agent $i\in [m]$. 
The following is a simple observation from the \gp-allocation scheme that two equivalent sets of learned parameters produce the same allocation:
\begin{observation}
\label{obv:scale}
For any $G \in \mps$, if $\bw \equiv \bw' \in \Rp^m$,
then $x_{i,j}(G,\bw) = x_{i,j}(G,\bw') $ for all $i, j$.
\end{observation}

We also note that \gp-allocations are monotone in the sense that if one agent's parameter decreases while the rest increase, then the allocation on this agent decreases as well.
\begin{observation}
\label{obv:monotone}
Consider any $G \in \mps$ and any nonzero vector $\bm{\epsilon} \in \Rnn^m$ such that $-w_k < \epsilon_k \le 0$ for some $k\in [m]$ and $\epsilon_i\geq 0$ for all $i\ne k$. Then, $x_{k,j}(G,\bw') < x_{k,j}(G,\bw)$ for all $j\in [n]$, where $\bw' = \bw+\bm{\epsilon}$ and $\bw' \not= \bw$.
\end{observation}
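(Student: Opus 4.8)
The plan is to compute $x_{k,j}(G,\bw')$ directly from the definition of a \gp-allocation and show it is strictly smaller than $x_{k,j}(G,\bw)$. Write $g_{i,j} = G(i,j) > 0$. By definition,
\[
    x_{k,j}(G,\bw) = \frac{g_{k,j} w_k}{\sum_{i\in[m]} g_{i,j} w_i},
    \qquad
    x_{k,j}(G,\bw') = \frac{g_{k,j} w'_k}{\sum_{i\in[m]} g_{i,j} w'_i}
    = \frac{g_{k,j} (w_k+\epsilon_k)}{\sum_{i\in[m]} g_{i,j} (w_i+\epsilon_i)}.
\]
Note $w_k + \epsilon_k > 0$ by the hypothesis $-w_k < \epsilon_k$, so both fractions are well-defined positive quantities; if $w_k + \epsilon_k = 0$ were allowed the numerator on the right would vanish and the claim would still hold (trivially $0 < x_{k,j}(G,\bw)$), but the stated hypothesis avoids this degenerate case.

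The key step is a simple cross-multiplication argument. It suffices to show
\[
    (w_k + \epsilon_k)\cdot \Big(\sum_{i\in[m]} g_{i,j} w_i\Big)
    \;<\;
    w_k \cdot \Big(\sum_{i\in[m]} g_{i,j} (w_i+\epsilon_i)\Big),
\]
since both denominators are strictly positive. Expanding both sides, the common term $w_k \sum_i g_{i,j} w_i$ cancels, and the inequality reduces to
\[
    \epsilon_k \sum_{i\in[m]} g_{i,j} w_i \;<\; w_k \sum_{i\in[m]} g_{i,j}\epsilon_i,
\]
i.e. $\sum_{i\in[m]} g_{i,j}(\epsilon_k w_i - \epsilon_i w_k) < 0$. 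Split off the $i=k$ term, which is exactly zero, leaving $\sum_{i\ne k} g_{i,j}(\epsilon_k w_i - \epsilon_i w_k) < 0$. Here every summand is $\le 0$: indeed $\epsilon_k \le 0$, $w_i > 0$, $\epsilon_i \ge 0$, $w_k > 0$, $g_{i,j} > 0$, so $g_{i,j}(\epsilon_k w_i - \epsilon_i w_k) \le 0$ for each $i\ne k$. To get a \emph{strict} inequality we use the hypothesis that $\bm\epsilon$ is nonzero and $\bw' \ne \bw$: some coordinate $\epsilon_\ell \ne 0$. If $\epsilon_k < 0$, the $i$-term for any $i \ne k$ (which exists since $m \ge 2$ whenever $\bw' \ne \bw$ and $\bm\epsilon$ has support outside... ) is strictly negative because $\epsilon_k w_i < 0$; if instead $\epsilon_k = 0$, then the nonzero coordinate $\ell$ must satisfy $\epsilon_\ell > 0$ with $\ell \ne k$, and that term equals $-g_{\ell,j}\epsilon_\ell w_k < 0$. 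Either way the sum is strictly negative, establishing the claim.

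The only mild subtlety — and the place to be careful — is the strictness and the edge cases: handling $\epsilon_k = 0$ (where strictness must come from a positive coordinate elsewhere) versus $\epsilon_k < 0$ (where strictness is automatic provided $m \ge 2$), and confirming $w_k + \epsilon_k > 0$ so that no denominator vanishes. None of this is deep; it is just a matter of organizing the sign bookkeeping so that the "nonzero $\bm\epsilon$, $\bw'\ne\bw$" hypothesis is invoked exactly once to upgrade $\le$ to $<$.
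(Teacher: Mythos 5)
Your proof is correct, and since the paper states this as an \emph{Observation} with no proof at all (it is treated as immediate from the definition of a \gp-allocation), your direct cross-multiplication and sign bookkeeping is exactly the verification the authors leave to the reader. The only blemish is the garbled parenthetical in the $\epsilon_k<0$ case (``which exists since $m\ge 2$ whenever $\bw'\ne\bw$ and $\bm\epsilon$ has support outside...''); the clean statement is simply that $m\ge 2$ is implicit in the setup (for $m=1$ the observation is vacuously false since $x_{1,j}\equiv 1$), so some $i\ne k$ always exists and its term is strictly negative whenever $\epsilon_k<0$.
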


Our first nontrivial property is that the load vector uniquely determines the learned parameters up to equivalence of the parameters.

\begin{lemma}
\label{lem:unique}
For any $P, G \in \mps$, $\ell_i(P, G,\bw) = \ell_i(P, G,\bw')$ for all $i\in [m]$ if and only if $\bw \equiv \bw'$.
\end{lemma}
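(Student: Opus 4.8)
The plan is to prove Lemma~\ref{lem:unique} in two directions; the ``if'' direction is immediate from \Cref{obv:scale}, since $\bw \equiv \bw'$ gives $x_{i,j}(G,\bw) = x_{i,j}(G,\bw')$ for all $i,j$, and hence the loads $\ell_i(P,G,\bw) = \sum_j x_{i,j}(G,\bw) p_{i,j}$ coincide for every agent. So the real content is the ``only if'' direction: if the two parameter vectors induce the same load on every agent, then they must be equivalent.

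For the ``only if'' direction, I would argue by contradiction using the monotonicity property \Cref{obv:monotone}. Suppose $\bell(P,G,\bw) = \bell(P,G,\bw')$ but $\bw \not\equiv \bw'$. By \Cref{obv:scale} we may rescale $\bw'$ (this changes neither the allocation nor the loads) so that, without loss of generality, $w'_i \le w_i$ for all $i$ with equality for at least one index, and strict inequality $w'_k < w_k$ for at least one index $k$ — concretely, pick the index $i_0$ minimizing the ratio $w'_i / w_i$, scale $\bw'$ by $w_{i_0}/w'_{i_0}$ so that $w'_{i_0} = w_{i_0}$ and $w'_i \le w_i$ everywhere; since $\bw \not\equiv \bw'$, some coordinate $k$ has $w'_k < w_k$. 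Now apply \Cref{obv:monotone} with the transition from $\bw$ to $\bw'$: the perturbation $\bm{\epsilon} = \bw' - \bw$ is nonpositive in coordinate $k$ (with $\epsilon_k > -w_k$) and $\le 0$ in all coordinates, which fits the hypothesis of \Cref{obv:monotone} (with the roles arranged so that the coordinate being decreased is $k$ and all others are non-increased — note the observation is stated for $\epsilon_i \ge 0$ for $i \ne k$, but the monotone-decrease conclusion for coordinate $k$ only becomes stronger when the other coordinates also decrease, so I would either restate this slightly or apply the observation in its contrapositive form going from $\bw'$ up to $\bw$). The conclusion is $x_{k,j}(G,\bw') < x_{k,j}(G,\bw)$ for every item $j$, hence $\ell_k(P,G,\bw') = \sum_j x_{k,j}(G,\bw') p_{k,j} < \sum_j x_{k,j}(G,\bw) p_{k,j} = \ell_k(P,G,\bw)$, contradicting equality of loads at agent $k$.

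The main obstacle — and the step to be careful about — is matching the exact hypotheses of \Cref{obv:monotone}, which as stated increases the other coordinates ($\epsilon_i \ge 0$ for $i \ne k$) while decreasing coordinate $k$. After the normalization above, going from $\bw$ to $\bw'$ I have all coordinates non-increasing, not the mixed pattern of the observation. The clean fix is to apply \Cref{obv:monotone} in the reverse direction: set $\bw = \bw' + \bm{\epsilon}$ where now $\epsilon_k = w_k - w'_k > 0$ and $\epsilon_i = w_i - w'_i \ge 0$ — but that has coordinate $k$ increasing, which is again not the statement. The genuinely correct route is to observe that \Cref{obv:monotone} with the normalization ``some $\epsilon_i \ge 0$, exactly one entry $\epsilon_k \in (-w_k, 0)$'' does apply if I instead normalize $\bw'$ by choosing the index $i_0$ that \emph{maximizes} $w'_i/w_i$ and scaling so $w'_{i_0} = w_{i_0}$, $w'_i \ge w_i$ for all $i$, with strict inequality somewhere; then swap the names of $\bw$ and $\bw'$ — now going from the larger vector (call it $\bw'$) to $\bw$ we decrease coordinate $k$ (where $w_k < w'_k$) and are free to not touch or decrease the rest. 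To fit \Cref{obv:monotone} precisely I would take the perturbation $\bm\epsilon$ with $\epsilon_k = w_k - w'_k \in (-w'_k, 0)$ and $\epsilon_i = 0$ for $i \ne k$, i.e. change only coordinate $k$; this is a valid instance of the observation, yields $x_{k,j}$ strictly decreasing, and an analogous single-coordinate step argument (or a telescoping sequence of single-coordinate decreases from $\bw'$ down to $\bw$, each one strictly decreasing $x_{k,j}$ for the coordinate being touched and never increasing it for $k$ once $k$ has been decreased) delivers the contradiction $\ell_k(P,G,\bw) < \ell_k(P,G,\bw')$. I would write the argument via this normalize-then-single-coordinate-decrease route, as it invokes \Cref{obv:monotone} exactly as stated.
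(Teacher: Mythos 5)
Your overall outline — fix equality along the equivalence class by scaling, then invoke \Cref{obv:monotone} to force a strict inequality in some coordinate's load — is exactly the paper's plan, so you have the right idea. But the execution has two problems, the second of which is fatal.

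First, a sign error in both normalization attempts. If you scale $\bw'$ so that its $i_0$-th coordinate matches $w_{i_0}$, the scaled vector is coordinatewise $\geq \bw$ when $i_0$ \emph{minimizes} $w'_i/w_i$, and coordinatewise $\leq \bw$ when $i_0$ \emph{maximizes} $w'_i/w_i$. You state the opposite inequality in each case. The paper sets $k = \arg\min_i w_i/w'_i$ (equivalently, the maximizer of $w'_i/w_i$) and scales $\bw'$ to $\hat{\bw}=c\bw'$ with $c = w_k/w'_k$, giving $\hat{w}_k = w_k$ and $\hat{w}_{i'} \le w_{i'}$ for all $i'$, strict somewhere.

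Second, and this is the real gap: your final ``clean fix'' applies \Cref{obv:monotone} at the coordinate $k$ with a \emph{strict} gap and perturbs only that coordinate, then tries to handle the remaining coordinates by a telescoping chain of single-coordinate decreases, claiming each step ``never increases $x_{k,j}$ once $k$ has been decreased.'' That claim is false. In proportional allocation, decreasing $w_a$ for $a \ne k$ while holding $w_k$ fixed \emph{increases} $x_{k,j}$ for every item $j$ (the denominator $\sum_{i'} f(p_{i',j})w_{i'}$ shrinks while the numerator $f(p_{k,j})w_k$ is unchanged). So the telescoping sequence pushes $x_{k,j}$ down in one step and up in all the others, and the net sign is ambiguous; the argument does not deliver the claimed strict inequality $\ell_k(P,G,\bw) < \ell_k(P,G,\bw')$.

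The cure is to apply \Cref{obv:monotone} at the \emph{equal} coordinate, not the strictly different one, and to perturb all the other coordinates at once. Concretely, with the paper's normalization $\hat{w}_k = w_k$, $\hat{w}_{i'} \le w_{i'}$ (strict somewhere), take $\bm\epsilon = \bw - \hat{\bw}$: then $\epsilon_k = 0$ (which the hypothesis $-w_k < \epsilon_k \le 0$ allows), $\epsilon_{i'} \ge 0$ for $i' \ne k$, and $\bm\epsilon \ne 0$. A single application of \Cref{obv:monotone} then gives $x_{k,j}(G,\bw) < x_{k,j}(G,\hat{\bw}) = x_{k,j}(G,\bw')$ for all $j$ (the last equality by \Cref{obv:scale}), whence $\ell_k(P,G,\bw) < \ell_k(P,G,\bw')$, the desired contradiction. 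No telescoping is needed.
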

\begin{proof}
In one direction, if $\bw\equiv\bw'$, the loads are identical because the allocations are identical (by \Cref{obv:scale}). 

We now show the lemma in the opposite direction.
Let $k = \arg \min_i \frac{w_i}{w'_i}$ and $c = \frac{w_k}{w'_k}$. Let us define $\hat{\bw} = c\cdot \bw'$. Then, $\hat{w}_k = w_k$, and $\hat{w}_{i'} =  \left(\min_i \frac{w_i}{w'_i}\right) \cdot w'_{i'} \leq \ w_{i'}$ for all $i'\not= k$. Now, if $\bw$ and $\bw'$ are not equivalent, then there exists some $i'\in [m]$ such that $\hat{w}_{i'} < w_{i'}$. Therefore, by Observation~\ref{obv:monotone}, $x_{k,j}(G,\hat{\bw}) > x_{k,j}(G,\bw)$ for all $j\in [n]$. But, by Observation~\ref{obv:scale}, $x_{k,j}(G,\hat{\bw}) = x_{k,j}(G,\bw')$ for all $j\in [n]$. Thus,  $x_{k,j}(G,\bw') > x_{k,j}(G,\bw)$ for all $j\in [n]$, which contradicts $\ell_k(P, G,\bw') = \ell_k(P, G,\bw)$.
\end{proof}

Similarly, we show that if the canonical load exists (i.e., a load vector where all loads are identical), it must be unique.

\begin{lemma}
\label{lem:uniq}
For any $P, G \in \mps$, if there exist $\bw,\bw'\in \vps$ such that $\ell_i(P, G,\bw) =\ell$ and $\ell_i(P, G,\bw') = \ell'$ for all $i\in [m]$, then $\ell = \ell'$.
\end{lemma}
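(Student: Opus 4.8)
The plan is to mimic the structure of the proof of \Cref{lem:unique}, but with one twist: since the two uniform loads $\ell$ and $\ell'$ are not assumed to be equal, I cannot invoke \Cref{lem:unique} directly. Instead I will extract a \emph{one-sided} strict inequality between $\ell$ and $\ell'$ under the assumption $\bw \not\equiv \bw'$, and then finish by a symmetry argument.

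So suppose $\bw \not\equiv \bw'$; I claim $\ell < \ell'$. Let $k = \argmin_i \nicefrac{w_i}{w'_i}$, let $c = \nicefrac{w_k}{w'_k} > 0$, and set $\hat{\bw} = c\cdot \bw'$. Then $\hat{w}_k = w_k$, while $\hat{w}_i = c\,w'_i \le w_i$ for every $i \in [m]$, and the inequality is strict for at least one index (otherwise $\bw = \hat{\bw} \equiv \bw'$, contradicting $\bw \not\equiv \bw'$). Writing $\bm{\epsilon} = \bw - \hat{\bw}$, we get $\epsilon_k = 0$ (so in particular $-\hat{w}_k < \epsilon_k \le 0$) and $\epsilon_i \ge 0$ for all $i \ne k$, with $\bm{\epsilon}$ nonzero. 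This is precisely the hypothesis of \Cref{obv:monotone}, applied with base vector $\hat{\bw}$ and perturbed vector $\hat{\bw} + \bm{\epsilon} = \bw$; hence $x_{k,j}(G,\bw) < x_{k,j}(G,\hat{\bw})$ for every $j \in [n]$. Since all $p_{k,j} > 0$, summing against the weights $p_{k,j}$ yields $\ell = \ell_k(P,G,\bw) < \ell_k(P,G,\hat{\bw})$. Finally $\hat{\bw} \equiv \bw'$, so by \Cref{obv:scale} the allocations of agent $k$ under $\hat{\bw}$ and $\bw'$ coincide, giving $\ell_k(P,G,\hat{\bw}) = \ell_k(P,G,\bw') = \ell'$. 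Therefore $\ell < \ell'$.

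Running the identical argument with the roles of $(\bw,\ell)$ and $(\bw',\ell')$ exchanged gives $\ell' < \ell$, a contradiction. Hence $\bw \equiv \bw'$, and then \Cref{obv:scale} shows the two allocations — and therefore the two loads — are identical, i.e. $\ell = \ell'$. The only point that needs care (rather than a genuine obstacle) is bookkeeping the direction of the perturbation: one must treat $\hat{\bw}$ as the base point and raise the non-pivot coordinates to reach $\bw$, so that agent $k$'s share strictly \emph{drops}, matching the sign convention of \Cref{obv:monotone}; and one must separately handle the degenerate case $\bw \equiv \bw'$, where the perturbation vanishes and the conclusion $\ell = \ell'$ is immediate.
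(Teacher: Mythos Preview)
Your proof is correct and uses the same core idea as the paper: rescale $\bw'$ so that the pivot coordinate $k=\argmin_i w_i/w'_i$ matches $\bw$, then invoke \Cref{obv:monotone} and \Cref{obv:scale} on agent $k$. The paper's version is slightly more economical --- it assumes without loss of generality that $\ell>\ell'$ and directly derives $\ell'\ge\ell$ from the weak monotonicity inequality, whereas you take the detour of first proving $\bw\equiv\bw'$ via the two-sided strict inequality; but the underlying mechanism is identical.
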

\begin{proof}
Assume for the purpose of contradiction that there exist $\bw,\bw' \in \vps$  such that for all $i\in [m]$, $\ell_i(P,G,\bw) = \ell$ and $\ell_i(P,G,\bw') = \ell'$ but $\ell > \ell'$.
Let $k = \arg \min_i \frac{w_i}{w'_i}$ and $c = \frac{w_k}{w'_k}$, and let $\hat{\bw} = c\cdot \bw'$.
We have 
$$\ell' = \ell_k(P,G,\bw') = \ell_k(P,G,\hat{\bw}) \geq \ell_k(P,G,\bw) = \ell, \text{which is a contradiction.}$$ 
Here, the second equality is by Observation~\ref{obv:scale},
and the inequality is by Observation~\ref{obv:monotone}, since $\hat{w}_k = w_k$, and $\hat{w}_i \leq  w_i$ for $i\in [m]$. 
\end{proof}

\subsection{Convergence of \Cref{alg:iterative}}

The rest of this section focuses on showing the existence of a canonical allocation for \gp-allocations. We do so by showing convergence of the following simple iterative algorithm (\Cref{alg:iterative}):

\begin{algorithm}[ht]
\begin{itemize}
    \item Initialize: $\bw^{(0)} \leftarrow \mathbf{1}^m$ 
\end{itemize}
Iteration $r$:
    \begin{itemize}
    \item Compute $\bell^{(r)}$ as $\ell^{(r)}_i \leftarrow \ell_i(P,G,\bw^{(r)})$, for all $i\in[m]$, where $\ell_i(P,G,\bw^{(r)})$ is the load of agent $i$ under the \gp-allocation with transformation matrix $G$ and learned parameters $\bw^{(r)}$.
    \item Set $\bw^{(r+1)}$ as $w^{(r+1)}_i \leftarrow \frac{w^{(r)}_i}{\ell^{(r)}_i}\cdot \gamma^{(r)}$, for all $i\in[m]$.\\ 
    Here, $\gamma^{(r)} \in \R_{>0}$ is a scaling factor whose value does not affect the load (by \Cref{obv:scale}). But, by using, e.g., $\gamma^{(r)} = \ell^{(r)}_1$, we can ensure that the algorithm terminates with a single set of learned parameters instead of repeatedly finding equivalent sets of parameters after it has converged.
    
\end{itemize}
\caption{The iterative algorithm showing the existence of a canonical allocation for \gp-allocations.}
\label{alg:iterative}
\end{algorithm}

Note that \Cref{alg:iterative} ensures that if the loads of all agents are uniform at any stage, then the iterative process has converged and the algorithm terminates. So, it remains to show that for any $P,G \in \mps$, this iterative process reaches a set of parameters $\bw^* \in \vps$ such that $\ell_i(P,G,\bw^*) = \ell_{i'}(P,G,\bw^*)$ for all $i,i'\in [m]$. 

Our proof has two parts. 
The first part shows that the maximum and minimum loads are (weakly) monotone over the course of the iterative process. 
For this, we focus on a single iteration. For a vector $\bell \in \vps$, let $\ell_{\max} = \max_{i\in [m]} \ell_i$ and $\ell_{\min} = \min_{i\in [m]} \ell_i$ be the maximum and minimum coordinates of $\bell$.
We will show that if $\elx^{(r)}$ and $\eln^{(r)}$ are not equal at the beginning of an iteration, then $\elx^{(r)}$ can only decrease (or stay unchanged) and $\eln^{(r)}$ can only increase (or stay unchanged) in a single iteration. 
\begin{lemma}
\label{lem:notincreasing}
Consider any $P,G \in \mps$, $\gamma > 0$. Let $\bw,\bw',\bell,\bell' \in \vps$ such that $\ell_i = \ell_i(P,G,\bw)$, $\ell'_i = \ell_i(P,G,\bw')$ and $w'_i = \frac{w_i}{\ell_i}\cdot \gamma$ and let $\tilde{p}_i = 
\sum_j p_{i,j}$. Then,
we have
$\ell_i' \geq \nicefrac{\eln}{\left(1-\frac{\ell_i-\eln}{\tilde{p}_i}\right)}$ and
$\ell_i' \leq \nicefrac{\elx}{\left(1+\frac{\elx-\ell_i}{\tilde{p}_i}\right)}$
\end{lemma}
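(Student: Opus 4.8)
The plan is to track how the load $\ell_i'$ of a single agent $i$ relates to $\ell_i$, $\eln$, and $\elx$ after one update step $w_i' = \gamma w_i/\ell_i$. Fix the iteration and consider agent $i$. When we replace $\bw$ by $\bw'$, the parameter of agent $i$ is scaled down by the factor $1/\ell_i$ (up to the global $\gamma$, which is irrelevant by \Cref{obv:scale}), while the parameters of the other agents $k\ne i$ are scaled by $1/\ell_k$. Since the $\ell_k$ range in $[\eln,\elx]$, the ratio $w_i'/w_k'$ changes in a controlled way: relative to agent $i$, every other agent's weight is multiplied by $\ell_i/\ell_k$, which lies in $[\ell_i/\elx,\ \ell_i/\eln]$. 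So I would first normalize so that agent $i$'s weight is fixed and argue that the new allocation $x_{i,j}(G,\bw')$ is, coordinatewise, sandwiched between the allocations one gets by scaling \emph{all} other agents' weights by the uniform constants $\ell_i/\elx$ and $\ell_i/\eln$ respectively, using the monotonicity of \gp-allocations (\Cref{obv:monotone}).

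**The core estimate.**

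For the upper bound on $\ell_i'$: compare against the hypothetical parameter vector $\bw''$ in which $w_i'' = w_i$ and $w_k'' = w_k \cdot (\ell_i/\elx)$ for all $k\ne i$. Since $\elx \ge \ell_k$, we have $w_k'' \le w_k \cdot (\ell_i/\ell_k) \equiv w_k'$ after the common rescaling, so by \Cref{obv:monotone} (applied coordinatewise, decreasing others' weights only increases agent $i$'s allocation --- or rather, here I want the direction that bounds $x_{i,j}(G,\bw')$ from above), $x_{i,j}(G,\bw')\le x_{i,j}(G,\bw'')$ for all $j$. But under $\bw''$, the allocation to agent $i$ is exactly the allocation to agent $i$ under the \emph{original} $\bw$ with agent $i$'s weight inflated by $\elx/\ell_i$; a short algebraic identity gives
\[
    x_{i,j}(G,\bw'') = \frac{(\elx/\ell_i)\, x_{i,j}(G,\bw)}{1 + ((\elx/\ell_i)-1)\, x_{i,j}(G,\bw)}.
\]
Multiplying by $p_{i,j}$ and summing over $j$, and using that $t\mapsto \frac{ct}{1+(c-1)t}$ is concave in $t$ for $c\ge 1$ (so that $\sum_j p_{i,j}\cdot(\text{that function of }x_{i,j})$ is maximized, relative to $\sum_j p_{i,j}x_{i,j}=\ell_i$, when mass is spread --- actually I want a clean bound, so I'd instead bound $\sum_j p_{i,j} x_{i,j}(G,\bw'') $ directly using $\sum_j p_{i,j} x_{i,j}(G,\bw) = \ell_i$, $\sum_j p_{i,j}\ge \tilde p_i$ wait, $\sum_j p_{i,j} = \tilde p_i$ and $x_{i,j}\le 1$). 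The cleanest route is: write $y_j = x_{i,j}(G,\bw)$, so $\sum_j p_{i,j} y_j = \ell_i$ and $\sum_j p_{i,j}\le \tilde p_i$ --- in fact $=\tilde p_i$. Then $\ell_i' \le \sum_j p_{i,j}\cdot \frac{c y_j}{1+(c-1)y_j}$ with $c=\elx/\ell_i$; bounding $\frac{c y_j}{1+(c-1)y_j}\le \frac{c y_j}{1 + (c-1)y_j}$ and using $1+(c-1)y_j \ge 1 + (c-1)\cdot \frac{\text{(suitable average)}}{}$... I expect the tight inequality to follow from the substitution $z_j = p_{i,j} y_j$ with $\sum z_j=\ell_i$ and a convexity/Jensen argument yielding $\ell_i' \le \elx/(1+(\elx-\ell_i)/\tilde p_i)$. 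The lower bound is entirely symmetric, comparing instead to $w_k''' = w_k\cdot(\ell_i/\eln)$, which \emph{increases} others' weights relative to agent $i$, hence decreases agent $i$'s allocation, and running the same algebra with $c = \eln/\ell_i \le 1$.

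**The main obstacle.**

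The main obstacle is the algebraic core: extracting the precise bounds $\ell_i' \ge \eln/(1-(\ell_i-\eln)/\tilde p_i)$ and $\ell_i' \le \elx/(1+(\elx-\ell_i)/\tilde p_i)$ from the sandwiching allocations. This requires handling the nonlinear map $t\mapsto \frac{ct}{1+(c-1)t}$ carefully, presumably via Jensen's inequality on the convex/concave function $g(z) = \frac{cz}{\,p_{i,j}+(c-1)z\,}$ after the substitution $z_j = p_{i,j} x_{i,j}$, together with the constraints $\sum_j z_j = \ell_i$ and $\sum_j p_{i,j} = \tilde p_i$. I would need to check the direction of concavity in each case ($c\ge 1$ vs.\ $c\le 1$) to make sure the inequality points the right way, and verify the boundary behavior (when all mass is on a single coordinate vs.\ spread out). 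Everything else --- the reduction to uniform scaling of the other agents' weights via \Cref{obv:monotone}, and the rescaling identity relating $x_{i,j}(G,\bw'')$ to $x_{i,j}(G,\bw)$ --- is routine once that estimate is in place.
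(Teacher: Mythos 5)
Your proposal is correct, and it takes a genuinely different route from the paper's. The paper normalizes, proves the two-agent case via Milne's inequality (with a carefully chosen $a_j,b_j$), and then reduces the general $m$-agent case to two agents through two ad hoc transformations (first inflating the weights of every agent other than $i$ to push their loads to $\elx$, then aggregating them into a single ``super-agent''). You instead stay in the $m$-agent setting and sandwich $x_{i,j}(G,\bw')$ directly: since $w'_k\propto w_k\cdot(\ell_i/\ell_k)$ relative to agent $i$ and $\ell_k\in[\eln,\elx]$, \Cref{obv:monotone} bounds $x_{i,j}(G,\bw')$ between the allocations obtained by scaling all other agents' weights uniformly by $\ell_i/\elx$ and by $\ell_i/\eln$. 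The resulting allocations have the clean closed form $x_{i,j}=\phi_c(y_j)$ with $y_j=x_{i,j}(G,\bw)$, $\phi_c(y)=\frac{cy}{1+(c-1)y}$, and $c=\elx/\ell_i$ or $c=\eln/\ell_i$. That reduces the whole lemma to bounding $\sum_j p_{i,j}\,\phi_c(y_j)$ given $\sum_j p_{i,j}y_j=\ell_i$ and $\sum_j p_{i,j}=\tilde p_i$.

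On the one step you were unsure about: the right instantiation is weighted Jensen with weights $p_{i,j}/\tilde p_i$, not the substitution $z_j=p_{i,j}y_j$ (that route introduces a varying $p_{i,j}$ into the denominator and would require a two-variable joint-concavity argument, which is essentially Milne again). Since $\phi_c$ is concave for $c\ge 1$ and convex for $c\le 1$, weighted Jensen gives
\[
\sum_j p_{i,j}\,\phi_c(y_j)\ \lesseqgtr\ \tilde p_i\,\phi_c\!\left(\frac{\ell_i}{\tilde p_i}\right)=\frac{c\ell_i}{1+(c-1)\ell_i/\tilde p_i},
\]
with $\le$ when $c=\elx/\ell_i\ge1$ and $\ge$ when $c=\eln/\ell_i\le1$, and substituting $c$ yields exactly $\elx/(1+(\elx-\ell_i)/\tilde p_i)$ and $\eln/(1-(\ell_i-\eln)/\tilde p_i)$ respectively. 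So the argument closes, and it does so without the paper's reduction machinery; what the paper's approach buys in exchange is that Milne's inequality is invoked as a black box, whereas you need to check the concavity/convexity direction of $\phi_c$ in each of the two cases (you had flagged this, and it comes out right).
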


In the second part, we show that the ratio $\frac{\elx^{(r)}}{\eln^{(r)}}$ is strictly decreasing after a finite number of iterations. The proof of this stronger property requires the per-iteration weak monotonicity property that we establish in the first part of the proof. 
\begin{noappendix}
\end{noappendix}

\begin{lemma}
\label{lem:strictsingle}
Let $P, G\in \mps$ be given fixed matrices. Fix an iteration $r$ in \Cref{alg:iterative} where $\elx^{(r)} > \eln^{(r)}$. Let $\elx^{(r)} \geq (1+\epsilon)\cdot \eln^{(r)}$ for some $\epsilon \in (0, 1]$. Then, in the next iteration, we have $\eln^{(r+1)}  \geq (1+c \cdot \epsilon)\cdot \eln^{(r)} $ for some constant $c > 0$ that only depends on $P$ and $G$. 
\end{lemma}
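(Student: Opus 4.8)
The plan is to combine the per-iteration weak monotonicity from \Cref{lem:notincreasing} with a counting argument on the total load. First I would observe that by \Cref{lem:notincreasing}, every load $\ell_i^{(r+1)}$ satisfies $\ell_i^{(r+1)} \geq \eln^{(r)}/\bigl(1 - (\ell_i^{(r)} - \eln^{(r)})/\tilde p_i\bigr) \geq \eln^{(r)}$, so no load ever drops below $\eln^{(r)}$ in one step; in particular $\eln^{(r+1)} \geq \eln^{(r)}$. To get the \emph{strict} improvement, the key is to show that at least one agent with a load close to $\eln^{(r)}$ must see a genuine increase, and this is forced by the fact that the total weighted load cannot stay pinned at the bottom: since $\elx^{(r)} \geq (1+\epsilon)\eln^{(r)}$, there is some agent $h$ with $\ell_h^{(r)}$ bounded away from $\eln^{(r)}$ (indeed $\ell_h^{(r)} = \elx^{(r)}$).

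The core mechanism I would use is that $\sum_i \ell_i^{(r)} / \text{(something)}$ is conserved in a suitable sense under the update $w_i' = w_i/\ell_i \cdot \gamma$. Concretely, let $S^{(r)} = \sum_j 1 = n$ be the total fractional mass of items, which is invariant, and note that $\sum_i x_{i,j}(G,\bw^{(r+1)}) = 1$ for every $j$. The useful quantity is the ``potential'' $\Phi^{(r)} = \sum_i w_i^{(r+1)} / w_i^{(r)} \cdot (\text{item contributions})$ — more precisely, I would track that for each item $j$, $\sum_i x_{i,j}(G,\bw^{(r+1)}) \cdot \frac{\ell_i^{(r)}}{\gamma} \cdot \frac{1}{w_i^{(r)}} \cdot \ldots$; rather than guess the exact potential, the cleaner route is: since the new weights are $w_i^{(r)}/\ell_i^{(r)}$ (up to scaling), an agent $i$ whose old load was exactly $\eln^{(r)}$ has its weight scaled up by the largest factor among all agents, so its new allocation $x_{i,j}(G,\bw^{(r+1)})$ strictly increases relative to $x_{i,j}(G,\bw^{(r)})$ on every item $j$ where some other agent had a strictly larger load contribution. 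I would quantify this: for the minimum-load agent(s), compare against agent $h$ with $\ell_h^{(r)} = \elx^{(r)} \geq (1+\epsilon)\eln^{(r)}$, whose weight is scaled down relative to theirs by a factor $\geq (1+\epsilon)$, and use \Cref{obv:monotone}-style monotonicity to conclude the minimum agent's new load exceeds $\eln^{(r)}$ by a multiplicative factor depending on $\epsilon$ and on how much ``mass'' of items is shared with higher-load agents.

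The quantitative bound $c$ should come out as follows: the worst case is when only an infinitesimal fraction of the item mass of the minimum agent overlaps (in the proportional-allocation sense) with higher-load agents. But this cannot be too small, because $\ell_i^{(r)} = \eln^{(r)}$ together with all $p_{i,j} > 0$ and all entries of $G$ finite and positive on a fixed finite matrix means there is a uniform lower bound $\rho = \rho(P,G) > 0$ on the fraction of agent $i$'s load that is ``contestable.'' Plugging $\rho$ and the gap factor $(1+\epsilon)$ into the bound of \Cref{lem:notincreasing} (or a direct computation of the new proportional shares) yields $\eln^{(r+1)} \geq (1 + c\epsilon)\eln^{(r)}$ with $c = c(P,G) > 0$. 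I would make $\rho$ explicit as something like $\min_{i,j} \frac{f(p_{i,j}) p_{i,j}}{\sum_{i'} f(p_{i',j}) p_{i',j} \cdot (\text{upper bound on weight ratio})}$, but since the weight ratios can be unbounded over iterations, the safer argument restricts attention to the single step and uses that the scaling factors $\ell_i^{(r)}/\gamma$ are all within $[\eln^{(r)}, \elx^{(r)}]$ of each other, a bounded range, so the relevant proportional shares in iteration $r+1$ are controlled.

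The main obstacle I anticipate is getting a bound on $c$ that is genuinely a constant depending only on $P$ and $G$ and \emph{not} on the iteration $r$ or on $\epsilon$ — the danger is that as the algorithm progresses the weights $\bw^{(r)}$ drift, and a naive argument would give a constant depending on $\bw^{(r)}$ (hence on $r$). The resolution, which I would make precise, is that the statement only concerns a \emph{single} step $r \to r+1$ and bounds $\eln^{(r+1)}$ in terms of $\eln^{(r)}$, so it suffices to show the multiplicative gain factor is bounded below uniformly over all possible weight vectors $\bw$ with $\elx(\bw) \geq (1+\epsilon)\eln(\bw)$; this uniform lower bound exists by a compactness argument on the (scale-normalized, hence compact) space of weight vectors, or explicitly from the finiteness of $P$ and $G$. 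I would present the explicit computation using \Cref{lem:notincreasing} applied to the minimum-load agent, showing its load increases because the update moves weight toward it, and then extract $c$ from the structure of $P$ and $G$.
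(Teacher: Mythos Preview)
Your plan correctly identifies the overall structure and, in particular, the central obstacle: the constant $c$ must not depend on the iteration $r$, even though the weights $\bw^{(r)}$ drift. However, none of your proposed resolutions actually closes this gap.

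First, applying \Cref{lem:notincreasing} to a minimum-load agent $i$ (with $\ell_i^{(r)}=\eln^{(r)}$) gives only $\ell_i^{(r+1)}\ge \eln^{(r)}$, i.e.\ no strict gain; so that lemma by itself is useless precisely for the agents you care about. Second, your ``single-step'' observation that the scaling factors $\ell_i^{(r)}/\gamma$ lie in a bounded range controls how much the \emph{ratios} $w_i^{(r+1)}/w_i^{(r)}$ vary across agents, but the multiplicative gain for the minimum agent depends on the \emph{absolute} allocation shares $x_{i',j}(G,\bw^{(r)})$ of the other agents (via $y_{i,j}/z_{i,j}=\sum_{i'}(\ell_i^{(r)}/\ell_{i'}^{(r)})\,y_{i',j}$), and those can be arbitrarily small if the weight ratios at iteration $r$ are extreme. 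Third, the compactness argument does not go through as stated: the scale-normalized weight simplex is compact only if you include the boundary where some $w_i=0$, but there $\eln=0$ and the gain ratio is undefined; you would first need to prove the iterates stay in a compact subset of the open simplex, which is exactly the missing piece. And even granting compactness, it would yield a positive lower bound depending on $\epsilon$, not a bound of the form $c\epsilon$ with $c$ independent of $\epsilon$.

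The paper fills this gap with a separate lemma (\Cref{lem:minimum}) showing that $x_{i,j}(G,\bw^{(r)})\ge x_{\min}$ for a fixed $x_{\min}=x_{\min}(P,G)>0$, uniformly in $r$. This is proved by tracking the pairwise weight ratios $w_{i'}^{(r)}/w_i^{(r)}$ along the trajectory of \Cref{alg:iterative}: each ratio can grow by at most $\elx^{(0)}/\eln^{(0)}$ in one step (using \Cref{lem:notincreasing}), and once it exceeds a $P,G$-dependent threshold it must decrease. With $x_{\min}$ in hand, the paper splits agents into ``light'' (load within $(1+\delta)\eln^{(r)}$ for $\delta\asymp x_{\min}\epsilon$) and ``heavy''; for light agents a direct computation using $x_{\min}$ gives $z_{i,j}-y_{i,j}\ge x_{\min}^2\delta$ per item, and for heavy agents \Cref{lem:notincreasing} already gives enough. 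Your proposal is missing both the uniform allocation lower bound and this light/heavy case split.
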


Using \Cref{lem:notincreasing} and \Cref{lem:strictsingle}, we complete the proof of
\Cref{thm:maincanonical}.

\begin{proof}[Proof of \Cref{thm:maincanonical}]
\label{sec:proof-maincanonical}

We are given fixed matrices $P,G \in \mps$. Let $\elx^{(r)},\eln^{(r)}$ denote the maximum and the minimum load respectively in iteration $r$ of \Cref{alg:iterative}.
Let $c > 0$ be the constant (that depends only on $P, G$) in \Cref{lem:strictsingle}.

For a non-negative integer $a$, let $r_a$ be defined recursively as follows: 
\[
r_a = r_{a-1} + \left \lceil \frac{\log (1+2^{-a+1})}{\log{(1+c\cdot 2^{-a})}} \right \rceil + 1, 
\text{ where }
 r_0 = \left \lceil \frac{\log \left(\lrat{0}\right)}{\log{(1+c)}} \right \rceil + 1.
 \]
We will show for any $a$, in any iteration $r\geq r_a$, we have $\lrat{r} \leq 1+2^{-a}$.
First, we prove it for $a=0$.  If there exists some $r\leq r_0$ such that $\lrat{r} \leq 2$, 
then this also holds for $r\geq r_0$ by \Cref{lem:notincreasing}. 
Otherwise, for all $r\leq r_0$ we have $\lrat{r} > 2$. Then,
using \Cref{lem:strictsingle} with $\epsilon = 1$, we get  $\eln^{(r+1)}  \geq (1+c)\cdot \eln^{(r)}$.
Therefore,
$\eln^{(r_0)} \geq (1+c)^{r_0} \cdot \eln^{(0)} > \elx^{(0)}$ by our choice of $r_0$. 
This contradicts \Cref{lem:notincreasing}, thereby showing that $\lrat{r} \leq 2$
for any $r\ge r_0$. 

Now, we show the inductive case. Assume the inductive hypothesis that $\lrat{r_{a-1}} \leq 1+2^{-(a-1)}$. 
We will prove that $\lrat{r_{a}} \leq 1+2^{-(a)}$. The proof is similar to the base case of $a = 0$.
If there exists some $r\leq r_a$ such that $\lrat{r} \leq 1+2^{-a}$, then this inequality also holds for any $r\geq r_{a}$ by \Cref{lem:notincreasing}. 
Otherwise, for all $r\leq r_a$ we have $\lrat{r} > 1+2^{-a}$. 
Then,
for all $r_{a-1} \leq r\leq r_a$, using \Cref{lem:strictsingle} with $\epsilon = 2^{-a}$, we have  $\eln^{(r+1)}  \geq (1+c\cdot 2^{-a})\cdot \eln^{(r)}$. 
Therefore, $\eln^{(r_a)} \geq (1+c\cdot 2^{-a})^{r_a-r_{a-1}} \cdot \eln^{(r_{a-1})}$. 
By our choice of $r_a$, this implies 
$\eln^{(r_a)} > (1+2^{-(a-1)})\cdot \eln^{(r_{a-1})}$. By the induction hypothesis, this implies $\eln^{(r_a)} > \elx^{(r_{a-1})}$. But, this implies $\elx^{(r_a)} > \elx^{(r_{a-1})}$, which contradicts \Cref{lem:notincreasing}. Therefore,
$$\lim_{r\rightarrow \infty} \nicefrac{\elx^{(r)}}{\eln^{(r)}} = 1,$$ and $\ell^*(P,G) = \displaystyle\lim_{r\rightarrow \infty} \elx^{(r)}$. Moreover, by \Cref{lem:uniq} this value is uniquely defined and attained by a unique (up to scaling) set of learned parameters. 

\end{proof}

\subsection{Weak Monotonicity of the Maximum and Minimum Loads in \Cref{alg:iterative}: Proof of \Cref{lem:notincreasing}}

For ease of description, we assume that $G$ and $\bw$ are normalized in the following sense: 
\[
    \bw = \mathbf{1}^m \text{ and } \sum_{j} g_{i,j} = 1.
\]    
This transformation is local to the current iteration, and only for the purpose of this proof. First, we explain why this change of notation is w.l.o.g.
Suppose $\hat{G}, {\bf \hat{w}}$ represent the actual transformation matrix and learned parameters respectively. Now, we define $G$ as follows:
$$g_{i,j} = \frac{\hat{g}_{i,j} \cdot \hat{w}_i} {\sum_{i'\in [m]} \hat{g}_{i',j} \cdot \hat{w}_{i'}},$$
and our new learned parameters is given by $\mathbf{1}^m$.

Note that the fractional allocation remains unchanged, i.e., $x_{i,j}(\hat{G},\hat{w}) = x_{i,j}(G,\mathbf{1}^m) = g_{i,j}$, and therefore the loads are also unchanged:
    $\ell_i = \ell_i(P,\hat{G},\hat{\bw}) = 
\ell_i(P,G,\mathbf{1}^m) = \sum_{j\in[n]} g_{i,j} \cdot p_{i,j}$.
Assume w.l.o.g. (by Observation~\ref{obv:scale}) that $\gamma = \ell_1$, so $\hat{w}'_i =\frac{\hat{w}_i}{\ell_i}\cdot \ell_1$. 
In the normalized notation, the new parameters are
$w'_i = \frac{\ell_1}{\ell_i}$.
Again, the allocation is unchanged whether we use the original notation or the normalized one:
\[
    x_{i,j}(\hat{G},{\bf \hat{w}}') = x_{i,j}(G,\bw') = \frac{g_{i,j}\cdot w'_i}{
\sum_{i' \in [m]} g_{i',j} \cdot w'_{i'}},
\]
and we have, 
$\ell'_i = \ell_i(P,\hat{G},\hat{\bw'}) = 
\ell_i(P,G,\bw')$.

\smallskip\noindent{\bf The case of Two Agents.}
First, we consider the case of two agents here, i.e., $m=2$. Later, we will show the reduction from general $m$ to $m=2$.

We have
\begin{align*}
    \ell_1 = \sum_j g_{1,j}\cdot p_{1,j}
    \qquad \text{ and } \qquad
    \ell_2 = \sum_j g_{2,j}\cdot p_{2,j},
\end{align*}
and the parameter for the second agent after the update is given by:
$w'_2= \frac{\ell_1}{\ell_2}$ (note that $w'_1=1)$.

Accordingly, the loads after the update are given by:
\begin{align*}
\ell'_1 =  \sum_j p_{1,j} \cdot \frac{g_{1,j}}{g_{1,j}+w'_2 \cdot g_{2,j}}
\qquad \text{ and } \qquad
\ell'_2 =  \sum_j p_{2,j} \cdot \frac{w'_2 \cdot g_{2,j}}{g_{1,j}+w'_2 \cdot g_{2,j}}.
\end{align*}

Assume w.l.o.g that $\ell_1 < \ell_2$.
First, note that, from monotonicity (Observation~\ref{obv:monotone}) we have:
$$
\ell'_2 \leq \ell_2 = \nicefrac{\elx}{\left(1+\frac{\elx-\ell_2}{\tilde{p}_1}\right)} .
$$
Next,  we have to show that 
\begin{equation}\label{eq:upper}
\ell'_1 \leq \nicefrac{\elx}{\left(1+\frac{\elx-\ell_1}{\tilde{p}_1}\right)}
=
\nicefrac{\ell_2}{\left(1+\frac{\ell_2-\ell_1}{\tilde{p}_1}\right)}.
\end{equation}
The proof of the lower bound on $\ell'_1$ is similar and is omitted for brevity. 

We use the following standard inequality:
\begin{fact}[Milne's Inequality~\cite{milne1925note}]\label{fact:milne}
For any $a,b \in \R^n$, we have
$$ \sumdsj \frac{a_j \cdot b_j}{a_j+b_j} \quad \leq \quad \frac{\sumdsj a_j \cdot \sumdsj b_j}{\sumdsj {(a_j+b_j)}}.$$
\end{fact}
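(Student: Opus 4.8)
The plan is to prove Milne's inequality directly from the Cauchy--Schwarz inequality (in its Engel / Titu form), treating $a_j, b_j$ as strictly positive reals — which is the only regime in which the inequality is true and the only one in which it is applied, since the denominators $a_j+b_j$ must be nonzero and, e.g., with a negative entry the bound can fail. So I would first record the harmless standing assumption $a_j,b_j>0$ for all $j$.

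The key algebraic observation is that each left-hand summand can be rewritten as $\frac{a_j b_j}{a_j+b_j} = a_j - \frac{a_j^2}{a_j+b_j}$, which holds because $a_j(a_j+b_j) - a_j^2 = a_j b_j$. Summing over $j$ gives $\sum_{j\in[n]} \frac{a_j b_j}{a_j+b_j} = \sum_{j\in[n]} a_j - \sum_{j\in[n]} \frac{a_j^2}{a_j+b_j}$, so it suffices to lower bound the second sum. For that I would invoke the Cauchy--Schwarz inequality in the form $\sum_{j} \frac{u_j^2}{v_j} \ge \frac{\left(\sum_j u_j\right)^2}{\sum_j v_j}$ for positive $v_j$ (itself immediate from $\left(\sum_j u_j\right)^2 = \left(\sum_j \frac{u_j}{\sqrt{v_j}}\cdot\sqrt{v_j}\right)^2 \le \left(\sum_j \frac{u_j^2}{v_j}\right)\left(\sum_j v_j\right)$), applied with $u_j = a_j$ and $v_j = a_j+b_j$. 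This yields $\sum_{j\in[n]} \frac{a_j^2}{a_j+b_j} \ge \frac{\left(\sum_j a_j\right)^2}{\sum_j(a_j+b_j)}$.

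Substituting back and writing $A = \sum_{j\in[n]} a_j$, $B = \sum_{j\in[n]} b_j$ (so that $\sum_j(a_j+b_j) = A+B$), I would conclude $\sum_{j\in[n]} \frac{a_j b_j}{a_j+b_j} \le A - \frac{A^2}{A+B} = \frac{A(A+B) - A^2}{A+B} = \frac{AB}{A+B}$, which is exactly the claimed bound $\frac{\sum_j a_j \cdot \sum_j b_j}{\sum_j(a_j+b_j)}$.

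There is no real obstacle: the whole argument is three short lines once the identity $\frac{a_j b_j}{a_j+b_j} = a_j - \frac{a_j^2}{a_j+b_j}$ is spotted, and the only point requiring care is making the positivity hypothesis on $a$ and $b$ explicit so every division is legitimate (and, if one wants, noting that equality holds precisely when $a_j/b_j$ is constant over $j$). If one prefers to avoid citing Cauchy--Schwarz as a black box, an alternative is to establish the two-term case and induct on $n$: clearing denominators in $\frac{a_1 b_1}{a_1+b_1} + \frac{a_2 b_2}{a_2+b_2} \le \frac{(a_1+a_2)(b_1+b_2)}{a_1+a_2+b_1+b_2}$ reduces it exactly to $(a_1 b_2 - a_2 b_1)^2 \ge 0$; but the Cauchy--Schwarz proof above is cleaner and is the one I would write.
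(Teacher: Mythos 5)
Your proof is correct. The paper does not prove this statement at all: it is imported as a classical fact with a citation to Milne's 1925 note, so there is no in-paper argument to compare against. Your derivation --- rewriting $\frac{a_j b_j}{a_j+b_j} = a_j - \frac{a_j^2}{a_j+b_j}$ and then lower-bounding $\sum_j \frac{a_j^2}{a_j+b_j}$ by $\frac{(\sum_j a_j)^2}{\sum_j (a_j+b_j)}$ via Cauchy--Schwarz in Engel form --- is a standard and complete proof. Your observation that the hypothesis ``for any $a,b\in\R^n$'' must really be positivity (or at least positivity of the denominators) is also well taken: as stated the fact is false for arbitrary reals, and one should check that the paper's application respects this. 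It does: there the inequality is invoked with $a_j = p_{1,j} > 0$ and $b_j = p_{1,j}\bigl(\frac{f_j}{w_2'} - 1\bigr)$, where $f_j - w_2' = g_{1,j}(1-w_2') > 0$ because $w_2' < 1$, so all the $b_j$ are strictly positive as required.
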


In using this inequality, we set for any $j\in [n]$, 
\[
    a_j = p_{1,j} \text{ and } b_j = p_{1,j}\cdot \left(\frac{f_j}{w'_2}-1\right) \text{  where } f_j = g_{1,j}+w'_2 \cdot g_{2,j} = g_{1,j}+w'_2 \cdot (1-g_{1,j}).
\]
First, we calculate each term in Milne's inequality separately:
\begin{align*}
    \sumdsj \frac{a_j \cdot b_j}{a_j+b_j} 
    &= \sumdsj p_{1,j} \cdot \frac{f_j-w'_2}{f_j}
    = \sumdsj p_{1,j} \cdot \frac{g_{1,j}+w'_2\cdot g_{2,j}-w'_2}{f_j}
    = \sumdsj p_{1,j} \cdot \frac{g_{1,j}-w'_2\cdot(1- g_{2,j})}{f_j}\\
    &=\sumdsj p_{1,j} \cdot \frac{g_{1,j}-w'_2 \cdot g_{1,j}}{f_j}
    = \sumdsj p_{1,j} \cdot g_{1,j}\cdot \frac{1-w'_2}{f_j} 
    = \ell'_1\cdot(1-w'_2).\\
    \sumdsj a_j &= \tilde{p}_1.\\
    \sumdsj b_j &= \sumdsj p_{1,j}\cdot g_{1,j}\cdot \left(\frac{1}{w'_2}-1\right)  = \frac{\ell_1}{w'_2}-\ell_1 = \ell_2-\ell_1 =
    \ell_2\cdot (1-w'_2).
\end{align*}

Using \Cref{fact:milne}, we get 

\begin{equation*}
    \label{eq:convergence}
    \ell'_1\cdot(1-w'_2) \leq  \frac{\tilde{p}_1 \cdot \ell_2 }{\ell_2-\ell_1+\tilde{p}_1}  \cdot (1-w'_2)
\end{equation*} 

By our assumption that $\ell_1 <  \ell_2$, and therefore $w'_2<1$. We now get \Cref{eq:upper} by rearranging terms. 
This completes the proof for the lemma for the case of two agents. 


\begin{toappendix}
    
\smallskip\noindent{\bf General case of More than Two Agents.}
For more than two agents, we again only show the upper bound; the lower bound follows similarly. We also focus on agent $1$ which is w.l.o.g. by symmetry. Therefore, we have to show that:
\begin{equation}\label{eq:upper-gen}
    \ell_1' \leq \frac{\elx}{1+\frac{\elx-\ell_1}{\tilde{p}_1}}.
\end{equation}    

To show this inequality, we use a two-step transformation to an instance with two agents. In the first step, we change the weight matrix by increasing the weights of jobs for agents other than agent $1$ so that the loads of all agents except $1$ becomes $\elx$. We argue below that this is w.l.o.g. 
%
In the second step, we transform the instance to two agents, where we ``combine'' all the other $m-1$ agents (except agent $1$) to a single row in the matrices $P$ and $G$ (this represents the second agent in the transformed instance).
Again, we show that we can do this in a way that establishing the upper bounds on $\ell'_1$ after the transformation implies \Cref{eq:upper-gen}. Finally, we use \Cref{eq:upper} to conclude the proof.

\smallskip\noindent{\bf First transformation:}
We assume $G,\bw$ are normalized as earlier. Recall that in this case, we have $\bw = \mathbf{1}^m$.
Consider the instance $\hat{P},\hat{G}$, where $\hat{p}_{i,j} = p_{i,j} \cdot \frac{\elx}{\ell_i}$ and $\hat{p}_{i,1} = p_{i,1}$, and $\hat{G} = G$.
Let the corresponding load in the transformed instance
be denoted $\hat{\bell} = \bell(\hat{P},\hat{G},\mathbf{1}^m)$.
By definition, $\ell_1 = \hat{\ell}_1$, and 
$\elx = \hat{\ell}_{\max}$.

Let, $\hat{w}_i = \ell_1/\hat{\ell}_i$ for all $i
\in [m]$. Note that we have $\hat{w}_1 = w'_1 = 1$ and
$\hat{w}_i = \ell_1/\elx \leq \ell_1/\hat{\ell}_i = w'_i$
for $i\geq 2$. By \Cref{obv:monotone}, 
we have $\ell'_1 = \ell_1(P,G,\bw') \leq \ell_1(P,G,\hat{\bw}') =
\ell_1(\hat{P},\hat{G},\hat{\bw}')$. Thus, it suffices to show
\Cref{eq:upper-gen} on the transformed instance.

\smallskip\noindent{\bf Second transformation:}
Now, define $\tilde{P},\tilde{G} \in \Rp^{2 \times n}$ as follows:
\begin{eqnarray*}
\tilde{g}_{1,j} &=& \hat{g}_{1,j} \quad = \quad g_{1,j}\\
\tilde{g}_{2,j} &=& 1-g_{1,j}\\
\tilde{p}_{1,j} &=& {p}_{1,j}\\
\tilde{p}_{2,j} &=& \frac{\sum_{i=2}^m \hat{p}_{i,j}\cdot g_{i,j}}{(m-1)\cdot (1-g_{1,i})}.\\
\end{eqnarray*}

Before the update, we update
$\ell_1(\hat{P},\hat{G},\mathbf{1}^m) = \ell_1(\tilde{P},\tilde{G},\mathbf{1}^2)$ since we did not modify the rows in $P$ and $G$ corresponding to agent $1$.
Second, we have
$$ \ell_2(\tilde{P},\tilde{G},\mathbf{1}^2) = \sum_j \tilde{p}_{2,j}\cdot \tilde{g}_{2,j} = \frac{\sum_{i=2}^m \hat{p}_{i,j}\cdot g_{i,j}}{(m-1)\cdot (1-g_{1,i})} \cdot \tilde{g}_{2,j}= \frac{(m-1)
\cdot \elx}{(m-1)} = \elx$$
For $\tilde{\bw}$ such that,
$ \tilde{w}'_1 = 1$ and $\tilde{w}'_2 = \hat{\ell}_1/\elx$, we have 
$$x_{1,j}(\tilde{G},\tilde{w}') = \frac{\tilde{g}_{1,j}}{\tilde{g}_{1,j}+\hat{\ell}_1/\elx \cdot \tilde{g}_{2,j}} = \frac{g_{1,j}}{g_{1,j}+\hat{\ell}_1/\elx \cdot \sum_{i=2}^m g_{i,j}}  = x_{1,j}(\hat{G},\hat{w}').$$
Therefore, $\ell_1(\hat{P},\hat{G},\hat{w}') = \ell_1(\tilde{P},\tilde{G},\tilde{w}')$.
Finally, by the case of two agents (\Cref{eq:upper}), we have
\[
    \ell_1(\tilde{P},\tilde{G},\tilde{w}') \leq \frac{\elx}{1+\frac{\elx-\ell_1}{\tilde{p}_1}}
\]    
    and therefore
$$\ell_1(P,G,w') \leq \ell_1(\hat{P},\hat{G},\hat{w}') = \ell_1(\tilde{P},\tilde{G},\tilde{w}') \leq \frac{\elx}{1+\frac{\elx-\ell_1}{\tilde{p}_1}},$$
as required.

\end{toappendix}

\begin{toappendix} 

\subsection{Strict Monotonicity of the Ratio of the Maximum to Minimum Loads in \Cref{alg:iterative}}\label{sec:strict-monotone}

We will need the following observation, which relates the assignment vectors for two different parameter vectors. We will use this later to relate the assignment vectors for an agent before and after a single iteration of \Cref{alg:iterative}.
\begin{observation}
\label{obv:xval}
Fix any $G\in \mps$. Consider two parameter vectors $\bw$ and $\bw'$ where we denote their coordinate-wise ratio as $\tau_i = \frac{w'_i}{w_i}$ for all $i\in [m]$.
Let $y_{i,j} = x_{i,j}(G,\bw)$ and $z_{i,j} = x_{i,j}(G,\bw')$ be the fractional allocations corresponding to the parameter vector $\bw, \bw'$ respectively. Then we have
\[
z_{i,j} = \frac{\tau_i\cdot y_{i,j}}{\sum_{i'\in[m]} \tau_{i'} \cdot y_{i',j}},
\]
and
\[
\frac{y_{i,j}}{z_{i,j}} = 
{\sum_{i'\in[m]} \frac{\tau_{i'}}{\tau_i} \cdot y_{i',j}},
\]

\end{observation}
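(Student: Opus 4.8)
The plan is simply to unfold the definition of the \gp-allocation and cancel a common factor. Recall $x_{i,j}(G,\bw) = \dfrac{g_{i,j}\, w_i}{\sum_{i'\in[m]} g_{i',j}\, w_{i'}}$, so that
\[
    y_{i,j} = \frac{g_{i,j}\, w_i}{\sum_{i'\in[m]} g_{i',j}\, w_{i'}}
    \qquad\text{and}\qquad
    z_{i,j} = \frac{g_{i,j}\, w'_i}{\sum_{i'\in[m]} g_{i',j}\, w'_{i'}}.
\]
Substituting $w'_i = \tau_i\, w_i$ into the definition of $z_{i,j}$ is the whole idea; everything else is bookkeeping.

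First I would observe that $\tau_i\, y_{i,j} = \dfrac{g_{i,j}\, w'_i}{\sum_{i'} g_{i',j}\, w_{i'}}$, and summing this over $i$ gives $\sum_{i'\in[m]} \tau_{i'}\, y_{i',j} = \dfrac{\sum_{i'} g_{i',j}\, w'_{i'}}{\sum_{i'} g_{i',j}\, w_{i'}}$. Dividing the former by the latter, the common denominator $\sum_{i'} g_{i',j}\, w_{i'}$ cancels and what remains is exactly $z_{i,j}$, which is the first claimed identity. The second identity, $\dfrac{y_{i,j}}{z_{i,j}} = \sum_{i'\in[m]} \dfrac{\tau_{i'}}{\tau_i}\, y_{i',j}$, then follows by taking reciprocals of the first one and dividing through by $\tau_i$.

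There is essentially no obstacle here; the only point worth stating is that every quantity in sight is strictly positive (since $P,G\in\mps$ and $\bw,\bw'\in\vps$ force all $g_{i,j},w_i,w'_i,\tau_i > 0$), so in particular $z_{i,j} > 0$ and the reciprocal step in the second identity is well-defined. The argument is purely algebraic and is carried out separately for each item $j$, so no properties of the iterative process or of the load vectors are needed.
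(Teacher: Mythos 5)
Your proof is correct, and since the paper states \Cref{obv:xval} without any written-out proof, the direct substitution $w'_i = \tau_i w_i$ followed by cancellation of the common denominator $\sum_{i'} g_{i',j} w_{i'}$ is exactly the intended (and only reasonable) argument. The one cosmetic nit is in your last sentence: the second identity comes from taking the reciprocal of the first and then multiplying by $y_{i,j}$ (the $\tau_i$ is already in the denominator and distributes into the sum), not ``dividing through by $\tau_i$''; the computation you carried out is nonetheless right.
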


Next, we show that for fixed matrices $P$ and $G$, the assignment variable $x_{i, j}$ is at least some fixed value.

\begin{lemma}\label{lem:minimum}
Let $P, G\in \mps$ be given fixed matrices. Then, for any iteration $r$ of \Cref{alg:iterative} and the corresponding parameter vector $\bw^{(r)}$, we have $x_{i, j} (G, \bw^{(r)}) \ge \xmin$ for some fixed $\xmin > 0$ that depends only on $P$ and $G$.
\end{lemma}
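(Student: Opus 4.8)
The plan is to show that the parameter vectors $\bw^{(r)}$ produced by \Cref{alg:iterative} stay within a bounded multiplicative range, and then translate this into a lower bound on each assignment variable. First I would fix a normalization (say $\gamma^{(r)}$ chosen so that $\min_i w_i^{(r)} = 1$, which is w.l.o.g. by \Cref{obv:scale}); under this normalization it suffices to bound $\max_i w_i^{(r)}$ from above by some constant $W$ depending only on $P$ and $G$. To get such a bound, observe that in any single iteration the update multiplies $w_i$ by $\gamma^{(r)}/\ell_i^{(r)}$, and the loads $\ell_i^{(r)}$ are themselves controlled: since $x_{i,j}\in[0,1]$, we trivially have $\ell_i^{(r)} \le \tilde p_i := \sum_j p_{i,j}$, and more importantly, because the current parameter-to-load ratio is exactly what the next weight uses, one can track the quantity $w_i^{(r)}\cdot(\text{something})$. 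The cleanest route, I expect, is to use the weak monotonicity already established in \Cref{lem:notincreasing}: the maximum load $\elx^{(r)}$ is nonincreasing and the minimum load $\eln^{(r)}$ is nondecreasing over the course of the algorithm, so all loads $\ell_i^{(r)}$ lie in the fixed interval $[\eln^{(0)}, \elx^{(0)}]$, which depends only on $P$ and $G$ (recall $\bw^{(0)} = \mathbf 1^m$). In particular $\eln^{(0)} = \min_i \ell_i(P,G,\mathbf 1^m) > 0$ is a fixed positive constant.

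Next I would leverage this to bound the parameter range. Write $w_i^{(r)} = \prod_{t=0}^{r-1} \frac{\gamma^{(t)}}{\ell_i^{(t)}}$ (starting from $w_i^{(0)}=1$). Then for any two agents $i,k$, the ratio $\frac{w_i^{(r)}}{w_k^{(r)}} = \prod_{t=0}^{r-1} \frac{\ell_k^{(t)}}{\ell_i^{(t)}}$. This telescoping product need not converge trivially, but here is the key point: at each step the update is designed to \emph{equalize} loads, and by \Cref{lem:notincreasing} the loads are squeezed into $[\eln^{(0)},\elx^{(0)}]$ forever. Actually the clean argument is via \Cref{lem:unique}: since the loads in iteration $r$ equal $\ell_i(P,G,\bw^{(r)})$ and these lie in the compact set $[\eln^{(0)},\elx^{(0)}]^m$, and the map $\bw \mapsto \bell(P,G,\bw)$ on the normalized simplex $\{\bw : \min_i w_i = 1\}$ is injective (by \Cref{lem:unique}) and continuous, I would argue that the preimage of the compact load-region under this map, intersected with the normalized parameter set, is itself bounded — giving the uniform bound $\max_i w_i^{(r)} \le W$. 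One must be a little careful that the map is proper (i.e. $w_i \to \infty$ forces some load to blow up or another load to vanish): indeed, if $w_i/w_k \to \infty$ then by \Cref{obv:monotone}, $x_{k,j}(G,\bw) \to 0$ for all $j$, hence $\ell_k \to 0$, contradicting $\ell_k \ge \eln^{(0)} > 0$.

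Finally, with $1 \le w_i^{(r)} \le W$ for all $i$ under the chosen normalization, I bound the assignment directly:
\[
x_{i,j}(G,\bw^{(r)}) = \frac{g_{i,j}\, w_i^{(r)}}{\sum_{i'} g_{i',j}\, w_{i'}^{(r)}} \ge \frac{g_{i,j} \cdot 1}{W \sum_{i'} g_{i',j}} \ge \frac{g_{\min}}{W \cdot m \cdot g_{\max}} =: \xmin,
\]
where $g_{\min} = \min_{i,j} g_{i,j} > 0$ and $g_{\max} = \max_{i,j} g_{i,j}$ are fixed (since all $p_{i,j} > 0$ and $f:\Rp\to\Rp$, so all $g_{i,j} = f(p_{i,j}) > 0$ on the finite index set). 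This $\xmin$ depends only on $P$ and $G$, as required.

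I expect the main obstacle to be the middle step — rigorously establishing that the parameter range stays bounded. The telescoping-product form of $w_i^{(r)}$ does not obviously stay bounded term by term, so the argument has to go through the injectivity/properness of the load map (Lemmas \ref{lem:unique}, \ref{lem:notincreasing}, and \Cref{obv:monotone}) rather than a direct multiplicative estimate. An alternative, more hands-on approach would be to show inductively that $\frac{\max_i w_i^{(r)}}{\min_i w_i^{(r)}}$ is bounded by relating it to $\lrat{0}$ via the per-iteration load bounds in \Cref{lem:notincreasing}; either way, the crux is converting "loads are pinned in a fixed compact interval" into "parameters are pinned in a fixed compact range."
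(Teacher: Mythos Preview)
Your approach is correct and genuinely different from the paper's. Both routes use \Cref{lem:notincreasing} to pin all loads $\ell_i^{(r)}$ inside the fixed interval $[\eln^{(0)},\elx^{(0)}]$, and both then convert this into a uniform bound on the parameter ratios $\max_{i,k} w_i^{(r)}/w_k^{(r)}$, from which the lower bound on $x_{i,j}$ follows exactly as you wrote. The difference is in how that conversion is done. The paper argues \emph{dynamically}: it shows (i) the ratio $w_{i'}^{(r)}/w_i^{(r)}$ can grow by at most a factor $\alpha := \elx^{(0)}/\eln^{(0)}$ in one iteration, and (ii) once this ratio exceeds the threshold $\rho_{i,i'} := \max_j \tfrac{p_{i,j}g_{i,j}}{p_{i',j}g_{i',j}}$ one necessarily has $\ell_i^{(r)} < \ell_{i'}^{(r)}$, so the ratio decreases at the next step. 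Since the ratio starts at $1$, this traps it below $\alpha\cdot\max(\rho_{i,i'},1)$ forever. Your argument is \emph{static}: you observe directly that if under the normalization $\min_i w_i = 1$ the maximum weight were large, then the minimum-weight agent's load would be forced below $\eln^{(0)}$, contradiction. In fact your properness claim can be made fully quantitative without any limiting or compactness language: if $w_k=1$ and $M=\max_i w_i$, then $x_{k,j}\le g_{k,j}/(g_{\min}M)$, hence $\ell_k \le \tilde p_k\, g_{\max}/(g_{\min}M)$, and $\ell_k\ge\eln^{(0)}$ forces $M\le \tilde p_{\max}\,g_{\max}/(g_{\min}\,\eln^{(0)})$. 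This is arguably simpler than the paper's two-step dynamical argument, though the paper's route yields a constant with a cleaner structural interpretation.

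One small clean-up: your invocation of injectivity via \Cref{lem:unique} is a red herring here. Injectivity plus continuity does not by itself give bounded preimages of compact sets; what you actually need and actually use is the properness statement (large parameter ratio $\Rightarrow$ some load tends to $0$), which is independent of \Cref{lem:unique}. Drop the injectivity remark and go straight to the explicit bound above, and the proof is complete.
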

\begin{proof}
First,  we show that for every iteration $r$ in \Cref{alg:iterative}, and for any two agents $i',i\in [m]$, the ratio of their respective parameters $\frac{w^{(r)}_{i'}}{w^{(r)}_{i}}$ is bounded by a term that only depends on the matrices $P$ and $G$. 
To obtain this bound, we define two terms that depend only on the matrices $P$ and $G$. The first term, denoted $\alpha$, is the ratio of the maximum to minimum load at the beginning of \Cref{alg:iterative}, i.e., $\alpha = \frac{\elx^{(0)}}{\eln^{(0)}}$. The second term, denoted $\rho_{i, i'}$, is specific to the agents $i, i'$ and is defined as 
$\rho_{i,i'} =  \max_{j\in [n]} \biggl \{\frac{p_{i,j}}{p_{i', j}} \cdot \frac{g_{i,j}}{g_{i',j}}\biggl\}$. 
Our goal is to show that for every iteration $r$ of \Cref{alg:iterative}, we have $\frac{w^{(r)}_{i'}}{w^{(r)}_{i}} \le  \alpha\cdot \max(\rho_{i, i'}, 1)$.

We show this bound in two steps. First, we show that the ratio $\frac{w^{(r)}_{i'}}{w^{(r)}_{i}}$ cannot increase by a factor greater than $\alpha$ in any iteration. Next, we show that if this ratio $\frac{w^{(r)}_{i'}}{w^{(r)}_{i}}$ exceeds $\rho_{i, i'}$ in any iteration, then it must decrease in the next iteration. Further, observe that the initial value of this ratio $\frac{w^{(r)}_{i'}}{w^{(r)}_{i}}$ is 1 for every pair of agents $i, i'\in [m]$ since $w^{(0)}_i = 1$ for all agents $i\in [m]$. Putting these together, we can then claim that $\frac{w^{(r)}_{i'}}{w^{(r)}_{i}} \le  \alpha\cdot \max(\rho_{i, i'},1 )$ for all iterations $r$ and for any two agents $i, i'\in [m]$.

We first prove that $\frac{w^{(r)}_{i'}}{w^{(r)}_{i}}$ cannot increase by a factor greater than $\alpha$ in any iteration. We have the following: 
\begin{eqnarray*}
    \frac{w^{(r)}_{i'}}{w^{(r)}_{i}} 
    &= \frac{\ell_{i}^{(r-1)}}{\ell_{i'}^{(r-1)}} \cdot \frac{w^{(r-1)}_{i'}}{w^{(r-1)}_{i}} \quad & \quad \text{(by the definition of \Cref{alg:iterative})}\\
    &\leq \frac{\elx^{(r-1)}}{\eln^{(r-1)}} \cdot \frac{w^{(r-1)}_{i'}}{w^{(r-1)}_{i}} \quad & \quad \text{(by the definition of $\elx$ and $\eln$)}\\
    &\leq \frac{\elx^{(0)}}{\eln^{(0)}} \cdot \frac{w^{(r-1)}_{i'}}{w^{(r-1)}_{i}} \quad & \quad \text{(by \Cref{lem:notincreasing})}\\
    &= \alpha \cdot \frac{w^{(r-1)}_{i'}}{w^{(r-1)}_{i}} \quad & \quad \text{(by the definition of $\alpha$)}.
\end{eqnarray*}

Next, we prove that if $\frac{w^{(r)}_{i'}}{w^{(r)}_{i}} > \rho_{i, i'}$ in any iteration $r$, then the ratio must decrease in the next iteration, i.e., $\frac{w^{(r+1)}_{i'}}{w^{(r+1)}_{i}} < \frac{w^{(r)}_{i'}}{w^{(r)}_{i}}$. Note that if $\frac{w^{(r)}_{i'}}{w^{(r)}_{i}} > \rho_{i, i'}$, this implies that 
\begin{equation}\label{eq:cond}
    \frac{w^{(r)}_{i'}}{w^{(r)}_{i}} > \frac{p_{i,j}}{p_{i', j}} \cdot \frac{g_{i,j}}{g_{i',j}} \quad  \text{ for every item } j\in [n], 
\end{equation}    
since $\rho_{i, i'} = \max_{j\in [n]} \biggl\{ \frac{p_{i,j}}{p_{i', j}} \cdot \frac{g_{i,j}}{g_{i',j}}\biggl\}$. Now, by the rules of proportional allocation, we have for every item $j\in [n]$:
\[
    \frac{x^{(r)}_{i', j}}{x^{(r)}_{i, j}} = \frac{w^{(r)}_{i'}}{w^{(r)}_{i}} \cdot \frac{g_{i', j}}{g_{i, j}} > \left(\frac{p_{i,j}}{p_{i', j}} \cdot \frac{g_{i,j}}{g_{i',j}}\right)\cdot \frac{g_{i', j}}{g_{i, j}} = \frac{p_{i,j}}{p_{i', j}} \quad \text{(the inequality is from \Cref{eq:cond})}.
\]
Then, the loads of the agents $i, i'$ in iteration $r$ of \Cref{alg:iterative} satisfy
\[
   \ell^{(r)}_{i} =  \sum_{j\in [n]} x^{(r)}_{i, j}\cdot p_{i, j} < \sum_{j\in [n]} x^{(r)}_{i', j}\cdot p_{i', j} = \ell^{(r)}_{i'}.
\]
Then,
\[
     \frac{w^{(r+1)}_{i'}}{w^{(r+1)}_{i}} 
     = \frac{\ell_{i}^{(r)}}{\ell_{i'}^{(r)}} \cdot \frac{w^{(r)}_{i'}}{w^{(r)}_{i}} 
     < \frac{w^{(r)}_{i'}}{w^{(r)}_{i}}.
\]

We have now shown $\frac{w^{(r)}_{i'}}{w^{(r)}_{i}} \le  \alpha\cdot \max(\rho_{i, i'},1 )$ for all iterations $r$ and for any two agents $i, i'\in [m]$. In other words, $\frac{w^{(r)}_{i'}}{w^{(r)}_{i}} \le  \tau$, where we define $\tau := \alpha \cdot \max \biggl(\max_{i, i'} \rho_{i, i'},1\bigg)$.

Now, recall that 
\[
x^{(r)}_{i, j} 
= \frac{w^{(r)}_i\cdot g_{i, j}}{\sum_{i'}w^{(r)}_{i'} \cdot g_{i', j}}
= \frac{g_{i, j}}{g_{i, j} + \sum_{i'\not= i} \left(\frac{w^{(r)}_{i'}}{w^{(r)}_i}\right) \cdot g_{i', j}}
\ge \frac{g_{i, j}}{g_{i, j} + \sum_{i'\not= i} \tau \cdot g_{i', j}}, \text{ since } \frac{w^{(r)}_{i'}}{w^{(r)}_{i}} \le  \tau.
\]
To complete the proof, we define $\xmin = \min_{i, j} \frac{g_{i, j}}{g_{i, j} + \sum_{i'\not= i} \tau \cdot g_{i', j}}$. Note that $\xmin$ only depends on $P$ and $G$ as required by the lemma.
\end{proof}

We are now ready to show the strict monotonicity property. Note that since $\elx^{(r+1)} \le \elx^{(r)}$ by the weak monotonicity property (\Cref{lem:notincreasing}), it suffices to show that $\eln^{(r+1)} - \eln^{(r)}$ is sufficiently large so that the ratio $\nicefrac{\elx}{\eln}$ converges to 1. We bound the increase in $\eln$ in the next lemma, and then show the convergence in the proof of \Cref{thm:maincanonical}.

\begin{proof}[Proof of \Cref{lem:strictsingle}]
We will prove that the minimum load $\eln$ will strictly increase in the next iteration; specifically that 
$\eln^{(r+1)} \geq (1+c \cdot \epsilon)\cdot \eln^{(r)}$, for some constant $c > 0$ that only depends on $P$ and $G$. 

Let $\delta = \xmin\cdot \big(1-\frac{1}{1+{\epsilon}}\big)$, where $x_{\min}$ is as defined in \Cref{lem:minimum}. We divide the agents into two sets: 
the {\em light} agents $S_s = \left\{i \in [m]: \ell^{(r)}_i \leq (1+\delta) \cdot \eln^{(r)}\right\}$
and the {\em heavy} agents $S_t = \left\{i \in [m]: \ell^{(r)}_i > (1+\delta) \cdot \eln^{(r)}\right\}$.
The bulk of our proof bounds the increase in the load of every light agent $i\in S_s$. For every heavy agent $i\in S_t$, we use \Cref{lem:notincreasing} to show that its load in iteration $r+1$ is sufficiently large. Putting these together yields the lemma.

First, let us consider a light agent $i\in S_s$. For each item $j\in [n]$, define $y_{i,j} = x_{i,j}(G,\bw^{(r)})$ and $z_{i,j} = x_{i,j}(G,\bw^{(r+1)})$. 
First, we show that for each item $j\in [n]$,
\begin{equation}\label{eq:diff}
    z_{i,j}-y_{i,j}\geq x_{\min}^2 \cdot \delta.
\end{equation}


For any agent $i\in [m]$, let $\tau_i = \frac{w_i^{(r+1)}}{w_i^{(r)}}$. Then, we have:
\begin{eqnarray*}
\frac{y_{i,j}}{z_{i,j}} 
 & =  & \frac{\sum_{i'\in[m]} \tau_{i'} \cdot y_{i',j}}{\tau_i} \qquad \text{(By \Cref{obv:xval})} \\  
 & = & \sum_{i'\in[m]} \frac{w_{i'}^{(r+1)}}{w_{i'}^{(r)}}\cdot \frac{w_{i}^{(r)}}{w_{i}^{(r+1)}} \cdot y_{i',j} =  \displaystyle\sum_{i'\in[m]} \frac{\ell^{(r)}_{i}}{\ell^{(r)}_{i'}} \cdot y_{i',j} \qquad \text{(By the definition of \Cref{alg:iterative})}.
\end{eqnarray*}
Now, let $k$ be an agent with maximum load in iteration $r$, i.e, $k \in \arg \max_i \ell^{(r)}_i$. We rewrite the above equation as:
\begin{eqnarray*}
     \frac{y_{i,j}}{z_{i,j}}  
     & = & y_{i,j}+ y_{k,j}\cdot\frac{\ell_i^{(r)}}{\elx^{(r)}} + \sum_{i'\in [m] \setminus \{i,k\}} \frac{\ell^{(r)}_{i}}{\ell^{(r)}_{i'}} \cdot y_{i',j}\\
     & \le & y_{i,j}+ y_{k,j}\cdot\frac{\ell_i^{(r)}}{\elx^{(r)}} +  \frac{\ell^{(r)}_{i}}{\eln^{(r)}} \cdot \sum_{i'\in [m] \setminus \{i,k\}} y_{i',j} \quad \text{(since $\ell^{(r)}_{i'} \ge \eln^{(r)}$ for all $i'\in [m]$)}\\
     & \le & 1\cdot y_{i,j}+ \frac{\ell_i^{(r)}}{\elx^{(r)}}\cdot y_{k,j} + \frac{\ell^{(r)}_{i}}{\eln^{(r)}}\cdot \left(\sum_{i'\in [m] \setminus \{i,k\}} y_{i',j}\right).
\end{eqnarray*}
Now, note that $\sum_i y_{i, j} = y_{i, j} + y_{k, j} + \sum_{i'\in [m] \setminus \{i,k\}} y_{i',j} = 1$, i.e., the RHS of the above inequality is a convex combination of $1, \frac{\ell_i^{(r)}}{\elx^{(r)}}$, and $\frac{\ell^{(r)}_{i}}{\eln^{(r)}}$. Now, since $\frac{\ell_i^{(r)}}{\elx^{(r)}} \le 1 \le  \frac{\ell^{(r)}_{i}}{\eln^{(r)}}$, this expression is maximized when $y_{i, j}$ and $y_{k, j}$ are minimized. By \Cref{lem:minimum}, we know $y_{i, j}, y_{k, j} \ge x_{\min}$. Hence, we can write
\begin{align*}
\frac{y_{i,j}}{z_{i,j}} & \leq  \xmin+ \xmin\cdot\frac{\ell_i^{(r)}}{\elx^{(r)}} +  (1-2\cdot \xmin)\cdot \frac{\ell_i^{(r)}}{\eln^{(r)}} 
\\ & \leq \xmin+ \xmin\cdot\frac{1+\delta}{1+\epsilon} +  (1-2\cdot \xmin)\cdot (1+\delta)
\\& \qquad \text{(since $i\in S_s$, $\ell^{(r)}_i\le (1+\delta)\cdot \eln^{(r)}$, and by definition of $\epsilon$, $\elx^{(r)} 
\geq (1+\epsilon)\cdot \eln^{(r)}$)} 
\\ & =   \xmin+ \frac{\xmin}{1+\epsilon} + \frac{\xmin\cdot\delta}{1+\epsilon} + 1+\delta -2\cdot \xmin -2\cdot \xmin\cdot \delta 
\\ & \leq   \xmin+ \frac{\xmin}{1+\epsilon} + \xmin\cdot\delta + 1+\delta -2\cdot \xmin -2\cdot \xmin\cdot \delta \qquad \qquad \text{(Since $\epsilon>0$)}
\\ & = 1 - \xmin\cdot \delta + \delta - \xmin \cdot (1-\nicefrac{1}{1+\epsilon}) 
\\ & = 1 - \xmin\cdot \delta. \qquad \qquad \text{(By the definition of $\delta$ )}
\end{align*}

Therefore, for any $i\in S_s$ and for any  $j\in[n]$, 
\[
    \frac{z_{i,j}}{y_{i,j}} \geq \frac{1}{1 - \xmin\cdot \delta } \geq 1+\xmin\cdot \delta.
\] 

Note that $y_{i,j}\geq \xmin$ by \Cref{lem:minimum}. Hence,
$z_{i,j}- y_{i,j} \geq \xmin^2\cdot \delta$. This establishes \Cref{eq:diff}.

Now, recall that $\tilde{p}_i = \sum_j p_{i, j}$ for all $i\in [m]$. Now, let $\tilde{p}_{\min} = \displaystyle\min_{i\in[n]} \tilde{p}_i$. 
We have 
\[
\ell_i^{(r+1)} 
= \ell_i^{(r)} + \sum_{j\in [n]}(z_{i, j}-y_{i, j})\cdot p_{i, j} 
\geq \eln^{(r)} + \delta\cdot \xmin^2\cdot \tilde{p}_{\min} \quad \text{(by \Cref{eq:diff})}.
\]

Now, let $c_3 = \xmin^2 \cdot \frac{\tilde{p}_{\min}}{\elx^{(0)}}$.
By \Cref{lem:notincreasing}, we have $c_3 \le \xmin^2 \cdot \frac{\tilde{p}_{\min}}{\elx^{(r)}}$, and therefore, $c_3 \leq \xmin^2 \cdot \frac{\tilde{p}_{\min}}{\eln^{(r)}}$ since $\eln^{(r)} \le \elx^{(r)}$. Therefore, we can write the above inequality as:
\[ 
\ell_i^{(r+1)} 
\geq \eln^{(r)} + \delta\cdot \xmin^2\cdot \tilde{p}_{\min}
\ge (1+c_3\cdot \delta)\cdot \eln^{(r)} \quad \text{for all light agents } i\in S_s.
\]
Note that $c_3$ depends only on $P$ and $G$.

Finally, we consider heavy agents. Let $c_4 = \eln^{(0)} \cdot \min_{i\in [m]} \frac{1}{\tilde{p}_i} \leq \eln^{(r)} \cdot \min_{i\in [m]} \frac{1}{\tilde{p}_i}$ by \Cref{lem:notincreasing}.
For all $i\in S_t$, we have 
\begin{eqnarray*}
    \ell_i^{(r+1)} 
    & \geq & \frac{\eln^{(r)}}{\left(1-\frac{\ell^{(r)}_i-\eln^{(r)}}{\tilde{p}_i}\right)} \qquad \text{(by \Cref{lem:notincreasing})}\\
    & \geq & \eln^{(r)}\cdot {\left(1+\frac{\ell^{(r)}_i-\eln^{(r)}}{\tilde{p}_i}\right)} \qquad \text{}\\
    & \geq & \eln^{(r)}\cdot \left(1+\eln^{(r)} \cdot \frac{\delta}{\tilde{p}_i}\right) \qquad \text{(since $i\in S_t$)}\\
    & \geq & \eln^{(r)}\cdot (1+c_4 \cdot \delta).
\end{eqnarray*}    

Thus, we have established that for all agents $i\in [m]$, we have 
\[
    \ell_i^{(r+1)} \ge (1+ \min(c_3, c_4)\cdot \delta))\cdot \eln^{(r)}.
\]
Now, $\delta = \frac{\epsilon}{1+\epsilon}\cdot x_{\min} \geq (\nicefrac{\epsilon}{2})\cdot \xmin$ since $\epsilon \in (0, 1]$.
Let us define $c = \min (c_3, c_4)\cdot (\nicefrac{x_{\min}}{2})$. Therefore, we get that for all agents $i\in [m]$, it holds that 
\[
    \ell_i^{(r+1)} \ge (1+ c\cdot \epsilon)\cdot \eln^{(r)}, \text{ as desired}.
\]
\end{proof}

\end{toappendix}


\section{Monotonicity and Convergence of Exponentiated Proportional Allocations}
\label{sec:ep-monotonicity-convergence}

In this section, we prove the monotonicity and convergence of \ep-allocations (\Cref{thm:exponential}).

First, we establish monotonicity of \ep-allocations (first part of \Cref{thm:exponential}).
We compare two \ep-allocations with arbitrary learned parameters but different exponential constants. We show that with a larger exponent, at least one agent's load will be higher, regardless of the parameters used.

\begin{lemma}
\label{lem:monotoneaid}
Fix a weight matrix $P\in \mps$. Let $\alpha,{{\alpha'}} \in \R$ such that $\alpha > {\alpha'}$. Now, for any two sets of learned parameters $\bw_\alpha,\bw_{\alpha'}\in \vps$, there exists an agent $k\in [m]$ such that 
\[
    \ell_k(P,\alpha,\bw_\alpha) \geq \ell_k(P,{\alpha'},\bw_{\alpha'}).
\]    
\end{lemma}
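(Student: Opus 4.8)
The goal is to show that comparing two \ep-allocations with exponents $\alpha > \alpha'$ and arbitrary learned parameter vectors $\bw_\alpha, \bw_{\alpha'}$, there is always at least one agent whose load under the larger-exponent allocation is at least its load under the smaller-exponent one. The plan is to argue by contradiction: suppose $\ell_k(P,\alpha,\bw_\alpha) < \ell_k(P,\alpha',\bw_{\alpha'})$ for \emph{every} agent $k\in[m]$, and derive an impossibility by a suitable scaling/exchange argument on the parameter vectors, exactly in the spirit of \Cref{obv:scale}, \Cref{obv:monotone}, and the proof of \Cref{lem:uniq}.

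\textbf{Key steps.} First I would normalize so that the per-item weights under the two exponents can be compared on a common footing; concretely, for each item $j$ look at the exponentiated weights $p_{i,j}^\alpha$ versus $p_{i,j}^{\alpha'}$ and observe that raising to the larger power $\alpha$ \emph{sharpens} the profile across agents relative to $\alpha'$ — the ratio $p_{i,j}^{\alpha}/p_{i',j}^{\alpha} = (p_{i,j}/p_{i',j})^{\alpha - \alpha'}\cdot p_{i,j}^{\alpha'}/p_{i',j}^{\alpha'}$ is monotone in the sign of $\log(p_{i,j}/p_{i',j})$. Second, define $\tau_i = w_{\alpha,i}/w_{\alpha',i}$ (the coordinate-wise ratio of the two parameter vectors, cf.\ \Cref{obv:xval}) and let $k$ be the agent minimizing $\tau_i$; rescale $\bw_\alpha$ (which does not change its allocation, by \Cref{obv:scale}) so that $\tau_k = 1$ and $\tau_i \ge 1$ for all $i$. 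The aim is to compare agent $k$'s allocation $x_{k,j}(P^{\alpha},\bw_\alpha)$ against $x_{k,j}(P^{\alpha'},\bw_{\alpha'})$ item by item: for each $j$, the denominator under exponent $\alpha$ is $\sum_{i'} \tau_{i'} w_{\alpha',i'} p_{i',j}^{\alpha}$, and the point is that on agent $k$ the term-by-term comparison, using both $\tau_{i'}\ge 1 = \tau_k$ and the sharpening of the exponentiated profile toward whichever agents have larger $p_{i,j}$, forces $x_{k,j}(P^{\alpha},\bw_\alpha) \ge x_{k,j}(P^{\alpha'},\bw_{\alpha'})$ for every $j$ — or at least that summed against $p_{k,j}$ it cannot strictly decrease on every agent simultaneously. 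Summing over $j$ weighted by $p_{k,j}$ then yields $\ell_k(P,\alpha,\bw_\alpha) \ge \ell_k(P,\alpha',\bw_{\alpha'})$, contradicting the assumption.

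\textbf{Main obstacle.} The delicate point is that the item-wise comparison is \emph{not} uniformly in agent $k$'s favor: for a fixed $j$, whether $x_{k,j}$ goes up or down when we pass from exponent $\alpha'$ to $\alpha$ depends on whether $p_{k,j}$ is above or below a weighted geometric-mean-like threshold of the other $p_{i,j}$'s, so a naive per-item argument fails. The resolution I expect to need is a global averaging argument rather than a per-item one: one must show that the agent $k$ selected as $\argmin_i \tau_i$ (so that its parameter is relatively smallest under $\alpha$) is precisely the agent for whom the sharpening effect cannot hurt in aggregate — intuitively, decreasing $k$'s relative parameter while simultaneously sharpening toward large-$p$ agents can only help $k$ \emph{if} $k$ is already the ``weakest'' agent, and the contradiction hypothesis (every agent strictly loses) is what pins this down. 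So the heart of the proof is combining the contradiction hypothesis with the $\argmin$ choice of $k$ to rule out the bad case; I would expect the write-up to isolate this as the one nontrivial inequality, with \Cref{obv:scale} and \Cref{obv:monotone} doing the bookkeeping around it.
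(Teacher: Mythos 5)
Your proposal correctly identifies the shape of the problem (ruling out that the difference vector $\Delta_i = \ell_i(P,\alpha,\bw_\alpha)-\ell_i(P,\alpha',\bw_{\alpha'})$ is entirely negative) and correctly flags that a per-item comparison on a single agent cannot work, but it never closes the gap it opens. The resolution you sketch --- pick $k=\argmin_i \tau_i$ where $\tau_i = w_{\alpha,i}/w_{\alpha',i}$ and argue that the exponent sharpening ``cannot hurt in aggregate'' for that $k$ --- is not just unproved, it is false. Take $m=2$, one item with $p_1 = 2$, $p_2 = 1$, $\alpha=1$, $\alpha'=0$, $\bw_{\alpha'}=(1,1)$, $\bw_{\alpha}=(1,10)$. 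Then $\tau_1=1<\tau_2=10$, so your rule selects $k=1$, yet $\ell_1(\alpha')=1$ while $\ell_1(\alpha)=2\cdot\frac{2}{12}=\tfrac13$: the argmin agent's load strictly \emph{decreases}, and it is agent $2$ whose load increases. Replacing $\argmin$ by $\argmax$ does not save the argument either: with one item, $p_1=1$, $p_2=100$, $\alpha=1$, $\alpha'=0$, $\bw_{\alpha'}=(1,1)$, $\bw_\alpha=(2,1)$, the argmax agent is $1$, but $\ell_1$ drops from $\tfrac12$ to $\tfrac{2}{102}$. So no single fixed rule on $\tau_i$ picks out the right agent; the identity of the agent whose load does not decrease depends jointly on $P$ and both parameter vectors.

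The paper's proof avoids ever naming such an agent. It takes the inner product of $\Delta$ with the positive weight vector $c_i = \tau_i^{1/\rho}$, $\rho=\alpha-\alpha'$, and shows $\sum_i c_i\,\Delta_i \ge 0$, which suffices since a strictly positive vector cannot have nonnegative inner product with a strictly negative one. After switching the order of summation, the claim reduces to a per-item inequality $\sum_i c_i\, p_{i,j}(x_{i,j}-x'_{i,j})\ge 0$, which, after the substitutions $a_i = w_{\alpha',i}p_{i,j}^{\alpha'}$, $b_i = p_{i,j}c_i$, $z_i = a_i^{1/2}$, $y_i = a_i^{1/2}b_i^{(\rho+1)/2}$, $\theta = |\rho-1|/(\rho+1)$, is exactly an instance of Callebaut's inequality. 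The crucial idea your sketch is missing is precisely this choice of weighting $c_i$: it is tuned so that the per-item quantity becomes a Callebaut-type expression rather than something that would require identifying a winning agent.
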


\begin{proof}

Let $\Ldiff$ denote the vector of differences of loads of the machines in the two allocations, 
namely $\Ldiff_i = \ell_i(P,\alpha,\bw_\alpha) - \ell_i(P,{\alpha'},\bw_{\alpha'})$.
Our goal is to show that $\Ldiff$ has at least one nonnegative coordinate.

To show this, we define a vector in the positive orthant 
$\mathbf{c} \in \vps$ as follows:
\[
    c_i = \left(\frac{w_{\alpha,i}}{w_{{\alpha'},i}}\right)^{\frac{1}{\rho}}, 
    \text{ where } \rho=\alpha-{\alpha'} > 0
\]    
and show that this vector $\bf c$ has a nonnegative inner product with the vector 
$\Ldiff$. Note that this suffices since the inner product of a vector with all
positive coordinates and one with all negative coordinates cannot be nonnegative.
In other words, we want to show the following:
\begin{equation}\label{eq:overall}
    \sum_{i\in[m]} c_i \cdot (\ell_i(P,\alpha,w_\alpha) - \ell_i(P,{\alpha'},w_{\alpha'})) \geq 0.
\end{equation}  

Let us denote the fractional allocation of an item $j$ in the two cases by $x_{i, j}$
and $x'_{i, j}$ respectively. Then, \Cref{eq:overall} can be rewritten as
\[
    \sum_{i\in [m]} c_i \cdot \sum_{j\in [n]} p_{i, j}\cdot (x_{i, j} - x'_{i, j})\geq 0.
\]
Changing the order of the two summations, we rewrite further as
\[
    \sum_{j\in [n]} \left(\sum_{i \in [m]}  c_i\cdot p_{i, j}\cdot (x_{i, j} - x'_{i, j})\right)\geq 0.
\]
We will prove this inequality separately for each item $j\in [n]$. Namely, we will show that 
\begin{equation}\label{eq:perjob}
    \sum_{i\in [m]}  c_i\cdot p_{i, j}\cdot (x_{i, j} - x'_{i, j}) \ge 0, \text{ for every } j\in [n].
\end{equation}

Fix an item $j$. Since the item is fixed, we will drop $j$ from the notation and define 
$\bu \in \R^m$ as
\[
u_{i} = p_{i}\cdot (x_{i}-x'_{i}).
\]
So, we need to show that 
\begin{equation}\label{eq:monotonicity}
    {\bf c}\cdot \bu \ge 0, \text{ i.e., } \sum_{i\in [m]} c_i \cdot u_{i} \geq 0.
\end{equation}    

We have
\begin{eqnarray*}
\sum_i c_i \cdot u_i 
& = & \sum_i {c_i \cdot p_i \cdot \left(\frac{p_i^\alpha\cdot w_{\alpha,i}}{\sum_{i'}p_{i'}^\alpha\cdot w_{\alpha,i'}}
    -\frac{p_i^{\alpha'}\cdot w_{{\alpha'},i}}{\sum_{i'}p_{i'}^{\alpha'}\cdot w_{{\alpha'},i'}}\right)}\\
&=& \frac{1}{T}\cdot \sum_i {c_i \cdot p_i \cdot \left({
p_i^\alpha\cdot w_{\alpha,i}\cdot \left( \sum_{i'}p_{i'}^{\alpha'}\cdot w_{{\alpha'},i'}\right) -
p_i^{\alpha'}\cdot w_{{\alpha'},i}\cdot \left( \sum_{i'}p_{i'}^\alpha\cdot w_{\alpha,i'}\right)
}\right)} \\
& & \qquad \text{ where } T=\left(\sum_{i'}p_{i'}^{\alpha'}\cdot w_{{\alpha'},i'}\right)\cdot\left(\sum_{i'}p_{i'}^\alpha\cdot w_{\alpha,i'}\right).\\
\end{eqnarray*}
Now, on the right hand side of the above equation, we replace $\alpha$ by $\alpha'+\rho$ and $w_{\alpha, i}$ by $w_{\alpha', i}\cdot c_i^\rho$ for every $i\in [m]$. This gives us:
\begin{flalign*}
&\sum_i c_i \cdot u_i 
= \\
&\frac{1}{T} \sum_i {c_i \cdot p_i  \left({p_i^{\alpha'}\cdot p_i^\rho\cdot w_{{\alpha'},i}\cdot c_{i}^\rho \left( \sum_{i'}p_{i'}^{\alpha'}\cdot w_{{\alpha'},i'}\right) - p_i^{\alpha'}\cdot w_{{\alpha'},i} \left( \sum_{i'}p_{i'}^{\alpha'}\cdot p_{i'}^\rho\cdot w_{{\alpha'},i'}\cdot c_{i'}^\rho\right)}\right)} \\
&= \frac{1}{T}  \sum_i {b_i  \left({a_i\cdot b_{i}^\rho \left( \sum_{i'}a_{i'}\right) - a_i \left( \sum_{i'}a_{i'}\cdot b_{i'}^\rho\right)}\right)}, \\
& \qquad \text{ where } a_i = w_{{\alpha'},i}\cdot p_i^{\alpha'} \text{ and } b_i = p_i \cdot c_i.
\end{flalign*}
Rearranging the summations on the two terms on the right hand side, we get
\begin{equation*}
\sum_i c_i \cdot u_i 
= \frac{1}{T}\cdot  \left( \sum_{i'}a_{i'}\right) \cdot \sum_i { a_i\cdot b_{i}^{\rho+1}} - \frac{1}{T}\cdot \left( \sum_{i'}a_{i'} \cdot b_{i'}^\rho\right) \cdot \sum_i { a_i\cdot b_{i}}
\end{equation*}
Now, let $z_i = a_i^{1/2}$, and $y_i=a_i^{1/2} \cdot b_i^{\rho/2+1/2}$, and $\theta=\frac{|\rho-1|}{\rho+1}$. Then, we have
\begin{eqnarray*}
T\cdot \sum_i c_i \cdot u_i 
& = & \left(\sum_{i'}a_{i'}\right)\cdot \left(\sum_i { a_i\cdot b_{i}^{\rho+1}}\right) - \left(\sum_{i'}a_{i'} \cdot b_{i'}^\rho \right)\cdot \left(\sum_i { a_i\cdot b_{i}}\right) \\
&=&  \left(\sum_{i'}z_{i'}^2\right) \cdot \left(\sum_i { y_i^2}\right) - \left(\sum_{i'}z_{i'}^{1+\theta}\cdot y_{i'}^{1-\theta}\right) \cdot \left(\sum_i { z_i^{1-\theta}\cdot y_i^{1+\theta}}\right).
\end{eqnarray*}
In the last equation, the first term follows directly from $a_{i'} = z_{i'}^2$ and $a_i\cdot b_{i}^{\rho+1} = y_i^2$. The second term is more complicated. There are two cases. If $\rho \le 1$, then $a_{i'} \cdot b_{i'}^\rho = z_{i'}^{1+\theta}\cdot y_{i'}^{1-\theta}$ and $a_i\cdot b_{i} = z_i^{1-\theta}\cdot y_i^{1+\theta}$ but if $\rho > 1$, then the roles get reversed and we get $a_{i'} \cdot b_{i'}^\rho = z_{i'}^{1-\theta}\cdot y_{i'}^{1+\theta}$ and $a_i\cdot b_{i} = z_{i}^{1+\theta}\cdot y_{i}^{1-\theta}$.

Now, note that $T\ge 0$. So, to establish $\sum_i c_i \cdot u_i \ge 0$, it suffices to show that the right hard side of the equation is nonnegative. We do so by employing  Callebaut's inequality which we state below:
\begin{fact}[Callebaut's Inequality~\cite{callebaut1965generalization}] \label{fact:callebaut}
For any $y,z \in \R^n$ and $\theta \le 1$, we have
$$\left(\sum_{i'}z_{i'}^2\right) \cdot \left(\sum_i { y_i^2}\right) \geq \left(\sum_{i'}z_{i'}^{1+\theta}\cdot y_{i'}^{1-\theta}\right) \cdot \left(\sum_i { z_i^{1-\theta}\cdot y_i^{1+\theta}}\right) $$
\end{fact}
Note that we can apply Callebaut's inequality because $\rho\geq0$ implies that $\theta\leq 1$. This completes the proof of the lemma.

\end{proof}


%


\begin{lemma}\label{lem:convergence}
    Given any weight matrix $P \in \mps$ and any constant $\epsilon > 0$, 
    \begin{enumerate}
        \item[(a)] there exists an $\alpha$ (think of $\alpha$ as a sufficiently large negative number) and a corresponding set of parameters $\bw_\alpha$ such that $\ell_i(P,\alpha,w_\alpha) \leq (1+\epsilon) \cdot \lmks(P)$ for all $i\in[m]$.
        \item[(b)] there exists an $\alpha'$ (think of $\alpha'$ as a sufficiently large positive number) and a corresponding set of parameters $\bw_{\alpha'}$ such that $\ell_i(P,\alpha',w_{\alpha'}) \geq (1-\epsilon) \cdot \lsan(P)$ for all $i\in[m]$.
    \end{enumerate}
\end{lemma}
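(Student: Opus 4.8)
The plan is to establish each part by exhibiting an explicit \ep-allocation that, in the limit of large $|\alpha|$, concentrates each item on the ``right'' agents. Consider part (b) first (the Santa Claus / \maxmin direction with $\alpha\to+\infty$). Let $X^\star$ be an optimal fractional solution to the Santa Claus problem, so every agent has load at least $\lsan(P)$ under $X^\star$. The key idea is that an \ep-allocation with exponent $\alpha$ and \emph{uniform} parameters $\bw=\mathbf{1}^m$ places, for each item $j$, almost all of $j$ on the agent(s) maximizing $p_{i,j}$; more generally, by tuning $\bw$ we can bias the allocation toward any prescribed ``support'' pattern. Concretely, I would fix a small $\eta>0$, round $X^\star$ so that its positive entries are bounded below by some $\eta>0$ (losing only a $(1-\epsilon/2)$ factor in the objective by homogeneity and the fact that finitely many items are involved — here is where the precision parameter enters), and then choose parameters $w_{\alpha,i}$ that compensate for the exponentiation: informally, set $w_{\alpha,i}$ so that $w_{\alpha,i}\,p_{i,j}^\alpha$ is comparable across the agents in the support of item $j$ in $X^\star$ and negligible for agents outside it. As $\alpha\to\infty$ the resulting allocation $x_{i,j}$ converges to something dominated by the support pattern of $X^\star$, and a continuity/limit argument shows the loads converge to at least $(1-\epsilon)\lsan(P)$.

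Part (a) is entirely symmetric: take an optimal load-balancing solution $X^\star$ with $\max_i \ell_i(P,X^\star)=\lmks(P)$, and run an \ep-allocation with $\alpha$ a large negative number. Now $p_{i,j}^\alpha$ is \emph{large} exactly when $p_{i,j}$ is small, so uniform parameters already push each item toward the agents that find it cheap; again, by choosing $\bw$ appropriately we can make the \ep-allocation mimic the support structure of $X^\star$, and taking $\alpha\to-\infty$ drives the loads to at most $(1+\epsilon)\lmks(P)$.

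A cleaner way to organize both parts, which I would prefer to carry out, is to fix any feasible target allocation $X^\star$ and prove the general statement: for every $\epsilon>0$ there is an $\alpha$ (large positive for the $p^\alpha$ direction, large negative for the other) and parameters $\bw_\alpha$ such that the \ep-allocation is entrywise within $\epsilon$ of $X^\star$ on every item. Given such a closeness statement, both (a) and (b) follow by plugging in the relevant optimal $X^\star$ and using monotonicity of the objective (the loads change by at most $\sum_j p_{i,j}\cdot\epsilon$, which can be absorbed into the multiplicative error after rescaling, using homogeneity). To prove the closeness statement, for item $j$ let $S_j=\{i : x^\star_{i,j}>0\}$; pick $\bw_\alpha$ with $w_{\alpha,i}=x^\star_{i,j}\cdot c_j\cdot p_{i,j}^{-\alpha}$-type weights on $S_j$ — but since $\bw$ must be a single vector not depending on $j$, one instead argues coordinate by coordinate using a limiting/compactness argument over the finitely many items, or uses that only the relative order of the $p_{i,j}$ across agents matters in the limit.

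The main obstacle is precisely this last point: the parameter vector $\bw_\alpha$ is global (one scalar per agent) while the ``right'' allocation for each item wants a locally tailored weighting, so one cannot simply set $\bw_\alpha$ per item. The resolution is the asymptotic regime — as $|\alpha|\to\infty$, the \ep-allocation becomes insensitive to the polynomial-in-$\alpha$ part of $\bw$ and is governed only by which agents attain the (near-)extreme value of $p_{i,j}$; so it suffices to choose $\bw_\alpha$ (e.g.\ of the form $w_{\alpha,i}=\beta_i^{\alpha}$ for suitable $\beta_i$, or $w_{\alpha,i}$ polynomial in $\alpha$) that encodes a total order consistent with the supports of a near-optimal $X^\star$, and then a careful $\epsilon$–$\alpha$ chase closes the argument. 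I expect the bookkeeping of this limit — making the ``negligible'' terms quantitatively negligible uniformly over all items simultaneously — to be the technical heart of the proof.
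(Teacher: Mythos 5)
Your proposal captures the right target---constructing parameters of the form $w_{\alpha,i}\propto \beta_i^{\alpha}$ so that as $|\alpha|\to\infty$ the \ep-allocation concentrates on a prescribed support pattern---and you correctly identify the central obstacle: the parameter vector is a \emph{single} vector indexed by agents, while a generic optimal $X^\star$ has item-by-item support structure. But the proposal stops exactly where the real work begins. The crux is \emph{why} a single ratio vector $\bu\in\vps$ exists such that the support of some optimal $X^\star$ is exactly $S_j=\argmin_i (p_{i,j}/u_i)$ for every $j$ simultaneously (for the \minmax case; $\argmax$ for \maxmin). This is not a bookkeeping step or a ``careful $\epsilon$--$\alpha$ chase'': if you pick an arbitrary optimal $X^\star$, its support need not be of this form at all, and without a consistent $\bu$ no choice of $\beta_i$ makes the \ep-allocation converge to it. The phrase ``encodes a total order consistent with the supports'' is too weak; what is needed is a vector of reals (not just an ordering), and its existence is a nontrivial combinatorial fact.

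The paper establishes this via an auxiliary directed graph on $[m]\cup\{0\}$ with edge costs $c_{i,k}=\ln\bigl(\min_{j:x^*_{i,j}>0} p_{k,j}/p_{i,j}\bigr)$. It first proves (using the exchange argument from Lemma~\ref{lem:opt-equal}, that every optimal solution has uniform loads) that this graph has no negative cycle; then $u_i=e^{c^*_i}$ for the shortest-path potentials $c^*_i$ gives precisely the desired ratio vector (Lemmas~\ref{lem:negweight} and \ref{lem:restrcitedtransformation}). A second ingredient you do not mention is Sinkhorn's theorem: even after the supports $S_j$ are characterized, you still need to show that \emph{within} each $S_j$ a plain proportional allocation by some fixed vector $\delta_i$ reproduces the optimal loads. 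That is exactly what the doubly-stochastic scaling argument (Theorem~\ref{thm:sinkhorn} / Corollary~\ref{cor:sinkhorn}) supplies, and it does not follow from continuity or a compactness argument over finitely many items. With $\bu$ and $\delta$ in hand, the paper then sets $\hat{w}_{\alpha,i}=\delta_i/u_i^\alpha$ and the limit $\alpha\to\mp\infty$ is indeed routine, matching your closing sketch. So: right asymptotic picture and right obstacle, but the two structural lemmas (no negative cycle $\Rightarrow$ a global ratio vector $\bu$, and Sinkhorn $\Rightarrow$ the in-support weights $\delta$) are the actual content of the proof, and your plan does not contain them.
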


Using \Cref{lem:convergence}, we complete the proof of \Cref{thm:exponential}.
\begin{proof}[Proof of Theorem~\ref{thm:exponential}]
First by Lemma~\ref{lem:unique}, there exists $\bw^*_\alpha$ and $\bw^*_{\alpha'}$,
such that, for all $i \in [m]$, $ \ell_{i}(P,\alpha,\bw^*_\alpha) = \ell^*(P,\alpha)$ and $ \ell_{i}(P,{\alpha'},\bw^*_{\alpha'}) = \ell^*(P,{\alpha'})$. Now, if $\ell^*(P,\alpha) < \ell^*(P,{\alpha'})$, it would contradict Lemma~\ref{lem:monotoneaid}.
And combining \Cref{lem:monotoneaid} and \Cref{lem:convergence}, we completed the proof the second part of \Cref{thm:exponential}.

\end{proof}

\begin{toappendix}
\subsection{Proof of \Cref{lem:convergence}}

We will only prove property (a) in \Cref{lem:convergence} for the \minmax problem. Property (b) for the \maxmin problem has a symmetric proof which is omitted for brevity.

\smallskip\noindent{\bf Properties of an optimal Solution for the \minmax problem.}
First, we establish some properties of an optimal solution. First, we prove the following simple property:
\begin{lemma}
\label{lem:opt-equal}
    Given a weight matrix $P$, the load of every agent $i\in [m]$ in any optimal allocation must be exactly equal to the objective value $\lmks(P)$.
\end{lemma}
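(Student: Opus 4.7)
The plan is to prove this by contradiction. Suppose $X$ is an optimal allocation for \minmax on $P$ and yet there exists some agent $i^\star \in [m]$ with $\ell_{i^\star}(P, X) < T^\star$, where $T^\star = \lmks(P)$. I will construct a small feasible perturbation $X'$ of $X$ satisfying $\max_i \ell_i(P, X') < T^\star$, contradicting the optimality of $X$.

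First, partition the agents into $S_= = \{i : \ell_i(P, X) = T^\star\}$ and $S_< = \{i : \ell_i(P, X) < T^\star\}$. By the definition of $T^\star = \max_i \ell_i(P, X)$ the set $S_=$ is nonempty, and by the contradiction hypothesis so is $S_<$. Since $n \ge 1$ and every weight $p_{i,j}$ is strictly positive (as stipulated globally in the problem setup), summing over items shows $T^\star > 0$, so every $k \in S_=$ has $\ell_k > 0$ and therefore must have $x_{k, j_k} > 0$ for some item $j_k$. Fix such a $j_k$ for each $k \in S_=$, and fix any destination $i^\star \in S_<$.

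The perturbation is then: for each $k \in S_=$, decrease $x_{k, j_k}$ by $\delta$ and increase $x_{i^\star, j_k}$ by $\delta$, where $\delta > 0$ is a parameter. All column sums remain equal to $1$ (if a single item $j$ is the chosen $j_k$ for several different $k$'s, the decreases in those rows and the accumulated increase in row $i^\star$ still balance). For $\delta$ smaller than $\min_{k \in S_=} x_{k, j_k}$, the new allocation $X'$ is nonnegative, hence feasible. The only load changes are $\ell'_k = \ell_k - \delta \cdot p_{k, j_k}$ for $k \in S_=$ and $\ell'_{i^\star} = \ell_{i^\star} + \delta \cdot \sum_{k \in S_=} p_{i^\star, j_k}$. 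By taking $\delta > 0$ strictly smaller than $(T^\star - \ell_{i^\star})/\sum_{k \in S_=} p_{i^\star, j_k}$ as well, we guarantee $\ell'_{i^\star} < T^\star$, while $\ell'_k < T^\star$ for each $k \in S_=$ follows from $p_{k, j_k} > 0$. All remaining loads are unchanged and were already at most $T^\star$, so $\max_i \ell_i(P, X') < T^\star$, contradicting optimality.

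The only minor subtlety is bookkeeping when the same item is chosen by several agents in $S_=$ (so that one column of $X$ is modified more than once); but as noted above, the decreases and cumulative increase cancel, so the column-sum constraint still holds. There is no deeper obstacle: the argument relies only on the strict positivity of the weights and on the freedom to perform an infinitesimal rebalancing of mass from any maximally loaded agent to an underloaded one.
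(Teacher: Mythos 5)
Your proof is correct and takes essentially the same approach as the paper, which gives only a one-line sketch of the same infinitesimal rebalancing argument (shifting a small fraction of items from the loaded agents to the underloaded one); you simply carry out the bookkeeping in full detail.
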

\begin{proof}
    If not and there is some agent $k$ with a strictly lower load, then we can remove an infinitesimally small allocation of items from every other agent and assign it to agent $k$ to reduce the objective of the overall allocation.
\end{proof}

Now, given a weight matrix $P\in \vps$ and an optimal solution $x^*$ for the \minmax objective, we define an auxiliary directed graph $G_{x^*}(V,E)$ as follows:
\begin{itemize}
    \item The set of vertices $V = [m]\cup \{0\}$, i.e., the agents and a special vertex labeled $0$.
    \item The set of edges $E = ([m]\times [m]) \cup (\{0\} \times [m])$, i.e., all edges between (ordered) pairs of vertices representing the agents (including self loops) and edges from the special vertex to all the vertices representing the agents. Note that the set of vertices and edges does not depend on $x^*$.
    \item We now define a cost function on the edges that does depend on $x^*$. Edges in $[m]\times [m]$ have the following costs:
    \[ 
        c_{i,k} = \ln \left(\min_{j\in [n]}\left\{\frac{p_{k,j}}{p_{i,j}} \;\bigg\vert\; x^*_{i,j}>0\right\}\right).
    \]
    In other words, the cost of an edge $(i, k)$ is the logarithm of the minimum ratio of the weight of an item for $k$ to that for $i$ among those items that have a non-zero allocation to agent $i$ in $x^*$.
    In addition, all edges incident on the special vertex have cost $0$, i.e., $c_{0,k} = 0$ for all $k\in[m]$.
\end{itemize}

Next, we will show that this graph $G_{x^*}$ does not contain a negative cycle, i.e. a cycle whose edge costs sum to a negative value.

\begin{lemma}
\label{lem:negweight}
    Given a processing matrix $P$ and an optimal solution $x^*$ resulting in an objective value of $\lmks(P)$ for the \minmax problem, the auxiliary graph $G_{x^*}$ does not contain a negative cycle.
\end{lemma}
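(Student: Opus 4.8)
The plan is to argue by contradiction: assuming $G_{x^*}$ contains a negative cycle, I will perturb $x^*$ into another \emph{feasible} allocation that is still optimal for the \minmax objective yet has an agent whose load is strictly below $\lmks(P)$, contradicting Lemma~\ref{lem:opt-equal}.

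First I would reduce to a clean combinatorial object. Since vertex $0$ has no incoming edges it cannot lie on any cycle, so a negative cycle is contained entirely in $[m]$; since a self-loop has cost $c_{i,i}=\ln 1=0$, and any negative closed walk decomposes into simple cycles at least one of which is negative, we may take the negative cycle to be a \emph{simple} cycle $i_1\to i_2\to\cdots\to i_t\to i_1$ on distinct agents with $t\ge 2$. For each edge $(i_s,i_{s+1})$ pick an item $j_s$ attaining the minimum defining $c_{i_s,i_{s+1}}$; by definition $x^*_{i_s,j_s}>0$ and $r_s:=p_{i_{s+1},j_s}/p_{i_s,j_s}=e^{c_{i_s,i_{s+1}}}$, and negativity of the cycle gives $\prod_{s=1}^t r_s=e^{\sum_s c_{i_s,i_{s+1}}}<1$. (By Lemma~\ref{lem:opt-equal} every agent has load $\lmks(P)>0$, so each $\{j:x^*_{i,j}>0\}$ is nonempty and these minima are taken over nonempty sets.)

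Next I would define a ``circulation'' along the cycle. Set $m_s:=\prod_{s'=1}^{s-1} r_{s'}$ (so $m_1=1$, empty product), and for a small parameter $\lambda>0$ move a $\lambda m_s/p_{i_s,j_s}$ fraction of item $j_s$ from agent $i_s$ to agent $i_{s+1}$, for every $s$. For $\lambda$ small enough the result $X'$ is feasible: the only entries that decrease are some $x^*_{i_s,j_s}>0$, by an $O(\lambda)$ amount, so all entries stay in $[0,1]$, and each column sum stays $1$ since every operation is a transfer. Because the cycle is simple, agent $i_s$ is the tail of exactly edge $s$ and the head of exactly edge $s-1$, so its load changes by exactly $\lambda\bigl(m_{s-1}r_{s-1}-m_s\bigr)$ (indices mod $t$); by the choice of $m_s$ this telescopes to $0$ for $s=2,\dots,t$ and equals $\lambda\bigl(\prod_{s=1}^t r_s-1\bigr)<0$ for $s=1$, while agents outside the cycle are untouched. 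Hence $\ell_{i_1}(P,X')<\lmks(P)$ and every other load equals $\lmks(P)$, so $\max_i\ell_i(P,X')=\lmks(P)$, i.e.\ $X'$ is optimal — contradicting Lemma~\ref{lem:opt-equal}, which forces every agent in an optimal allocation to have load exactly $\lmks(P)$. (If the cycle happens to contain all $m$ agents, then $X'$ already achieves objective value $<\lmks(P)$, contradicting optimality directly.)

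The feasibility check and the per-agent load computation are routine; the one place to be slightly careful is the reduction to a \emph{simple} negative cycle, since it is distinctness of the cycle's agents that makes the load bookkeeping a clean telescoping product. This is the only real obstacle, and it is minor.
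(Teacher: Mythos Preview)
Your proof is correct and follows essentially the same approach as the paper's: assume a negative cycle, pick the witnessing items, shift fractional mass along the cycle with multiplicatively chosen transfer amounts so that all intermediate agents' loads are unchanged while one agent's load strictly decreases, then invoke Lemma~\ref{lem:opt-equal} for the contradiction. The only difference is that you are a bit more explicit than the paper about reducing to a \emph{simple} cycle and about the edge case where the cycle hits all $m$ agents; the paper leaves these implicit.
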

\begin{proof}
Suppose not, and let $i_1, \dots,i_k, i_{k+1}(=i_1)$ be a negative cycle in $G_{x^*}$. Now, let $j_1,\dots j_k$ be the items that determine the edges costs in this cycle, i.e. 
$j_r = \arg\min_{j\in [n]} \left\{ \frac{p_{i_{r+1},j}}{p_{i_r,j}} \;\big\vert\; x^*_{i_r,j}>0\right\}$.
We have:
\begin{equation}
    \label{eq:negcyc}
\sum_{r=1}^k c_{i_r,i_{r+1}}   = \sum_{r=1}^k  \ln \left(\frac{p_{i_{r+1},j_r}}{p_{i_{r},j_r}}\right) =
\ln \left(\prod_{r=1}^{k} \frac{p_{i_{r+1},j_r}}{p_{i_{r},j_r}}\right) <  0
\end{equation}
Now, define an alternate allocation $x'$ where
$x'_{i_r,j_r} = x^*_{i_r,j_r} - \epsilon_r$,
$x'_{i_{r+1},j_{r}} = x^*_{i_{r+1},j_r} + \epsilon_r$, 
and $x'_{i,j}=x^*_{i,j}$ for all other $i,j$ pairs.
Set $\epsilon_{r+1} = \epsilon_{r} \cdot \frac{p_{i_{r+1},j_{r}}}{p_{i_{r+1},j_{r+1}}}$ with $\epsilon_1 > 0$ being chosen small enough such that $x'$ is a feasible solution (i.e., none of the allocations is negative in $x'$). Note that $x^*_{i_r,j_r} > 0$ and therefore $x^*_{i_{r+1},j_r} < 1$ which implies that such an $\epsilon_1$ exists.

Now, for $r \in [k-1]$ we have $\ell'_{i_{r+1}} - \ell^*_{i_{r+1}} = \epsilon_r \cdot p_{i_{r+1}, j_{r}} - \epsilon_{r+1} \cdot p_{i_{r+1}, j_{r+1}} = 0$. This leaves us to compare the load of $i_1$ in the two allocations. We have
    
$$\ell'_{i_1} - \ell^*_{i_1} 
= \epsilon_k\cdot p_{i_1,j_k} -\epsilon_1 \cdot p_{i_1,j_1}  
= \epsilon_1 \cdot p_{i_1,j_1}\cdot \left(\frac{p_{i_1,j_k}}{p_{i_1,j_1}}\cdot\prod_{r=1}^{k-1} \frac{p_{j_{r},i_{r+1}}}{p_{j_{r+1},i_{r+1}}} -1\right) 
= \epsilon_1 \cdot  p_{i_1,j_1} \cdot \left(\prod_{r=1}^{k} \frac{p_{i_{r+1,j_r}}}{p_{i_r,j_r}} -1\right) 
< 0,
$$ where the last inequality is by (\ref{eq:negcyc}).

This means that the load of an agent can be decreased while keeping all other agents at the same load. But, by \Cref{lem:opt-equal}, all agents must have equal load before this modification since we started with an optimal allocation. This means that after the modification, there is an optimal solution (note that the maximum load has not increased) where the load of agent $i_1$ is lower than the optimal objective $\lmks(P)$. This contradicts \Cref{lem:opt-equal} and therefore completes the proof of this lemma.
\end{proof}

For any agent $i\in [m]$, \Cref{lem:negweight} allows us to define $c^*_i$ as the minimum cost of a path from vertex $0$ to vertex $i$ in the auxiliary graph $G_{x^*}$.
We now define a {\em ratio vector} $\bu \in \vps$, where $u_{i} = e^{c^*_{i}}$.
Our next goal is to show that in the solution $x^*$, the set of agents $S_j$ that an item $j\in [n]$ is allocated to (i.e., $x^*_{i, j} > 0$ if and only if $i\in S_j$) is given by $S_j = \argmin_{i\in [m]}\left(\frac{p_{i,j}}{u_i}\right)$.

\begin{lemma}
\label{lem:restrcitedtransformation}
Given a weight matrix $P$ and an optimal solution $x^*$ resulting in $\lmks(P)$ load, we have that any $i\in [m]$, $j\in [n]$, if there exists some agent $k\in [m]$ such that $\frac{p_{k,j}}{u_k} < \frac{p_{i,j}}{u_i}$, then $x^*_{i,j} = 0$.
\end{lemma}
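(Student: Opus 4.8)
The plan is a short proof by contradiction exploiting the shortest-path (triangle) inequality in the auxiliary graph $G_{x^*}$, which is well-defined precisely because \Cref{lem:negweight} rules out negative cycles. Suppose for contradiction that $x^*_{i,j} > 0$ even though there is an agent $k$ with $\frac{p_{k,j}}{u_k} < \frac{p_{i,j}}{u_i}$.

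The key step is to look at the edge $(i,k) \in [m]\times[m]$ of $G_{x^*}$, whose cost is $c_{i,k} = \ln\big(\min_{j'\in[n]}\{ p_{k,j'}/p_{i,j'} \mid x^*_{i,j'}>0\}\big)$. Since we assumed $x^*_{i,j} > 0$, item $j$ is one of the items in this minimization, so $c_{i,k} \le \ln\big(p_{k,j}/p_{i,j}\big)$. Now, because $c^*_k$ is the minimum cost of a $0$-to-$k$ path and $c^*_i$ is the cost of a $0$-to-$i$ path, concatenating the optimal $0$-to-$i$ path with the edge $(i,k)$ gives a $0$-to-$k$ walk; absence of negative cycles (\Cref{lem:negweight}) then guarantees $c^*_k \le c^*_i + c_{i,k} \le c^*_i + \ln\big(p_{k,j}/p_{i,j}\big)$.

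Exponentiating and recalling $u_i = e^{c^*_i}$, this yields $u_k \le u_i \cdot \frac{p_{k,j}}{p_{i,j}}$, i.e.\ $\frac{p_{i,j}}{u_i} \le \frac{p_{k,j}}{u_k}$, directly contradicting the hypothesis $\frac{p_{k,j}}{u_k} < \frac{p_{i,j}}{u_i}$. Hence $x^*_{i,j} = 0$.

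The argument is essentially mechanical once the auxiliary graph and the ratio vector $\bu$ are in place; the only subtlety — and the single place where previous work in the excerpt is invoked — is justifying the triangle inequality $c^*_k \le c^*_i + c_{i,k}$, which requires \Cref{lem:negweight} (no negative cycles, so shortest-path distances are finite and subadditive along edges). I do not anticipate any genuine obstacle beyond being careful that $j$ genuinely participates in the minimum defining $c_{i,k}$, which is exactly where the assumption $x^*_{i,j}>0$ is used.
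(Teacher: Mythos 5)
Your proof is correct and follows essentially the same argument as the paper's: bound $c_{i,k}$ by $\ln(p_{k,j}/p_{i,j})$ using $x^*_{i,j}>0$, apply the shortest-path triangle inequality $c^*_k \le c^*_i + c_{i,k}$, and derive a contradiction with the hypothesis after exponentiating. The only cosmetic difference is that the paper folds the hypothesis in earlier to reach the contradiction $c^*_k < c^*_k$, whereas you exponentiate first and contradict the ratio inequality; both are the same chain of inequalities.
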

\begin{proof}
Suppose not. Then, by definition the cost of edge $(i, k)$ in the auxiliary graph $G_{x^*}$ satisfies 
\[
    c_{i,k} \leq \ln\left(\frac{p_{k,j}}{p_{i,j}}\right) < \ln(u_k)-\ln(u_i)
\]    
while 
\[ 
    c^*_{k} \leq c^*_{i} + c_{i,k} < \ln(u_i)+\ln(u_k)-\ln(u_i) = \ln(u_k) = c^*_{k},
\]    
which is a contradiction.
\end{proof}
The following is an immediate corollary:
\begin{corollary}
\label{col:restrcitedtransformation}
For any item $j \in [n]$ and agent $i\in[m]$ such that $x^*_{i,j}>0$, it must be that $i\in S_j$ where $S_j = \arg\min \left\{i \in [m]: \frac{p_{i,j}}{u_i}\right\}$.
\end{corollary}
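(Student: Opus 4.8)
The plan is to read the corollary off the contrapositive of \Cref{lem:restrcitedtransformation}, so no new work is required. Fix an item $j\in[n]$ and an agent $i\in[m]$ with $x^*_{i,j}>0$, and recall $S_j=\argmin_{k\in[m]}\frac{p_{k,j}}{u_k}$, which is nonempty and well-defined since $[m]$ is finite and every $p_{k,j}$ and $u_k$ is strictly positive. Suppose for contradiction that $i\notin S_j$; then there is some $k\in[m]$ with $\frac{p_{k,j}}{u_k}<\frac{p_{i,j}}{u_i}$, and \Cref{lem:restrcitedtransformation} applied to the pair $(i,k)$ and item $j$ forces $x^*_{i,j}=0$, contradicting our choice of $i$ and $j$. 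Hence $\frac{p_{i,j}}{u_i}\le\frac{p_{k,j}}{u_k}$ for all $k\in[m]$, i.e.\ $i\in S_j$, which is the claim.

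I do not expect any real obstacle here: the corollary is a bookkeeping restatement of \Cref{lem:restrcitedtransformation} in terms of the ratio vector $\bu$, and all the substance sits in the earlier steps — building the auxiliary graph $G_{x^*}$, proving via \Cref{lem:negweight} that it has no negative cycle, and reading off the shortest-path potentials $c^*_i$ that define $\bu$. The one point worth emphasizing in the write-up is that the corollary only gives a one-sided containment: the support of the $j$-th column of $x^*$ lies inside $S_j$, but not every agent of $S_j$ need receive a positive share of item $j$. This is exactly the structural input needed later for \Cref{lem:convergence}(a), since an \ep-allocation with a large negative exponent $\alpha$ and weights $w_i$ proportional to $u_i^{-\alpha}$ concentrates item $j$ on precisely $S_j$, so such allocations can approach an optimal one and push the canonical load down toward $\lmks(P)$.
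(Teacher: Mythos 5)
Your proof is correct and is exactly what the paper intends: the paper states this as an immediate corollary of \Cref{lem:restrcitedtransformation}, and your contrapositive argument (if $i\notin S_j$ then some $k$ satisfies $\frac{p_{k,j}}{u_k}<\frac{p_{i,j}}{u_i}$, forcing $x^*_{i,j}=0$) is the intended one-liner. Your side remarks about the one-sided containment and the role in \Cref{lem:convergence} are also accurate.
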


\smallskip\noindent{\bf Transforming to a restricted related instance.}
We now introduce a special category of instances of the allocation problem that we call {\em restricted related} instances. Such an instance is characterized by a {\em weight vector} $\bp  \in \vpsn$ defined on the items, a {\em speed vector} $\mathbf{v} \in \vps$ defined on the agents, and a binary matrix $E \in \{0,1\}^{m \times n}$ called the {\em admissibility matrix}. Then, the weight $p_{i, j}$ of item $j\in [n]$ for agent $i\in [m]$ is given by the following:
\[
    p_{i, j} = \begin{cases}
                    \frac{p_j}{v_i} & \text{ if } E_{i, j} = 1\\
                    \infty & \text{ if } E_{i, j} = 0.
                \end{cases}
\]

\Cref{col:restrcitedtransformation} allows us to convert any (general) weight matrix $P\in \mps$ to a restricted related instance while preserving the value of $\lmks(P)$. Let $P'\in \mps$ be the weight matrix for the restricted related instance that we construct. Then, we require that the weights for agent-item pairs $i\in [n], j\in [m]$ such that $x^*_{i, j} > 0$ are preserved, while those for the remaining agent-item pairs are made infinite. Clearly, the optimal assignment $x^*$ continues to have the same objective value $\lmks$ after this transformation.

To see why these weights form a restricted related instance, we define the following:
\begin{itemize}
    \item a weight vector $\hat{\bp}  \in \vpsn$ where $\hat{p}_j = \frac{p_{i, j}}{u_i}$ for any $i\in S_j$. (Note that by \Cref{col:restrcitedtransformation}, we get the same value of $\hat{p}_j$ no matter which agent $i\in S_j$ is chosen.)
    \item a speed vector $\hat{\mathbf{v}} \in \vps$ where $\hat{v}_i = \frac{1}{u_i}$. (Note that this implies that the weight of an item $j\in [n]$ for an agent $i\in S_j$ remains unchanged at $p_{i, j}$.)
    \item an admissibility matrix $\hat{E} \in \{0,1\}^{m \times n}$ where $\hat{E}_{i, j} = 1$ if $i\in S_j$ and $0$ if $i\notin S_j$.
\end{itemize}

By Corollary~\ref{col:restrcitedtransformation}, the solution $x^*$ is also feasible for the restricted related instance $(\hat{\bp},\hat{\mathbf{v}},\hat{E})$ and has the same load for every agent $i\in [m]$ since
\[
    \ell_i(P,x^*) = \sum_j x^*_{i,j}\cdot  p_{i,j} = \sum_{j:i\in S_j} x^*_{i,j} \cdot p_{i,j} =  \sum_{j:S_j\ni i} x^*_{i,j}\cdot (\hat{p}_j\cdot {u_i}) = \ell_i(P',x^*),
\]
where $P'$ is the weight matrix of the corresponding restricted related instance.

We now invoke Sinkhorn's theorem (see e.g.~\cite{rothblum1989scalings}) which states the following:
\begin{theorem}[Sinkhorn's Theorem]\label{thm:sinkhorn}
    For any matrix $Z\in \R^{m\times n}$ and vectors ${\bf c}\in \R^n$ and ${\bf r}\in \R^m$, if there is some matrix $Y$ with the properties that (a) the column and row sums of~~$Y$ are equal to $\bf c$ and $\bf r$ respectively and (b) $Y_{i, j} > 0$ only if $Z_{i, j} > 0$, then there exist diagonal matrices $A\in \R^{m\times m}$ and $B\in \R^{n\times n}$ such that the column and row sums of $A\cdot Z\cdot B$ are $\bf c$ and $\bf r$ respectively.
\end{theorem}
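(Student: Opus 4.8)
This is the classical matrix‑scaling theorem, which I would prove variationally rather than by analyzing the alternating iteration. Consider the relative‑entropy minimization $\min_{W\in\calP}\ \sum_{i,j\,:\,Z_{i,j}>0} W_{i,j}\ln\big(W_{i,j}/Z_{i,j}\big)$, where
\[
\calP=\big\{W\in\Rnn^{m\times n}:\ \tsty\sum_{j}W_{i,j}=r_i\ \forall i,\ \ \tsty\sum_{i}W_{i,j}=c_j\ \forall j,\ \ W_{i,j}=0\text{ if }Z_{i,j}=0\big\}
\]
is the transportation polytope with the prescribed margins and support restriction. The hypothesis that a witness $Y$ exists says precisely that $\calP\ne\varnothing$; moreover $\calP$ is compact (each coordinate is bounded by $\max_i r_i$) and convex, and the objective is continuous (with the convention $t\ln t\to0$ as $t\to0^+$) and strictly convex on $\calP$, so a minimizer $W^\star$ exists and is unique.

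\textbf{From the optimum to the diagonal scalings.} The crux is that $W^\star_{i,j}>0$ whenever $Z_{i,j}>0$; I would obtain this from the steepness of $t\mapsto t\ln(t/Z_{i,j})$ at $t=0$. Along the feasible segment $W_\lambda=(1-\lambda)W^\star+\lambda Y$, any pair with $W^\star_{i,j}=0<Y_{i,j}$ contributes an increment $\sim\lambda Y_{i,j}\ln\lambda$ to the objective, so the right derivative at $\lambda=0$ is $-\infty$, contradicting optimality; hence $W^\star$ is strictly positive wherever $Y$ is — and in our application $\mathrm{supp}(Y)=\mathrm{supp}(Z)$, so this gives full support. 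The nonnegativity constraints are then inactive on $\mathrm{supp}(Z)$, so stationarity of the Lagrangian $\sum_{i,j}W_{i,j}\ln(W_{i,j}/Z_{i,j})-\sum_i\mu_i(\sum_jW_{i,j}-r_i)-\sum_j\nu_j(\sum_iW_{i,j}-c_j)$ yields $\ln(W^\star_{i,j}/Z_{i,j})=\mu_i+\nu_j-1$ for all $(i,j)\in\mathrm{supp}(Z)$, i.e.\ $W^\star_{i,j}=a_i\,Z_{i,j}\,b_j$ with $a_i=e^{\mu_i-1}>0$ and $b_j=e^{\nu_j}>0$. Taking $A=\mathrm{diag}(a)$, $B=\mathrm{diag}(b)$ gives $W^\star=AZB$ entrywise — on $\mathrm{supp}(Z)$ by the above, and off it because both sides vanish — and since $W^\star$ has row sums $\mathbf{r}$ and column sums $\mathbf{c}$, so does $AZB$.

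\textbf{Applying it to finish \Cref{lem:convergence}(a).} I would invoke the theorem on the restricted related instance $P'$ built from an optimal solution $x^\star$: take $Z$ to be the transformed‑weight matrix of $P'$ (equivalently its support pattern, which is the admissible set) and $Y=x^\star$ restricted to that instance, which by \Cref{col:restrcitedtransformation} is supported on the admissible pairs and by \Cref{lem:opt-equal} has row sums $\lmks(P)$ and column sums $1$. Sinkhorn then produces the agent‑side diagonal $A$, which I read off as the learned parameters $\bw_\alpha$ (the item‑side diagonal $B$ is absorbed into the per‑item normalization of the proportional allocation). Since the restricted related instance is exactly the $\alpha\to-\infty$ limiting shape of the \ep-allocation on $P'$, and since (as already shown) allocations on $P'$ and $P$ coincide, this equalizes every agent's load at $\lmks(P)$ in the limit; for $\alpha$ finite but sufficiently negative the mass leaking onto agents outside each $S_j$ is exponentially small in $|\alpha|$, upgrading the conclusion to $\ell_i(P,\alpha,\bw_\alpha)\le(1+\epsilon)\,\lmks(P)$ for every $i$ (and symmetrically for part (b) with $\lsan$, hence for both halves of \Cref{thm:exponential}).

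\textbf{Main obstacle.} The single delicate point is the full‑support claim for $W^\star$: if only $\mathrm{supp}(Y)\subsetneq\mathrm{supp}(Z)$ is assumed it can genuinely fail — e.g.\ $Z=\left(\begin{smallmatrix}1&1\\1&0\end{smallmatrix}\right)$ with $\mathbf{r}=\mathbf{c}=(1,1)$ admits the witness $Y=\left(\begin{smallmatrix}0&1\\1&0\end{smallmatrix}\right)$ but no positive diagonal scaling — so the exact conclusion relies on the witness having the same support as $Z$, which does hold here because $Y=x^\star$ and $Z$ are constructed to be supported on the same admissible set. A purely computational alternative (run alternating row/column normalization in the spirit of \Cref{alg:iterative} and bound a potential) meets the same difficulty when ruling out drift of the iterates to the boundary, so I would favor the variational argument above.
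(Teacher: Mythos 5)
The paper does not prove \Cref{thm:sinkhorn}; it invokes it as a known result, citing Rothblum and Schneider, so there is no internal proof to compare against. Your variational argument --- minimize $\sum_{(i,j)\in\mathrm{supp}(Z)} W_{i,j}\ln(W_{i,j}/Z_{i,j})$ over the transportation polytope restricted to $\mathrm{supp}(Z)$, obtain full support of the minimizer $W^\star$ from the infinite slope of $t\mapsto t\ln t$ at $t=0$, and read off the multiplicative form $W^\star_{i,j}=a_iZ_{i,j}b_j$ from stationarity --- is a correct and essentially the standard entropy-optimization proof of this class of scaling results. You have also, correctly, surfaced that the theorem is too weak as the paper states it: hypothesis (b) only requires $\mathrm{supp}(Y)\subseteq\mathrm{supp}(Z)$, whereas a positive diagonal scaling requires a witness with $\mathrm{supp}(Y)=\mathrm{supp}(Z)$ exactly; your $2\times2$ counterexample with $Z_{2,2}=0$, $Y_{1,1}=0$, and $\mathbf r=\mathbf c=(1,1)$ is valid (one checks directly that no diagonal $A,B$, even with zero entries, can give row and column sums $(1,1)$). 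Correspondingly, your full-support step only delivers $W^\star_{i,j}>0$ where $Y_{i,j}>0$, not on all of $\mathrm{supp}(Z)$, which is why the same-support hypothesis is needed.

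Where your write-up has a genuine gap is the final assertion that the application is saved because ``$Y=x^\star$ and $Z$ are constructed to be supported on the same admissible set.'' That equality of supports is not established by the paper's construction. The admissibility matrix $\hat E$ (your $Z$) has $\hat E_{i,j}=1$ for \emph{every} $i\in S_j$, whereas $Y_{i,j}=\hat p_j\,x^*_{i,j}$ is positive only where $x^*_{i,j}>0$; \Cref{col:restrcitedtransformation} gives the one-sided inclusion $\{(i,j):x^*_{i,j}>0\}\subseteq\{(i,j):i\in S_j\}$, and there can be admissible pairs on which the particular optimal $x^*$ places no mass. To close this you would need an additional argument --- e.g., show that for each admissible pair $(i,j)$ there is an optimal solution using that pair and then average over these (the set of optimal solutions is convex, so the average is still optimal and has full support on $\mathrm{supp}(\hat E)$) --- or else invoke the asymptotic form of the scaling theorem, which does hold under the weaker hypothesis $\mathrm{supp}(Y)\subseteq\mathrm{supp}(Z)$ and produces diagonal scalings whose row and column sums converge to $\mathbf r$ and $\mathbf c$; that approximate version is in fact all that \Cref{lem:convergence} consumes downstream, since only an $\epsilon$-accurate scaling is needed. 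As written, the claim that the supports coincide is unsupported.
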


To apply \Cref{thm:sinkhorn}, we set $Z$ to the admissibility matrix $\hat{E}$ and the vectors $\bf c$ and $\bf r$ respectively to the vectors $\lmks\cdot \hat{{\bf v}}$ and $\hat{{\bf p}}$. Now, the matrix $Y$ required in the condition for \Cref{thm:sinkhorn} is given by $Y_{i, j} = \hat{p}_j\cdot x^*_{i, j}$. Note that $Y_{i, j} > 0$ only if $i\in S_j$, which in turn implies $\hat{E}_{i, j} = Z_{i, j} = 1$. We therefore apply \Cref{thm:sinkhorn} and obtain diagonal matrices $A\in \R^{m\times m}$ and $B\in \R^{n\times n}$. Finally, we set the vector of parameters $\bw$ to $w_i = A_{i, i}$ to derive the following corollary:

\begin{corollary}
\label{cor:sinkhorn}
There exists a vector of parameters $\bw \in \vps$ such that the following allocation
\[
    \hat{x}_{i,j} = \begin{cases}
                    \frac{w_i}{\sum_{i'\in S_j} w_{i'}} & \text{ if } i\in S_j \\
                    0 & \text{ otherwise }
                    \end{cases}
\]                    
achieves the optimal \minmax objective $\lmks$.
\end{corollary}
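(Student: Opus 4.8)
The plan is to verify that the parameter vector $\bw$ defined by $w_i = A_{i,i}$ --- where $A$ and $B$ are the diagonal matrices produced above via \Cref{thm:sinkhorn} --- realises the allocation in the statement with a uniform load of $\lmks$. First I would confirm that the hypotheses of \Cref{thm:sinkhorn} really hold for the instantiation used in the discussion: take $Z = \hat E$, target marginals $\hat{\mathbf p}$ over agents (one value per item) and $\lmks\cdot\hat{\mathbf v}$ over items (one value per agent), and the nonnegative witness $Y_{i,j} = \hat p_j\cdot x^*_{i,j}$. Indeed $\sum_i Y_{i,j} = \hat p_j\sum_i x^*_{i,j} = \hat p_j$; and since $x^*_{i,j}=0$ for $i\notin S_j$ while $p_{i,j} = \hat p_j\cdot u_i = \hat p_j/\hat v_i$ for $i\in S_j$ (this is exactly what \Cref{col:restrcitedtransformation} and the definition of $\hat{\mathbf p}$ give), we get $\sum_j Y_{i,j} = \tfrac{1}{u_i}\sum_j x^*_{i,j}\,p_{i,j} = \tfrac{1}{u_i}\,\ell_i(P,x^*) = \lmks\,\hat v_i$, using \Cref{lem:opt-equal}; finally $Y_{i,j}>0 \Rightarrow x^*_{i,j}>0 \Rightarrow i\in S_j \Rightarrow \hat E_{i,j}=1$. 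Hence $A,B$ exist, and since every row and column of $Y$ has a strictly positive sum --- so that no row or column of $\hat E$ is empty --- the standard form of Sinkhorn's theorem returns strictly positive diagonal matrices; in particular $\bw\in\vps$, as the statement requires.

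Next I would read off the rescaled matrix $M := A\hat E B$. For $i\in S_j$ its entry is $M_{i,j} = w_i B_{j,j}$ and for $i\notin S_j$ it is $0$; matching the prescribed marginal $\sum_i M_{i,j} = \hat p_j$ over agents forces $B_{j,j} = \hat p_j\big/\sum_{i'\in S_j} w_{i'}$, hence
\[
    M_{i,j} \;=\; \hat p_j\cdot\hat x_{i,j}\qquad\text{for all }i\in[m],\ j\in[n],
\]
where $\hat x$ is exactly the allocation in the statement. Feasibility of $\hat x$ is immediate --- $\sum_i\hat x_{i,j} = \sum_{i\in S_j} w_i\big/\sum_{i'\in S_j} w_{i'} = 1$, and $S_j\neq\emptyset$ because $x^*$ allocates item $j$ somewhere --- so $\hat x\in\cX$. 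For the loads, using once more that $p_{i,j} = \hat p_j\cdot u_i$ for $i\in S_j$, that $\hat v_i = 1/u_i$, and that $M$ has row marginals $\lmks\cdot\hat{\mathbf v}$,
\[
    \ell_i(P,\hat x) \;=\; \sum_{j:\,i\in S_j}\hat x_{i,j}\cdot p_{i,j} \;=\; u_i\sum_j M_{i,j} \;=\; u_i\cdot\lmks\,\hat v_i \;=\; \lmks .
\]
Thus $\hat x$ is a feasible allocation whose every load equals $\lmks$, which is by definition the optimal \minmax value; this proves the corollary.

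I do not expect a genuine obstacle here --- once \Cref{thm:sinkhorn} is invoked, the rest is bookkeeping. The two points that deserve a sentence of care are: (i) the strict positivity of the Sinkhorn rescalings, needed so that $\bw$ lies in the open positive orthant $\vps$, which holds precisely because the witness $Y$ is a nonnegative matrix with all marginals strictly positive, leaving no empty row or column in the support of $\hat E$; and (ii) the identity $p_{i,j} = \hat p_j\cdot u_i$ on admissible pairs $i\in S_j$ (together with $x^*_{i,j}=0$ off the support $S_j$), which is what \Cref{col:restrcitedtransformation} secures and which makes both the witness‑marginal check and the final load computation go through.
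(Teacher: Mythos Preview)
Your proposal is correct and follows essentially the same route as the paper: apply \Cref{thm:sinkhorn} with $Z=\hat E$, witness $Y_{i,j}=\hat p_j\,x^*_{i,j}$, and the target marginals $\hat{\mathbf p}$ (column sums) and $\lmks\cdot\hat{\mathbf v}$ (row sums), then set $w_i=A_{i,i}$. Your write-up is in fact more careful than the paper's, which simply sets up the Sinkhorn application and states the corollary without spelling out the marginal checks, the identification $M_{i,j}=\hat p_j\,\hat x_{i,j}$, the feasibility of $\hat x$, or the final load computation; your explicit treatment of the positivity of the rescalings (so that $\bw\in\vps$) is also a detail the paper omits.
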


We are now ready to finish the proof of \Cref{lem:convergence}.

\begin{proof}[Proof of \Cref{lem:convergence}]

Suppose we are given a weight matrix $P\in \mps$.
By \Cref{col:restrcitedtransformation}  and  \Cref{cor:sinkhorn}, there exists a vector of parameters $\bw$ for the corresponding restricted related instance defined by the ratio vector $\bu$ with the following property: the proportional assignment with these parameters produces an optimal solution $\hat{x}$. Now, for a fixed $\alpha$, let us define
$\hat{w}_{\alpha,i} = \frac{w_i}{u_i^\alpha}$. Then, it is sufficient to show that
$$\lim_{\alpha \rightarrow -\infty} \frac{\hat{w}_{\alpha,i} \cdot p_{i,j}^\alpha}{\sum_{i'}\hat{w}_{\alpha,i'} \cdot p_{i',j}^\alpha} = \hat{x}_{i,j}.$$
We know
$$ \frac{\hat{w}_{\alpha,i} \cdot p_{i,j}^\alpha}{\hat{w}_{\alpha,k} \cdot p_{k,j}^\alpha} = 
\frac{w_i \cdot  p_{i,j}^\alpha / {u_i}^\alpha}{ w_k \cdot p_{k,j}^\alpha/{u_k}^\alpha } = \frac{w_i}{w_k} \cdot \left( \frac{{u_k}\cdot p_{i,j}}{ {u_i} \cdot p_{k,j}} \right)^\alpha.$$
Now, fix an item $j$ and an agent $k \in S_j$. We have the following two cases for any agent $i\in [m]$:
\begin{eqnarray*}
    \frac{\hat{w}_{\alpha, i} \cdot p_{i,j}^\alpha}{\hat{w}_{\alpha,k}\cdot p_{k,j}^\alpha} &=& \frac{w_i}{w_k} \quad  \text{ if } i\in S_j, \text{ and }\\
    \displaystyle\lim_{\alpha \rightarrow -\infty} \frac{\hat{w}_{\alpha,i} \cdot p_{i,j}^\alpha}{\hat{w}_{\alpha,k} \cdot p_{k,j}^\alpha} &=&  0 \quad  \text{ if } i\notin S_j.
\end{eqnarray*}    
Therefore, there exists an $\alpha^*$ such that $\ell_i(P,\alpha^*,w_{\alpha^*}) \leq \lmks(P)+\epsilon$ for all agents $i\in [m]$. By the monotonicity property on values of $\alpha$ (first part of \Cref{thm:exponential}), for any ${\alpha'} < \alpha^*$, the value of $\ell^*(P,{\alpha'})$ is at most $\lmks(P)+\epsilon$ as required.

\end{proof}
\end{toappendix}

\section{Noise Resilience: Handling Predictions with Error}\label{sec:noise}

In this section, we show the noise resilience of our algorithms, namely that we can handle errors in the learned parameters.
First, we will show that for both objectives (\maxmin and \minmax), an  $\eta$-approximate set of learned parameters yields an online algorithm with a competitive ratio of at least/at most $\eta$.
Second, for the \minmax objective, we show that it is possible to improve the competitive ratio further in the following sense: using a set of learned parameters with a multiplicative error of $\eta$ with respect to the optimal parameters, we can obtain a $O(\log \eta)$-competitive algorithm. (This was previously shown by Lattanzi {\em et al.}~\cite{LattanziLMV20} but only for the special case of restricted assignment.) We also rule out a similar guarantee for the \maxmin objective, i.e., we show that using $\eta$-approximate learned parameters, an algorithm cannot hope to obtain a competitive ratio better than $\eta/c$ for some constant $c$. Finally, we show that noise-resilient bounds can be obtained not just for the \minmax and \maxmin objectives but also for any homogeneous monotone minimization or maximization objective function.

Formally, a weight vector $\bw$ is $\eta$-approximate with respect to a weight vector to $\bw^*$, if for any two agents $i,i'\in [m]$,
$\frac{w_{i'}}{w_i} \leq  \eta \cdot \frac{w^*_{i'}}{w^*_i}$.
First, we show a basic noise resilience property that holds for both the \minmax and \maxmin objectives: 

\begin{lemma}
\label{lem:subaddF}
Fix a weight matrix $P\in \mps$ and a transformation matrix $G \in \mps$. 
For any two parameter vectors $\bw^*, \bw\in \vps$, such that $\bw$ is $\eta$-approximate to $\bw^*$,  we have that for any agent~$k$:
\[
\frac{\ell_k(P,G, \bw^*)}{\eta}  \leq \ell_k(P,G, \bw)\leq \eta \cdot \ell_k(P,G, \bw^*).
\]    
\end{lemma}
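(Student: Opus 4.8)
The plan is to analyze a single item $j$ at a time and show that the fractional allocation $x_{k,j}$ under the perturbed parameters $\bw$ is within a multiplicative factor of $\eta$ of the allocation under $\bw^*$; summing the weighted allocations over all items then gives the bound on loads. So first I would fix an item $j$ and write $x_{k,j}(G,\bw) = \frac{g_{k,j} w_k}{\sum_{i'} g_{i',j} w_{i'}}$, and similarly for $\bw^*$. The key is to compare the ratio
\[
    \frac{x_{k,j}(G,\bw)}{x_{k,j}(G,\bw^*)} = \frac{w_k}{w^*_k}\cdot\frac{\sum_{i'} g_{i',j} w^*_{i'}}{\sum_{i'} g_{i',j} w_{i'}}.
\]
Now I would use the hypothesis that $\bw$ is $\eta$-approximate to $\bw^*$, i.e.\ $\frac{w_{i'}}{w_k}\le \eta\cdot\frac{w^*_{i'}}{w^*_k}$ for every $i'$. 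Rearranged, this says $\frac{w^*_{i'}}{w_{i'}} \le \eta\cdot \frac{w^*_k}{w_k}$, equivalently $w^*_{i'} \le \eta\cdot\frac{w^*_k}{w_k}\cdot w_{i'}$ for all $i'$. Plugging this termwise into the numerator sum $\sum_{i'} g_{i',j}w^*_{i'}$ (all $g_{i',j}>0$) bounds it above by $\eta\cdot\frac{w^*_k}{w_k}\cdot\sum_{i'} g_{i',j} w_{i'}$, hence the displayed ratio is at most $\frac{w_k}{w^*_k}\cdot\eta\cdot\frac{w^*_k}{w_k} = \eta$. This gives $x_{k,j}(G,\bw)\le \eta\cdot x_{k,j}(G,\bw^*)$ for every item $j$.

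For the lower bound, I would apply the same argument with the roles of $\bw$ and $\bw^*$ swapped; note that $\eta$-approximation is symmetric up to the direction of the inequality in the sense that $\frac{w_{i'}}{w_k}\le \eta\cdot\frac{w^*_{i'}}{w^*_k}$ for all pairs $(i',k)$ is equivalent (by also taking the pair $(k,i')$) to $\frac{w^*_{i'}}{w^*_k}\le \eta\cdot\frac{w_{i'}}{w_k}$, so $\bw^*$ is $\eta$-approximate to $\bw$ as well. This yields $x_{k,j}(G,\bw^*)\le \eta\cdot x_{k,j}(G,\bw)$, i.e.\ $x_{k,j}(G,\bw)\ge \frac{1}{\eta}\cdot x_{k,j}(G,\bw^*)$. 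Combining, for every item $j$ we have $\frac{1}{\eta} x_{k,j}(G,\bw^*) \le x_{k,j}(G,\bw) \le \eta\, x_{k,j}(G,\bw^*)$.

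Finally, I would multiply through by $p_{k,j}\ge 0$ and sum over $j\in[n]$: since $\ell_k(P,G,\bw) = \sum_j x_{k,j}(G,\bw)\,p_{k,j}$ and likewise for $\bw^*$, the per-item inequalities aggregate to $\frac{1}{\eta}\ell_k(P,G,\bw^*)\le \ell_k(P,G,\bw)\le \eta\,\ell_k(P,G,\bw^*)$, which is exactly the claim. I do not expect a real obstacle here; the only point requiring slight care is extracting the clean termwise bound $w^*_{i'}\le \eta\cdot\frac{w^*_k}{w_k} w_{i'}$ from the $\eta$-approximation hypothesis and making sure the direction of the inequality is used consistently in the numerator (to get the upper bound) versus the symmetric swap (to get the lower bound). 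Everything else is a routine rearrangement of ratios of positive quantities.
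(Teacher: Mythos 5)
Your proposal is correct and follows essentially the same route as the paper: derive a per-item multiplicative bound $\frac{1}{\eta}\, x_{k,j}(G,\bw^*) \le x_{k,j}(G,\bw) \le \eta\, x_{k,j}(G,\bw^*)$ and then sum over $j$ with weights $p_{k,j} > 0$. The paper phrases the per-item step via its Observation on allocation ratios (writing $y_{i,j}/z_{i,j}$ as a convex combination of the ratios $\tau_{i'}/\tau_i$), whereas you expand the proportional-allocation formula directly and bound the numerator sum termwise; these are the same calculation in slightly different notation.

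One small slip worth fixing: from the instance $\frac{w_{i'}}{w_k}\le \eta\cdot\frac{w^*_{i'}}{w^*_k}$ you claim the rearranged form $\frac{w^*_{i'}}{w_{i'}} \le \eta\cdot \frac{w^*_k}{w_k}$, but cross-multiplying actually yields the reverse, $\frac{w^*_k}{w_k} \le \eta\cdot \frac{w^*_{i'}}{w_{i'}}$. The inequality you actually need, $\frac{w^*_{i'}}{w_{i'}} \le \eta\cdot \frac{w^*_k}{w_k}$, does hold, but it comes from the \emph{other} ordered pair $(i',k)$ of the hypothesis, namely $\frac{w_k}{w_{i'}}\le \eta\cdot\frac{w^*_k}{w^*_{i'}}$. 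Since the hypothesis is quantified over all ordered pairs (a symmetry you correctly observe when doing the lower bound), the conclusion is still valid; just cite the right instance.
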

\begin{proof}
Let $y_{i,j}=x_{i,j}(G,\bw^*)$ and $z_{i,j} = x_{i,j}(G,\bw)$ be the respective fractional allocations under proportional allocation using the transformation matrix $G$. 
For an agent $i$, let $\tau_i = w_i/w^*_i$. Then for any two agents $i,k$, we have that $1/\eta \leq \tau_k/\tau_i \leq \eta$.
\begin{toappendix}
By \Cref{obv:xval} , we have:
\end{toappendix}
\begin{noappendix}
We have,
\end{noappendix}
$\frac{y_{i,j}}{z_{i,j}} = \sum_{i'\in[m]} \frac{\tau_{i'}}{\tau_i} \cdot y_{i',j}.$
Therefore,
$$\frac{y_{i,j}}{z_{i,j}} = \sum_{i'\in[m]} \frac{\tau_{i'}}{\tau_i} \cdot y_{i',j} \geq \sum_{i'\in[m]} \frac{1}{\eta} \cdot y_{i',j} =\frac{1}{\eta} \cdot \sum_{i'\in[m]}  y_{i',j}  = \frac{1}{\eta}, \text{ and}$$
$$\frac{y_{i,j}}{z_{i,j}} = \sum_{i'\in[m]} \frac{\tau_{i'}}{\tau_i} \cdot y_{i',j} \leq \sum_{i'\in[m]} {\eta} \cdot y_{i',j} = {\eta} \cdot \sum_{i'\in[m]}  y_{i',j}  = \eta.$$
Hence, $y_{i,j}/\eta  \leq z_{i,j} \leq y_{i,j} \cdot \eta$. Finally, the lemma hold by summing over all items.
\end{proof}


The next theorem follows immediately by using a proportional allocation according to the parameter vector $\tilde{\bw}$:

\begin{theorem}
\label{thm:noise-basic}
Fix any $P,G \in \mps$. Let $\bw$ be a learned parameter vector that gives a solution of value $\gamma$ for the \maxmin (resp., \minmax) objective using proportional allocation. Let $\tilde{\bw}$ be $\eta$-approximate to $\bw$ for some $\eta > 1$. Then, there exists an online algorithm that given $\tilde{\bw}$ generates a solution with value at least $\Omega(\gamma/\eta)$ (resp., at most $O(\eta\gamma)$).
\end{theorem}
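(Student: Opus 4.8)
The plan is to observe that \Cref{thm:noise-basic} is an immediate consequence of \Cref{lem:subaddF} once we note that \gp-allocation is inherently online. First I would specify the algorithm: on input $\tilde{\bw}$, it runs the \gp-allocation with transformation matrix $G$, i.e., when item $j$ arrives it computes $g_{i,j}=f(p_{i,j})$ from the just-revealed weights $(p_{i,j})_{i\in[m]}$ and assigns
\[
x_{i,j}\;=\;\frac{g_{i,j}\cdot \tilde{w}_i}{\sum_{i'\in[m]} g_{i',j}\cdot \tilde{w}_{i'}}.
\]
This is online because the fractional assignment of item $j$ depends only on that item's column of $G$ and on the fixed vector $\tilde{\bw}$; no information about future items is used.

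Next I would apply \Cref{lem:subaddF} with the parameter vectors $\bw^*\leftarrow\bw$ and $\bw\leftarrow\tilde{\bw}$. Since $\tilde{\bw}$ is $\eta$-approximate to $\bw$, the lemma yields, for every agent $k\in[m]$,
\[
\frac{\ell_k(P,G,\bw)}{\eta}\;\le\;\ell_k(P,G,\tilde{\bw})\;\le\;\eta\cdot\ell_k(P,G,\bw).
\]
For the \maxmin objective, ``value $\gamma$'' means $\ell_k(P,G,\bw)\ge\gamma$ for all $k$, so the left inequality gives $\ell_k(P,G,\tilde{\bw})\ge\gamma/\eta$ for all $k$, i.e., the algorithm's solution has \maxmin value at least $\gamma/\eta=\Omega(\gamma/\eta)$. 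For the \minmax objective, ``value $\gamma$'' means $\ell_k(P,G,\bw)\le\gamma$ for all $k$, so the right inequality gives $\ell_k(P,G,\tilde{\bw})\le\eta\gamma$ for all $k$, i.e., the solution has \minmax value at most $\eta\gamma=O(\eta\gamma)$. (In fact we obtain the clean bounds $\gamma/\eta$ and $\eta\gamma$, matching the stated $\Omega(\gamma/\eta)$ and $O(\eta\gamma)$.)

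The proof therefore contains essentially no new content: the substantive estimate is \Cref{lem:subaddF}. The only point to be careful about is that the $\eta$-approximation relation is two-sided --- applying the defining inequality $\tilde{w}_{i'}/\tilde{w}_i\le\eta\,(w_{i'}/w_i)$ to the swapped pair $(i',i)$ shows $w_{i'}/w_i\le\eta\,(\tilde{w}_{i'}/\tilde{w}_i)$ as well --- which is exactly what makes both inequalities in \Cref{lem:subaddF} available simultaneously, and hence what lets us bound the \maxmin value from below and the \minmax value from above. So I do not anticipate a genuine obstacle here; the real work was done in proving \Cref{lem:subaddF}.
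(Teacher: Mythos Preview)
Your proposal is correct and takes essentially the same approach as the paper: the paper also states that \Cref{thm:noise-basic} follows immediately from \Cref{lem:subaddF} by running the proportional allocation with $\tilde{\bw}$. You have simply spelled out the details (the online nature of \gp-allocation and the case split for \maxmin/\minmax) that the paper leaves implicit.
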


In particular, if $\bw$ is the {\em optimal} learned parameter vector in the above theorem and $\tilde{\bw}$ is an $\eta$-approximation to it, then we obtain a competitive ratio of $\Omega(1/\eta)$.

\smallskip
The rest of this section focuses on the \minmax objective for which we can obtain an improved bound. In the next lemma, we establish an upper bound on the load, using \Cref{lem:subaddF} and monotonicity. 

\begin{lemma}
\label{lem:subadd}
Fix a weight matrix $P\in \mps$ and a transformation matrix $G \in \mps$. For any two parameter vectors $\bw^*, \bw\in \vps$ such that there exists an agent $k\in [m]$ for which $w^*_k/2 \leq w_k \leq w^*_k$ and for all other agents $i\not= k$, we have $w_i\geq w^*_i/2$, then the following holds:
$\ell_k(P,G, \bw)\leq 2\cdot \ell_k(P,G, \bw^*).$
\end{lemma}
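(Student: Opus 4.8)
The plan is to deduce the bound from \Cref{lem:subaddF} by first passing to an auxiliary parameter vector that is $2$-approximate to $\bw^*$ and that, from agent $k$'s point of view, allocates at least as much to $k$ as $\bw$ does; then monotonicity will transport the bound back to $\bw$. Concretely, I would set $\bw'\in\vps$ with $w'_k=2w^*_k$ and $w'_i=w^*_i$ for all $i\ne k$. The first step is to check that $\bw'$ is $2$-approximate to $\bw^*$: for any pair $i,i'$ the ratio $w'_{i'}/w'_i$ equals $w^*_{i'}/w^*_i$ times a factor lying in $\{\tfrac12,1,2\}$ (the factor is $2$ precisely when $i'=k\ne i$), hence is always at most $2\cdot w^*_{i'}/w^*_i$. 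Then \Cref{lem:subaddF} with $\eta=2$ gives $\ell_k(P,G,\bw')\le 2\cdot\ell_k(P,G,\bw^*)$.

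Next I would rescale: let $\mathbf{v}=\tfrac12\bw'$, i.e.\ $v_k=w^*_k$ and $v_i=w^*_i/2$ for $i\ne k$. Since $\mathbf{v}\equiv\bw'$, \Cref{obv:scale} gives $x_{k,j}(G,\mathbf{v})=x_{k,j}(G,\bw')$ for every $j$, so $\ell_k(P,G,\mathbf{v})=\ell_k(P,G,\bw')$. Now I would compare $\bw$ against the base vector $\mathbf{v}$: by hypothesis $w_k\le w^*_k=v_k$ while $w_i\ge w^*_i/2=v_i$ for all $i\ne k$, and $w_k>0$; thus $\bw$ arises from $\mathbf{v}$ by weakly decreasing coordinate $k$ (while keeping it positive) and weakly increasing every other coordinate. \Cref{obv:monotone} then yields $x_{k,j}(G,\bw)\le x_{k,j}(G,\mathbf{v})$ for all $j\in[n]$ (with equality in the degenerate case $\bw=\mathbf{v}$, which is handled directly). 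Summing against the positive weights $p_{k,j}$ and chaining the above identities gives
\[
\ell_k(P,G,\bw)=\sum_j x_{k,j}(G,\bw)\,p_{k,j}\le\sum_j x_{k,j}(G,\mathbf{v})\,p_{k,j}=\ell_k(P,G,\mathbf{v})=\ell_k(P,G,\bw')\le 2\,\ell_k(P,G,\bw^*),
\]
which is the claim.

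The only place calling for care is the orientation of \Cref{obv:monotone}: it perturbs a \emph{fixed} base vector, so $\mathbf{v}$ (not $\bw$) must be the base, and one must verify the sign conditions on $\bm{\epsilon}=\bw-\mathbf{v}$, in particular $\epsilon_k\in(-v_k,0]$ — where $\epsilon_k>-v_k$ is exactly $w_k>0$, valid since parameters lie in $\vps=\Rp^m$. Everything else is bookkeeping, so this is the main (mild) obstacle. As a sanity check, there is also a direct route avoiding the auxiliary vector that mimics the computation in the proof of \Cref{lem:subaddF}: writing $\tau_i=w_i/w^*_i$, the hypotheses give $\tau_i\ge\tfrac12\ge\tfrac12\tau_k$ for all $i$ (since $\tau_k\le1$), so $\frac{x_{k,j}(G,\bw^*)}{x_{k,j}(G,\bw)}=\sum_{i'}\frac{\tau_{i'}}{\tau_k}\,x_{i',j}(G,\bw^*)\ge\tfrac12$, and summing against $p_{k,j}$ finishes; but the monotonicity-based argument above is the one matching the surrounding text.
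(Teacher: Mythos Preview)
Your proof is correct and is essentially the paper's argument: your auxiliary vector $\mathbf{v}$ (with $v_k=w^*_k$ and $v_i=w^*_i/2$ for $i\ne k$) is exactly the paper's $\bw'$, and both proofs combine \Cref{obv:monotone} (comparing $\bw$ to $\mathbf{v}$) with \Cref{lem:subaddF} at $\eta=2$ (comparing $\mathbf{v}$ to $\bw^*$). The only cosmetic difference is that you reach $\mathbf{v}$ via the detour $\bw'=2\mathbf{v}$ and a rescaling, whereas the paper writes down $\mathbf{v}$ directly and checks $2$-approximateness there.
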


\begin{proof}
Define $\bw'$ where $w'_k = w^*_k$ (i.e., the maximum in its allowed range) and $w'_i = w^*_i/2$ for all $i\neq k$ (i.e., the minimum in their allowed ranges). Now, by monotonicity (Observation~\ref{obv:monotone}), we have
$x_{k,j}(G,\bw) \leq x_{k,j}(G,\bw')$, and therefore, $\ell_k(P,G,\bw) \leq \ell_k(P,G,\bw')$. 
Note that for $\bw'$, for any two agents $i_1,i_2$, $\frac{w_{i_1}}{w_{i_2}} \leq  2 \cdot \frac{w^*_{i_1}}{w^*_{i_2}}$. Therefore, by \Cref{lem:subaddF}, we have $\ell_k(P,G,\bw') \leq 2 \cdot \ell_k(P,G,\bw^*)$. By combining the two inequalities, we have 
$\ell_k(P,G,\bw) \leq \ell_k(P,G,\bw') \leq 2\cdot \ell_k(P,G,\bw^*)$, 
as required.
\end{proof}

\begin{algorithm}[t]
\begin{itemize}
    \item Let $\hat{\bw}$ a prediction vector and $T$ is the offline optimal objective for the \minmax problem.
    \item Initialize: $\ell_i \leftarrow 0$ and $\tilde{w}_i \leftarrow \hat{w}_i$, for all $i\in [m]$  
\end{itemize}
For each item $j$:
    \begin{itemize}
    \item Compute $x_{i,j} = \frac{f(p_{i,j}) \cdot \tilde{w}_i}{\displaystyle\sum_{i'\in [m]}f(p_{i',j})\cdot \tilde{w}_{i'}}$
    \item $\ell_i \leftarrow \ell_i + p_{i.j}\cdot x_{i,j}$, for all $i\in [m]$
    \item If exists $i\in[m]$, s.t. $\ell_i> 2\cdot T$
    \subitem Set $\ell_i \leftarrow 0$
    \subitem Update $\tilde{w}_i \leftarrow \tilde{w}_i/2$
\end{itemize}
\caption{The online algorithm with predictions.}
\label{alg:online}
\end{algorithm}

Let us denote the predicted learned parameter vector that is given offline to the \minmax algorithm by $\hat{\bw}$. We also assume that the algorithm knows the optimal objective value $T$. 
By scaling, we assume w.l.o.g that $\tilde{\bw}$ is coordinate-wise larger than the optimal learned parameter vector $\bw$. 
The algorithm uses a learned parameter vector $\hat{\bw}$ that is iteratively refined, starting with $\hat{\bw} = \tilde{\bw}$ (see \Cref{alg:online}). In each iteration, the current parameter vector $\hat{\bw}$ is used to determine the assignment using proportional allocation until an agent's load in the current phase exceeds $2T$. If this happens for any agent $i$, then the algorithm halves the value of $\hat{w}_i$, starts a new phase for agent $i$, and continues doing proportional allocation with the updated learned parameter vector $\hat{\bw}$.

\begin{theorem}
Fix any $P,G \in \mps$. Let $\bw$ be a learned parameter vector that gives a fractional solution with maximum load $T$ using proportional allocation. 
Let $\tilde{\bw}$ be an $\eta$-approximate prediction for $\bw$. Then there exists an online algorithm that given $\tilde{\bw}$ generates a fractional assignment of items to agents with maximum load at most $O(T \log \eta)$.
\end{theorem}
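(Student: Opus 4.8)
The plan is to analyze \Cref{alg:online} directly. Let $\bw$ be the (fixed) parameter vector with $\ell_k(P,G,\bw)\le T$ for all $k$, let $\tilde{\bw}$ be the $\eta$-approximate prediction, and let $\hat{\bw}$ denote the parameter vector currently used by the algorithm (initialized to $\tilde{\bw}$, with coordinate $\hat w_k$ halved each time agent $k$'s running phase load exceeds $2T$). Since rescaling $\tilde{\bw}$ does not change any allocation (\Cref{obv:scale}), I would first rescale so that $\min_i(\tilde w_i/w_i)=1$; because $\tilde{\bw}$ is $\eta$-approximate to $\bw$, this makes $r_i := \hat w_i/w_i \in [1,\eta]$ at the start of the run. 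Since $w_i$ is fixed and $\hat w_i$ only decreases (by factors of $2$) over time, each $r_i$ is non-increasing, and each halving of agent $i$ exactly halves $r_i$.

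The core of the proof is the invariant that $\min_i r_i \ge 1$ holds throughout the run; I would prove it by induction over the sequence of halving events in time order. The base case is the rescaling above. For the inductive step, note that $r_k$ can decrease only when agent $k$ is halved, which happens only if the load accumulated by $k$ in its current phase exceeds $2T$, and I claim this is impossible while $r_k<2$. Within a phase of $k$ the coordinate $\hat w_k$ is frozen while every other coordinate $\hat w_i$ only decreases, so by \Cref{obv:monotone} the fraction $x_{k,j}$ actually used for each item $j$ of the phase is at most $x_{k,j}(G,\hat{\bw})$ for the \emph{current} $\hat{\bw}$; summing over the phase's items (a subset of all items) shows the phase load of $k$ is at most $\ell_k(P,G,\hat{\bw})$. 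Now rescale $\hat{\bw}$ by $1/r_k$ to obtain $\bw''$ with $w''_k=w_k$ and $w''_i=(r_i/r_k)\,w_i$; since $r_i\ge 1$ (inductive hypothesis) and $r_k<2$ we have $w''_i \ge w_i/2$ for all $i\ne k$, so \Cref{lem:subadd} (with reference vector $\bw$ and light agent $k$) gives $\ell_k(P,G,\bw'')\le 2\,\ell_k(P,G,\bw)\le 2T$, while $\ell_k(P,G,\hat{\bw})=\ell_k(P,G,\bw'')$ by \Cref{obv:scale}. Hence while $r_k<2$ the phase load of $k$ stays at most $2T$, so $k$ is halved only when $r_k\ge 2$; after such a halving $r_k\ge 1$, and all other $r_i$ are unchanged, so the invariant is preserved.

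Given the invariant, agent $k$ is halved only while $r_k\ge 2$, and $r_k$ starts at most $\eta$ and halves at each halving of $k$, so $k$ is halved at most $\lceil\log_2\eta\rceil$ times; thus $k$ goes through at most $\lceil\log_2\eta\rceil+1 = O(\log\eta)$ phases over the whole run. Each completed phase ends the moment its accumulated load first exceeds $2T$, so its load is at most $2T$ plus the weight contributed by the single triggering item, which is $O(T)$ (under the standard assumption that item weights are $O(T)$, or otherwise with an additive term equal to the largest item weight); the last, possibly incomplete, phase of $k$ contributes at most $2T$. Summing over the phases, the total load of every agent is $O(\log\eta)\cdot O(T)=O(T\log\eta)$. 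Since \Cref{alg:online} always assigns every item in full via proportional allocation, it outputs a valid fractional assignment, and taking the maximum over agents gives the claimed bound $O(T\log\eta)$ on the maximum load.

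The main obstacle is establishing the invariant $\min_i r_i\ge 1$. The two ideas that make it work are: (i) the load an agent accumulates within a phase is dominated by $\ell_k(P,G,\hat{\bw})$ for the current parameter vector, which follows from the coordinatewise monotonicity of proportional allocation (\Cref{obv:monotone}); and (ii) the right way to invoke \Cref{lem:subadd} is to rescale $\hat{\bw}$ so that the light agent's coordinate equals $w_k$ exactly, which is precisely what forces $w''_i\ge w_i/2$ for every other agent once we know $r_k<2$ and $r_i\ge1$ for all $i$. Everything else is routine bookkeeping about counting phases and summing per-phase loads.
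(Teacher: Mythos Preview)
Your proposal is correct and follows essentially the same approach as the paper: analyze \Cref{alg:online}, use \Cref{lem:subadd} to bound the per-phase load of any agent by $2T$, and conclude that each agent undergoes $O(\log\eta)$ halvings. Your invariant $r_i\ge 1$ is slightly stronger than the paper's (which only maintains $\tilde w_i\ge w_i/2$), and you are more explicit about why the evolving $\hat{\bw}$ within a phase does not cause trouble (bounding the actually-used fractions by those under the end-of-phase vector via \Cref{obv:monotone}); the paper's write-up leaves this implicit.
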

\begin{proof}
By the algorithm's definition, an agent's total load is at most $2T$ times the number of phases for the agent.
We show that for any agent $i$, the parameter $\tilde{w}_i$ is always at least $w_i/2$. This immediately implies that the number of phases for machine $i$ is $O(\log \eta)$, which in turn establishes the theorem.

Suppose, for contradiction, in some phase for agent $k$, we have $\tilde{w}_k < w_k/2$. Moreover, assume w.l.o.g. that agent $k$ is the first agent for which this happens.
Clearly, by the algorithm definition,
there is a preceding phase for agent $k$ when $\tilde{w}_k < w_k$.
Note that, in this entire preceding phase, we have $w_k > \tilde{w}_k \geq w_k/2$, and for all $i\neq k$, $\tilde{w}_{i} \geq w_i/2$ (by our assumption that $k$ is the first agent to have a violation). 
However, by \Cref{lem:subadd}, the load of agent $k$ in the preceding phase would be at most $2T$. This contradicts the fact that the algorithm started a new phase for agent $k$ when its load exceeded $2T$ in the preceding phase.
\end{proof}

\begin{toappendix}
We now show that the bounds obtained above for the \maxmin and \minmax objectives are asymptotically tight. 
\end{toappendix}

\begin{noappendix}
In the full version of the paper, we show that the bounds obtained above for the \maxmin and \minmax objectives are asymptotically tight. 
\end{noappendix}

\begin{toappendix}

\begin{lemma}
\label{lem:maxminlower}
There exists an instance $P$ and learned parameter vectors $\bw, \bw^*$, where $\bw$ is $\tau$-approximate with respect to $\bw^*$, such that using proportional allocation with $\bw^*$ obtains a \maxmin objective of $\Omega(\tau)$, while even when $\bw$ is known offline, any online algorithm achieves a \maxmin objective of $O(1)$.
\end{lemma}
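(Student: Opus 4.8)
The plan is to build, for a given (large) parameter $\tau\ge 2$, an instance on $m=\lceil\tau\rceil$ agents on which the prediction supplied to the algorithm is the completely uninformative vector $\bw=\mathbf 1^m$, while this same vector is $\tau$-approximate to an instance-specific vector $\bw^*$ under which proportional allocation already attains a \maxmin value of $\Omega(\tau)$ (so that \Cref{thm:noise-basic} is tight for \maxmin). The adversary first releases a single symmetric item $e$ with $p_{i,e}=m$ for every agent $i\in[m]$. Since any feasible allocation has $\sum_i x_{i,e}=1$, by pigeonhole some agent $i^\star$ receives at most a $1/m$ fraction of $e$ and hence at most $1$ unit of load from it; and because items are committed on arrival, the algorithm must fix its allocation of $e$ before it can learn which agent becomes $i^\star$ (knowing $\bw=\mathbf 1^m$ is useless here, the vector being symmetric).

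Having observed the allocation of $e$, the adversary fixes such an $i^\star$ and, for each other agent $i\neq i^\star$, releases an item $f_i$ with $p_{i,f_i}=m$ and $p_{i',f_i}=\delta$ for all $i'\neq i$, where $\delta=\tfrac1{2m}$. Whatever the algorithm does, agent $i^\star$ picks up at most $\delta$ additional load from each $f_i$, so its total load is at most $1+(m-1)\delta<\tfrac32$; hence every online algorithm---even one told $\bw$ in advance---has \maxmin value $O(1)$. For the other direction I would set $w^*_{i^\star}=\tau$ and $w^*_i=1$ for $i\neq i^\star$. Checking the definition of $\tau$-approximation is a one-liner: the only nontrivial ordered pair is $(i,i')$ with $i=i^\star$, where $w_{i'}/w_i=1\le\tau\cdot(w^*_{i'}/w^*_i)=\tau\cdot\tfrac1\tau$. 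Under proportional allocation driven by $\bw^*$, on item $e$ the transformed weights are equal across agents, so $x_{i^\star,e}=\tfrac{\tau}{\tau+m-1}\ge\tfrac12$ (as $m\le\tau+1$), giving $i^\star$ load $\ge m/2\ge\tau/2$ from $e$ alone; and on each $f_i$ the favored agent $i$ gets $x_{i,f_i}=\tfrac{m}{m+\delta(\tau+m-2)}>\tfrac12$ because $\delta(\tau+m-2)<1$ by the choice of $\delta$, so each agent $i\neq i^\star$ has load $\ge m/2\ge\tau/2$. Thus $\min_i\ell_i\ge\tau/2$, i.e.\ the \maxmin value under $\bw^*$ is $\Omega(\tau)$, a factor $\Omega(\tau)$ above what any online algorithm can guarantee. (The same computation goes through verbatim if ``proportional allocation'' is taken with an exponentiated transformation $p\mapsto p^{\alpha}$, $\alpha\ge 1$: the factor $m^{\alpha}$ still cancels on $e$, and on $f_i$ one has $\delta^{\alpha}(\tau+m-2)<1$ for the same $\delta$.)

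The only genuinely delicate point is the scaling $\delta=\Theta(1/m)$: it must be small enough that the ``doomed'' agent $i^\star$ collects only $O(1)$ total load from the $m-1$ items $f_i$---otherwise the online value would be $\Theta(\tau)$ and the separation would vanish---yet large enough that, under $\bw^*$, the favored agent of each $f_i$ still receives a constant fraction of it. Everything else (the pigeonhole step, the $\tau$-approximation check, and the observation that adaptivity of the online algorithm cannot rescue $i^\star$) is immediate, so I do not expect any real obstacle beyond getting these constants right.
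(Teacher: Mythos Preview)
Your construction is correct and follows essentially the same approach as the paper's: a symmetric first phase forces, by pigeonhole, some agent $i^\star$ to receive load at most $1$; the adversary then releases items that are (nearly) useless to $i^\star$ but valuable to everyone else, while $\bw^*$ is set to boost $i^\star$'s share of the first phase. The paper's version works in the restricted-assignment setting (it takes $m$ unit-weight items in the first batch and $(m-1)m$ items of weight $0$ for $k$ in the second), whereas you use a single weight-$m$ item followed by $m-1$ items with a tiny positive weight $\delta=\Theta(1/m)$ for $i^\star$; this is a cosmetic difference, and your version has the minor advantage of respecting the paper's standing assumption $p_{i,j}>0$ without appealing to the $\delta\to 0$ footnote.
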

\begin{proof}
Our construction is in the restricted assignment setting. To define $\bw$, set $w_i = 1$ for all $i\in m$. The first batch has $m$ items, where $p_{i,j} = 1$ for all $i, j\in [m]$.
Clearly, in any assignment of these items, there exists an agent $k$ such that their load at the end of the first batch is at most $1$.
The second batch consists of $(m-1)\cdot m$ items such that for $j \in \{m+1, \ldots, (m-1)\cdot m\}$, we have $p_{i,j}=1$ for $i\neq k$ and $p_{k,j}=0$. Clearly, the load of agent $k$ at the end of the second batch remains unchanged at $\le 1$, which means the \maxmin objective is also $\le 1$. (This can also be extended to randomized algorithms but choosing $k$ uniformly at random in the second batch, and using Yao's minmax principle.)

Now, define $\bw^*$ as $w^*_i = 1$ for $i\neq k$ and $w^*_k=\tau$. Then, for $m \geq \tau \geq 1$, using $\bw^*$ gives a proportional allocation with a \maxmin objective of $\Omega(\tau)$. 
\end{proof}

\begin{lemma}
\label{lem:minmaxlower}
There exists an instance $P$ and learned parameter vectors $\bw, \bw^*$, where $\bw$ is $\tau$-approximate with respect to $\bw^*$,
such that using proportional allocation with $\bw^*$ obtains a \minmax objective of $O(1)$, while even even when $\bw$ is known offline, any online algorithm achieves a \minmax objective of $\Omega(\log \tau)$.
\end{lemma}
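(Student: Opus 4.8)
I would combine the classical adaptive ``halving'' lower bound for online fractional load balancing in the restricted-assignment model (which forces makespan $\Omega(\log m)$ on $m$ machines, see e.g.~\cite{AzarNR95}) with a carefully chosen geometric weight vector $\bw^*$ whose dynamic range is only $\tau$, so that the uninformative prediction $\bw=\mathbf{1}^m$ is $\tau$-approximate to it. This mirrors the structure of \Cref{lem:maxminlower}.

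\smallskip\noindent\textbf{The instance.} Fix $k=\lfloor\log_2\tau\rfloor+1$ and $m=2^k$ machines in the restricted-assignment setting. Against a given online algorithm $\cA$ run the adversary in $k$ rounds, maintaining an active set $A_\ell$ with $|A_\ell|=2^{k-\ell+1}$ and $A_1=[m]$. In round $\ell$ the adversary releases $|A_\ell|/2=2^{k-\ell}$ unit items, each admissible exactly on $A_\ell$; after $\cA$ places them, let $A_{\ell+1}\subseteq A_\ell$ be the $|A_\ell|/2$ machines with the largest total load so far, set $D_\ell=A_\ell\setminus A_{\ell+1}$, and let $D_k:=A_k$ with $|D_k|=2$. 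Every machine lies in exactly one $D_\ell$, so an optimal offline solution puts one batch-$\ell$ item on each machine of $D_\ell$ and achieves makespan exactly $1$.

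\smallskip\noindent\textbf{The lower bound.} Let $a_\ell$ be the average $\cA$-load over $A_\ell$ just before round $\ell$. Since round $\ell$ adds $2^{k-\ell}$ load confined to $A_\ell$, and $A_{\ell+1}$ is its top half by load, $a_{\ell+1}\ge\frac12\bigl(a_\ell\cdot 2+1\bigr)=a_\ell+\frac12$; with $a_1=0$ and $|A_k|=2$, some surviving machine has load $\ge(k-1)/2=\Omega(\log\tau)$. This bound holds for \emph{any} $\cA$ irrespective of the prediction it is handed, since the symmetry-breaking (which half survives) is performed adaptively by the adversary; the randomized case follows by the usual Yao/averaging argument exactly as in \Cref{lem:maxminlower}. (An $O(\log\min(\tau,m))$-competitive algorithm was given by~\cite{LattanziLMV20} for restricted assignment, so this is tight.)

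\smallskip\noindent\textbf{The good parameters and the main obstacle.} For machine $i$, let $\ell(i)\in\{1,\dots,k\}$ be the round after which $i$ is dropped, and set $w^*_i=2^{-\ell(i)}$; then $\max_i w^*_i/\min_i w^*_i=2^{k-1}\le\tau$, so $\bw=\mathbf{1}^m$ is $\tau$-approximate to $\bw^*$. It remains to show that proportional allocation with $\bw^*$ (i.e.\ $x_{i,j}=w^*_i/\sum_{i'\in A_\ell}w^*_{i'}$ for an item $j$ of round $\ell$) has makespan $O(1)$. The total weight of $A_\ell$ is $W_\ell=\sum_{s\ge\ell}|D_s|\,2^{-s}=\Theta(2^{k-2\ell})$, so a machine surviving to level $t\ge\ell$ receives load $|D_\ell|\cdot 2^{-t}/W_\ell=\Theta(2^{\ell-t})$ from round $\ell$; summing over $\ell\le t$ gives a convergent geometric series, hence total load $O(1)$ on every machine. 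The delicate point --- and the main obstacle --- is precisely this last estimate: every ``to-be-saved'' machine is admissible for all earlier batches, so proportional allocation necessarily leaks mass onto deep machines, and one must verify it does not accumulate. The resolution is the interplay between the geometric \emph{decay} of $\bw^*$ across levels and the geometric \emph{growth} of the dropped-set sizes $|D_\ell|$, which forces the per-machine leaked load to be a geometric series of ratio bounded below $1$; this is also what makes a dynamic range of $\Theta(\tau)$ (equivalently $\Theta(\log\tau)$ levels) both necessary for the hard instance and sufficient to stay within the $\tau$-approximation budget.
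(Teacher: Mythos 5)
Your proof is correct, and it follows the same high-level strategy as the paper's: construct the classical adaptive halving instance in the restricted-assignment setting, pair it with a geometrically decaying $\bw^*$ that records each machine's depth in the pruning process, and note that the uninformative prediction $\bw = \mathbf{1}^m$ is $\tau$-approximate since $\bw^*$ has dynamic range $2^{k-1}\le\tau$. The one substantive difference is the admissibility structure of the hard instance. The paper makes each round-$\ell$ item admissible on exactly a disjoint \emph{pair} of active machines (with the survivor of each pair chosen adaptively), whereas you make each round-$\ell$ item admissible on the entire active set $A_\ell$ and prune to the top half by load. Your variant gives a cleaner averaging argument for the $\Omega(\log \tau)$ lower bound (the top half has at least the overall average, so $a_{\ell+1}\ge a_\ell+\tfrac12$), at the cost of a slightly more involved verification that proportional allocation with $\bw^*$ achieves $O(1)$, requiring the estimate $W_\ell=\Theta(2^{k-2\ell})$ and a geometric sum over rounds. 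The paper's pair-based instance gives the opposite trade-off: the lower bound tracks, per pair, which member accumulates more load, while the $O(1)$ upper bound is immediate because each item splits between only two agents whose $\bw^*$-weights differ by a factor of at least $2$, so each surviving machine's load is a geometric series bounded by $2$. Both approaches invoke Yao's principle to pass from the adaptive deterministic construction to the existential form of the lemma, as you note.
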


\end{toappendix}

\begin{toappendix}

\begin{proof}
Our construction is in the restricted assignment setting and is essentially equivalent to the $\Omega(\log m)$ lower bound for the \minmax problem in the worst-case setting. To define $\bw$, set $w_i = 1$ for all $i\in m$. The example consists of $m = 2^k$ agents and $n=m-1$ items. The first batch comprises $m/2$ items, each of which has a weight of $1$ for a disjoint pair of agents, and $\infty$ for the remaining agents.
The second batch comprises $m/4$ items, each of which has a weight of $1$ for a disjoint pair of agents, and $\infty$ for the remaining agents. Crucially, for every pair of agents with finite weight for an item in the first batch, one must have load at least $\nicefrac12$ after  the first batch; this agent has finite weight for one of the items in the second batch and the other agent has infinite weights for all items in the second batch (and all batches henceforth). We continue in this manner, pruning the number of items by a factor of $2$ in every step and ensuring that the agents that have finite weight for any item in the $t$th batch must have a total load of at least $\frac{t-1}{2}$ from the previous $t-1$ batches. Clearly, the \minmax objective at the end of the algorithm is $\Omega(\log m)$. (This is true even if we allow randomized algorithms by uniformly randomizing the choice of agent to retain in any batch, and using Yao's minmax principle.)

Now, set $\tau = m$ and define $\bw^*$ as follows: $w^*_k = 2^{-a_k}$, where $a_k$ is the number of items that have a finite weight for agent $k$. A proportional allocation using these learned parameters achieves a makespan of at most~$2$.
\end{proof}
\end{toappendix}

\section{Learnability of the Parameters}\label{sec:learning}

\newcommand{\me}{\frac{m}{\epsilon}}
\newcommand{\mei}{\frac{\epsilon}{m}}
\newcommand{\met}{\frac{m^3}{\epsilon^2}}
\newcommand{\meti}{\frac{\epsilon^2}{m^3}}
\newcommand{\amin}{a_{\min}}
\newcommand{\sw}{\delta}
\newcommand{\dec}{\textbf{NET}}
\newcommand{\Ds}{\mathcal{D}}
\newcommand{\Es}{\mathbb{E}}


We consider the learning model introduced by \cite{LavastidaMRX21a}, and show that under this model, the parameter vector $\bw$ can be learned efficiently from sampled instances.
Specifically, we consider the following model: the $j$th item (i.e., the values of $\bp_j = (p_{i, j}: i\in [m])$ is independently sampled from a (discrete) distribution $\Ds_j$. In other words, the matrix $P$ of utilities is sampled from $\Ds = \times_j \Ds_j$. 

We set up the model for the \maxmin objective; the setup for the \minmax objective is very similar and is omitted for brevity. Let $T = \Es_{P\sim \Ds} [\lsan(P)]$ be the expected value of the \maxmin objective in the optimal solution for an instance $\lsan(P)$ drawn from $\Ds$. Morally, we would like to say that we can obtain a vector $\bw$ that gives a nearly optimal solution (in expectation) using proportional allocation (i.e., a \maxmin objective of $(1-\epsilon)\cdot T$ in expectation for some error parameter $\epsilon$) using a bounded (as a function of $\epsilon$) number of samples. 
Similar to \cite{LavastidaMRX21a}, we need the following assumption:

\smallskip\noindent{\bf Small Items Assumption:} Conceptually, this assumption states that each individual item has a small utility compared to the overall utility of any agent in an optimal solution. Precisely, we need $p_{i,j} \leq \frac{T}{\zeta}$ for every $i\in [m], j\in [n]$ for some value $\zeta = \Theta\left(\frac{\log m}{\epsilon^2}\right)$.

Our main theorem in this section for the \maxmin and \minmax objectives are:

\begin{theorem}
\label{thm:learning}
Fix an $\epsilon > 0$ for which the small items assumption holds. Then, there is an (learning) algorithm that samples $O(\frac{m}{\log m} \cdot \log \me)$ independent instances from $\Ds$ and outputs (with high probability) a prediction vector $\bw$ such that using $\bw$ in the proportional allocation scheme gives a \maxmin objective of at least $(1-\Omega(\epsilon))\cdot T$ in expectation over instances $P\sim \Ds$.
\end{theorem}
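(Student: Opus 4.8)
The plan is to turn the learning task into choosing the best element of a finite ``net'' $\mathcal W\subseteq\vps$ of candidate parameter vectors, and to argue that $O(\frac{m}{\log m}\cdot\log\me)$ samples suffice for this choice. Throughout I fix the exponent $\alpha$ to a large positive value, as in \Cref{lem:convergence} (so that canonical \ep-allocations are within a $1-\epsilon$ factor of $\lsan$), and abbreviate $\ell_i(P,\bw):=\ell_i(P,\alpha,\bw)$ and $\mu_i(\bw):=\Es_{P\sim\Ds}[\ell_i(P,\bw)]$. The algorithm I have in mind draws $N=O(\frac{m}{\log m}\cdot\log\me)$ independent instances $P^{(1)},\dots,P^{(N)}$ and outputs the $\widehat{\bw}\in\mathcal W$ maximizing the empirical score $\widehat V(\bw):=\min_{i\in[m]}\big(\frac1N\sum_{k=1}^N\ell_i(P^{(k)},\bw)\big)$. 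The net is built as follows: after first rounding every weight $p_{i,j}$ that is negligibly small compared with $T$ up to a $\poly(\epsilon/m)\cdot T$ floor (which moves every agent's load by at most an $O(\epsilon)$ factor), I normalize $w_1=1$ and take the remaining coordinates to be powers of $1+\epsilon$ inside a range broad enough to contain the canonical parameters of any instance; by \Cref{lem:minimum} this range need only be $\exp(\poly(m,1/\epsilon))$ wide, so $\log|\mathcal W|=O(m\log\me)$. The noise-resilience bound \Cref{lem:subaddF} shows that rounding any target vector to its nearest member of $\mathcal W$ perturbs every load by at most a $1+\epsilon$ factor, so it is enough to compete with $\mathcal W$.

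\emph{Uniform convergence.} For a fixed $\bw$ and agent $i$, the quantity $\sum_{k=1}^N\ell_i(P^{(k)},\bw)=\sum_{k,j}x^{(k)}_{i,j}p^{(k)}_{i,j}$ is a sum of independent nonnegative terms, each at most $T/\zeta$ by the small-items assumption, with mean $N\mu_i(\bw)$ and variance at most $(T/\zeta)\cdot N\mu_i(\bw)$. A Bernstein bound then gives, for every $\bw$ with $\min_i\mu_i(\bw)\ge T/2$, that $\frac1N\sum_k\ell_i(P^{(k)},\bw)$ is within a $1\pm O(\epsilon)$ factor of $\mu_i(\bw)$ except with probability $\exp(-\Omega(N\epsilon^2\zeta))$, and gives for every other $\bw$ that $\widehat V(\bw)$ lies below $(1-\epsilon)T$ with the same failure probability. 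Since $\zeta=\Theta(\log m/\epsilon^2)$ and $N=\Theta(\frac{m}{\log m}\log\me)$, we have $N\epsilon^2\zeta=\Omega(m\log\me)=\Omega(\log(m|\mathcal W|))$, so a union bound over all $i\in[m]$ and $\bw\in\mathcal W$ makes all of these estimates hold simultaneously with high probability.

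\emph{A good net point exists.} To see that some $\bw^\star\in\mathcal W$ has $\min_i\mu_i(\bw^\star)\ge(1-O(\epsilon))T$, I would run a thought experiment in which a large number $M$ of i.i.d.\ instances are concatenated (keeping the same $m$ agents) into one instance $\widehat P$. Since $\lsan(\widehat P)\ge\sum_t\lsan(P^{(t)})$ and $\frac1M\sum_t\lsan(P^{(t)})\to T$ almost surely, for $M$ large there is a realization with $\lsan(\widehat P)\ge(1-\epsilon)MT$ on which the uniform-convergence statement above also holds. On such a realization, \Cref{thm:exponential} and \Cref{lem:convergence} supply canonical parameters $\bw^\star$ of $\widehat P$ with $\ell_i(\widehat P,\bw^\star)\ge(1-\epsilon)\lsan(\widehat P)$ for all $i$; because the \ep-allocation of an item depends only on that item, $\ell_i(\widehat P,\bw^\star)=\sum_t\ell_i(P^{(t)},\bw^\star)$, hence $\frac1M\sum_t\ell_i(P^{(t)},\bw^\star)\ge(1-2\epsilon)T$ for all $i$, and rounding $\bw^\star$ into $\mathcal W$ and invoking uniform convergence upgrades this to $\mu_i(\bw^\star)\ge(1-O(\epsilon))T$. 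Putting the pieces together: the output satisfies $\widehat V(\widehat\bw)\ge\widehat V(\bw^\star)\ge(1-O(\epsilon))T$, forcing $\min_i\mu_i(\widehat\bw)\ge(1-O(\epsilon))T$; and for a fresh $P\sim\Ds$, each $\ell_i(P,\widehat\bw)$ is again a sum of independent terms bounded by $T/\zeta$ with mean at least $T/2$, so Bernstein and a union bound over the $m$ agents give $\min_i\ell_i(P,\widehat\bw)\ge(1-O(\epsilon))T$ outside a failure event of probability $m^{-\Omega(1)}$; since the objective is nonnegative this yields $\Es_{P\sim\Ds}[\min_i\ell_i(P,\widehat\bw)]\ge(1-O(\epsilon))T$, as required.

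\emph{The hard part.} The crux is the uniform-convergence bound: a plain Hoeffding estimate would introduce a spurious dependence on $n$ (or an extra $1/\epsilon^2$ in the sample count), so it is essential to use a Bernstein-type inequality exploiting that each item contributes at most $T/\zeta$ to a load of mean $\Theta(T)$ — the small-items parameter $\zeta=\Theta(\log m/\epsilon^2)$ is calibrated precisely so that $N\epsilon^2\zeta$ already dominates $\log|\mathcal W|=O(m\log\me)$ at $N=O(\frac{m}{\log m}\log\me)$. Secondary points requiring care are bounding the spread of canonical parameters (via \Cref{lem:minimum}, after flooring negligible weights) so that $|\mathcal W|=\exp(O(m\log\me))$, and checking that a single large exponent $\alpha$ can be fixed throughout.
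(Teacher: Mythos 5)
Your high-level architecture matches the paper's: restrict attention to a finite net $\mathcal W$ of candidate parameter vectors with $\log|\mathcal W|=O(m\log\me)$, show via Bernstein-type concentration and a union bound that the empirical loads concentrate around their means for all $\bw\in\mathcal W$ simultaneously with $N=O(\frac{m}{\log m}\log\me)$ samples, argue via a ``combined instance'' and superadditivity of $\lsan$ that the net contains a good vector, and finally output the empirical maximizer. The Bernstein calculation and its calibration against $\zeta=\Theta(\log m/\epsilon^2)$ is exactly the right insight, and matches what the paper inherits from Li--Xian via \Cref{lem:d6} and \Cref{lem:d7}. Your existence argument via concatenating $M\to\infty$ i.i.d.\ instances is a small variation on the paper's device $\mathbb P=\bigoplus_P\Pr_{\Ds}[P]\cdot P$ in \Cref{lem:d4}; both rest on \Cref{obv:santasub} and the per-item decomposability of proportional allocation, and both are correct.

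The genuine gap is in your net construction, which is where the paper's \Cref{lem:limited} does most of its work. First, your preprocessing step --- flooring every small $p_{i,j}$ up to $\poly(\epsilon/m)\cdot T$ --- does not in general ``move every agent's load by at most an $O(\epsilon)$ factor.'' The total perturbation to agent $i$'s load is $\sum_j x_{i,j}\cdot(\text{floor}-p_{i,j})^+\le\text{floor}\cdot\sum_j x_{i,j}$, and $\sum_j x_{i,j}$ can be arbitrarily large (of order $n$), so either the bound fails or the floor must scale with $1/n$, which would reintroduce $n$ into the aspect ratio of $P$ and hence into $\log|\mathcal W|$. The paper circumvents this with a three-step preprocessing that you do not replicate: discretizing entries, assigning an $\epsilon/m$ fraction of every item uniformly so agents with monopolist value $a_i\ge\frac m\epsilon\amin$ can be ignored, and then zeroing $p_{i,j}$ when some $p_{k,j}$ exceeds it by $m^3/\epsilon^2$, bounding the total mass zeroed via $\sum_{j\in J_i}p_{i,j}\le\frac{\sum_k a_k}{m^3/\epsilon^2}\le\epsilon\lsan$ using $\lsan\ge\amin/m$. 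Second, your appeal to \Cref{lem:minimum} to bound the spread of canonical parameters is misplaced: that lemma gives an instance-dependent $x_{\min}$ whose magnitude is controlled by $\max_{i,i',j}\frac{p_{i,j}}{p_{i',j}}\cdot\frac{g_{i,j}}{g_{i',j}}$, and with $g_{i,j}=p_{i,j}^\alpha$ and a large exponent this is only bounded uniformly \emph{after} the paper's aspect-ratio preprocessing. The paper instead constructs the net point explicitly: a ratio vector $\bu$ from shortest-path potentials on the auxiliary graph $G_{x^*}$ (requiring the negative-cycle argument of \Cref{lem:negweightSan}), a Sinkhorn-derived vector $\delta$ (\Cref{cor:sinkhorn2}), an aspect-ratio rounding step for $\delta$, and finally $w_i=\tilde\delta_i/u_i^\alpha$ with $\alpha$ chosen to separate $S_j$ from its complement; it is precisely this structured $(\bu,\delta)$ decomposition that yields the $K^{2m}$ net of the right size. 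Without reproducing these steps, the claim that the canonical parameters of the combined instance lie in (and round into) a set of size $\exp(O(m\log\me))$ is unsupported.
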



\begin{theorem}
\label{thm:learning-minmax}
Fix an $\epsilon > 0$ for which the small items assumption holds. Then, there is an (learning) algorithm that samples $O(\frac{m}{\log m} \cdot \log \me)$ independent instances from $\Ds$ and outputs (with high probability) a prediction vector $\bw$ such that using $\bw$ in the proportional allocation scheme gives a \minmax objective of at most $(1+O(\epsilon)) T$ in expectation over instances~$P\sim \Ds$.
\end{theorem}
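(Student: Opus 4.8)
The plan is to mirror the proof of \Cref{thm:learning} for \maxmin, swapping the ``lower bound on the minimum load'' for an ``upper bound on the maximum load'' and invoking the upper‑tail half of \Cref{lem:subaddF}. Fix once and for all the \ep-allocation scheme with $f(p)=p^{\alpha}$ for a suitably large negative $\alpha$; by \Cref{thm:exponential} and \Cref{lem:convergence} this scheme is $(1+\epsilon)$-competitive for load balancing on every instance whose weights lie in the support of $\Ds$ (the needed $|\alpha|$ depends only on the aspect ratio of the weight values occurring in $\Ds$, and can be read off the samples). For a parameter vector $\bw$ write $\widehat{\ell}_i(\bw)=\tfrac1K\sum_{k\in[K]}\ell_i(P^{(k)},\alpha,\bw)$ for its average load on agent $i$ over the sampled instances. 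The learning algorithm is empirical risk minimization over a finite net $\mathcal W$: output $\bw=\argmin_{\bw\in\mathcal W}\max_{i\in[m]}\widehat{\ell}_i(\bw)$. Three facts make this work: (i) some fixed $\bw^{\circ}\in\vps$ has $\Es_{P\sim\Ds}[\max_i\ell_i(P,\alpha,\bw^{\circ})]\le(1+O(\epsilon))T$; (ii) the search can be restricted to a net with $\log|\mathcal W|=O(m\log\tfrac m\epsilon)$ at the cost of one extra $(1+\epsilon)$ factor; (iii) with $K=O(\tfrac m{\log m}\log\tfrac m\epsilon)$ samples, $\widehat{\ell}_i(\bw)$ and $\Es_P[\ell_i(P,\alpha,\bw)]$ differ by at most $\epsilon T$ simultaneously for every $\bw\in\mathcal W$ and every $i$.

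For (i), merge $K$ independent samples into one instance $\bar P$ with $Kn$ items and let $\bar\bw$ be near‑optimal \ep-parameters for $\bar P$ as guaranteed by \Cref{thm:epa}, so $\max_i\ell_i(\bar P,\alpha,\bar\bw)\le(1+\epsilon)\lmks(\bar P)$. An \ep-allocation is \emph{item-separable} — the fractions assigned to an item depend only on that item's weight vector and on $\bar\bw$ — so loads decompose as $\ell_i(\bar P,\alpha,\bar\bw)=\sum_k\ell_i(P^{(k)},\alpha,\bar\bw)$, while $\lmks(\bar P)\le\sum_k\lmks(P^{(k)})$ by assigning each sub‑instance optimally. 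By Chernoff $\sum_k\lmks(P^{(k)})\le(1+\epsilon)KT$ with high probability, and applying fact (iii) to $\bar\bw$ gives $\sum_k\ell_i(P^{(k)},\alpha,\bar\bw)\ge K(\Es_P[\ell_i(P,\alpha,\bar\bw)]-\epsilon T)$. Chaining these and dividing by $K$ yields $\Es_P[\ell_i(P,\alpha,\bar\bw)]\le(1+O(\epsilon))T$ for all $i$, and one more use of the per‑agent concentration swaps the $\max_i$ inside the expectation at cost $O(\epsilon T)$; since this holds with positive probability over the samples, such a fixed $\bw^{\circ}$ exists. For (ii), an argument as in \Cref{lem:minimum} shows that in any near‑optimal vector the ratios $w_i/w_{i'}$ are polynomially bounded in $m$ and in the aspect ratio of the weights; let $\mathcal W$ discretize each coordinate to powers of $1+\epsilon$ within that range. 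Rounding $\bw^{\circ}$ to the closest point $\bw^{\star}\in\mathcal W$ produces a $(1+O(\epsilon))$-approximate parameter vector in the sense of \Cref{lem:subaddF}, so by its upper‑tail bound every load, hence the \minmax objective, grows by at most $1+O(\epsilon)$; thus $\bw^{\star}$ satisfies $\Es_P[\max_i\ell_i(P,\alpha,\bw^{\star})]\le(1+O(\epsilon))T$.

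For (iii), fix $\bw$ and $i$; on a single random instance $\ell_i(P,\alpha,\bw)=\sum_{j\in[n]}x_{i,j}\,p_{i,j}$ is a sum of $n$ independent terms each in $[0,T/\zeta]$ by the small items assumption, with $\mathrm{Var}\le(T/\zeta)\Es[\ell_i]$. Bernstein's inequality gives $\Pr[|\ell_i-\Es[\ell_i]|>\epsilon T]\le\exp(-\Omega(\epsilon^2\zeta))=m^{-\Omega(1)}$ already from one instance, and averaging over $K$ samples improves this to $\exp(-\Omega(\epsilon^2\zeta K))=m^{-\Omega(K)}$. A union bound over the $m$ agents and the $\exp(O(m\log\tfrac m\epsilon))$ vectors in $\mathcal W$ is dominated once $K=\Omega\!\big(\tfrac{m\log(m/\epsilon)}{\epsilon^2\zeta}\big)=\Omega\!\big(\tfrac m{\log m}\log\tfrac m\epsilon\big)$, using $\zeta=\Theta(\log m/\epsilon^2)$ — this is exactly where the small items assumption absorbs the usual $1/\epsilon^2$ sample overhead. (For net points whose true load is very large the variance bound degrades, but such points cannot be selected by the $\argmin$; a standard peeling refinement of Bernstein makes this rigorous.) Putting things together, the returned $\bw$ obeys $\Es_P[\max_i\ell_i(P,\alpha,\bw)]\le\max_i\widehat{\ell}_i(\bw)+O(\epsilon T)\le\max_i\widehat{\ell}_i(\bw^{\star})+O(\epsilon T)\le\Es_P[\max_i\ell_i(P,\alpha,\bw^{\star})]+O(\epsilon T)\le(1+O(\epsilon))T$, which proves the theorem.

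The main obstacle is fact (i): passing from ``good for one fixed instance'' (\Cref{thm:epa}) to ``good in expectation over $\Ds$''. The aggregate‑instance device together with item‑separability of \ep-allocations is what bridges this gap, since it makes the load of the union a sum over sub‑instances, which simultaneously controls $\lmks(\bar P)$ and lets concentration move freely between a single random instance and the aggregate. Everything genuinely specific to \minmax rather than \maxmin is a sign change, plus the observations that we need only the upper‑tail half of \Cref{lem:subaddF} and that inflated net points are never chosen by the $\argmin$ — both immediate.
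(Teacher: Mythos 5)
Your overall plan mirrors the paper's: build an aggregate instance, use item-separability of proportional allocation to decompose loads, restrict to a finite hypothesis class, run ERM, and invoke uniform concentration. The paper also proceeds this way, but routes all three of your ``facts'' through a single structural lemma (\Cref{lem:limited}) and then pipes it through the Li--Xian concentration lemmas. The genuine gap is that your fact (ii) is asserted, not proved --- and it is precisely the load-bearing step.

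Specifically, fact (ii) claims a net $\mathcal W$ with $\log|\mathcal W|=O(m\log\tfrac m\epsilon)$ that contains a $(1+O(\epsilon))$-approximately-optimal parameter vector for \emph{every} instance in the support, and you justify the aspect-ratio bound on $w_i/w_{i'}$ by gesturing at \Cref{lem:minimum}. That lemma only bounds the spread by a quantity $\tau$ that depends on the actual entries of $P$ and $G$; it is not polynomial in $m$ and $\epsilon$ alone, so it does not give a small net. Worse, even if the ratios $w_i/w_{i'}$ were polynomially bounded, rounding $\bw$ itself to powers of $1+\epsilon$ is not enough: for the exponentiated scheme the near-optimal vector has the form $w_i = \delta_i/u_i^\alpha$ with a large $|\alpha|$, so a small perturbation of $u_i$ produces an enormous perturbation of $w_i$, and a naive net on $\bw$ either misses the target or is exponentially large. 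The paper's \Cref{lem:limited} solves this by (a) preprocessing $P$ so that items allocated to distinct agents in an optimum have weight ratios bounded by $\poly(m/\epsilon)$, (b) constructing the ``ratio vector'' $\bu$ via shortest paths in an auxiliary graph (\Cref{lem:negweightSan}, \Cref{lem:restrcitedtransformationSan}), (c) invoking Sinkhorn's theorem (\Cref{thm:sinkhorn}, \Cref{cor:sinkhorn2}) to produce the companion vector $\delta$, and (d) rounding $\bu$ and $\delta$ \emph{separately} so that the net $\dec'(m,\epsilon)$ has size $K^{2m}$ with $K=O(\tfrac m\epsilon\log\tfrac m\epsilon)$. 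None of this appears in your argument, and without it the hypothesis class is not small.

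A secondary issue is fact (i). You build $\bar\bw$ from the same $K$ samples that you then want to concentrate against, and you invoke fact (iii) on $\bar\bw$. This is only legitimate if the uniform concentration has already been shown to cover $\bar\bw$, i.e.\ if $\bar\bw$ (or its rounding) lands in $\mathcal W$ --- but that is exactly what is not established, since the net was never shown to contain near-optimal parameters for the sample aggregate. The paper avoids the issue entirely: it forms the aggregate over the \emph{full support} of $\Ds$ (the instance $\mathbb{P}=\bigoplus_P \Pr_\Ds[P]\cdot P$ in \Cref{lem:d4}), applies \Cref{lem:limited} to $\mathbb{P}$ to obtain a deterministic $\bw^*\in\dec'(m,\epsilon)$ independent of the samples, and only then turns to concentration (\Cref{lem:d6}, \Cref{lem:d7}). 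Your device is reparable once fact (ii) is made rigorous, but as written it is circular.

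In short: the skeleton is right, but the proposal skips the construction and verification of the bounded net, which is where essentially all the technical content of \Cref{sec:learning} lives, and the ``fixed target vector'' in fact (i) should be obtained from the distributional aggregate, not the sample aggregate, to break the dependence.
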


Importantly, the description of the entries of $\bw$ in \Cref{thm:learning} and \Cref{thm:learning-minmax} are bounded. Specifically, let us define $\dec(m,\epsilon) \subseteq \vps$ as follows: (a) for the \maxmin objective, $\bw\in \dec(m,\epsilon)$ if there exist vectors $\mathbf{u},\mathbf{\sw} \in \vps$ such that $w_{i} = \frac{\sw_i}{u_i^\alpha}$ and $u_i,\sw_i \in \left\{\left(\frac{1}{1-\epsilon}\right)^r:  r \in [K]\right\}$ for some $K = O(\me \log \me)$, and (b) for the \minmax objective, $\bw\in \dec'(m,\epsilon)$ if there exist vectors $\mathbf{u},\mathbf{\sw} \in \vps$ such that $w_{i} = \frac{\sw_i}{u_i^\alpha}$ and $u_i,\sw_i \in \left\{(1+\epsilon)^r:  r \in [K]\right\}$ for some $K = O(\me \log \me)$. The vectors $\bw$ produced by the learning algorithm in \Cref{thm:learning} and \Cref{thm:learning-minmax} will satisfy $\bw\in \dec(m, \epsilon)$ and $\bw\in \dec'(m, \epsilon)$ in the respective cases.



\smallskip\noindent{\bf Proof Idea for \Cref{thm:learning} and \Cref{thm:learning-minmax}.}
Recall that in PAC theory, the number of samples needed to learn a function from a family of $N$ functions is about $O(\log N)$. Indeed, restricting $\bw$ to be in the class $\dec(m, \epsilon)$ or $\dec'(m, \epsilon)$ serves this role of limiting the hypothesis class to a finite, bounded set since $|\dec(m, \epsilon)| = |\dec'(m, \epsilon)| = K^{2m}$ where $K = O(\me \log \me)$. Using standard PAC theory, this implies that using about $O(m \log K) = O(m \cdot \log \frac{m}{\epsilon})$ samples, we can learn the ``best'' vector in $\dec(m, \epsilon)$ or $\dec'(m, \epsilon)$ depending on whether we have the \maxmin or \minmax objective. Our main technical work is to show that this ``best'' vector produces an approximately optimal solution when used in proportional allocation. We state this lemma next:

\begin{lemma}
\label{lem:limited}
Fix any $P$. For the \maxmin objective, there exists a learned parameter vector $\bw\in \dec(m,\epsilon)$ which when used in \ep-allocation gives a $1-\Omega(\epsilon)$ approximation. For the \maxmin objective, there exists a learned parameter vector $\bw'\in \dec'(m,\epsilon)$ which when used in \ep-allocation gives a $1+O(\epsilon)$ approximation.  
\end{lemma}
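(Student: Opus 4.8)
The plan is to take the near-optimal \ep-allocation that we already have (from \Cref{thm:epa}/\Cref{thm:exponential} and the structural analysis behind \Cref{lem:convergence}) and \emph{round} its parameters into the finite class $\dec(m,\epsilon)$ (resp.\ $\dec'(m,\epsilon)$), losing only a further $1\pm O(\epsilon)$ factor; the reason we can afford the rounding is the noise resilience proved in \Cref{lem:subaddF}. I describe the \maxmin case; \minmax is symmetric, using the restricted-related / auxiliary-graph construction from the proof of \Cref{lem:convergence} with the multiplicative grid $\{(1+\epsilon)^r\}$ and the factor-$2$ bounds in place of their $(1-\epsilon)$ analogues.

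\textbf{Reduction to a grid-aligned instance.} First round every weight $p_{i,j}$ down to the nearest power of $\tfrac1{1-\epsilon}$; by homogeneity and monotonicity this changes $\lsan(P)$ by at most a $1-\epsilon$ factor, and it makes every edge cost in the auxiliary graph $G_{x^*}$ from the proof of \Cref{lem:convergence} an integer multiple of $\ln\tfrac1{1-\epsilon}$, so each coordinate $u_i=e^{c^*_i}$ of the ratio vector is automatically a power of $\tfrac1{1-\epsilon}$. A cheapest path uses at most $m$ edges, so $u_i$ lies in $\{(\tfrac1{1-\epsilon})^r:r\in[K]\}$ for $K$ proportional to $m$ times the largest edge cost; a preprocessing step using the small-items assumption (pre-assigning items that alone saturate an agent and controlling negligibly small weights) makes that largest edge cost $O(\log\poly(m,\tfrac1\epsilon))$, hence $K=O(\tfrac m\epsilon\log\tfrac m\epsilon)$.

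\textbf{Building a member of $\dec(m,\epsilon)$.} Recall that the optimal $x^*$ is supported, per item $j$, on the set $S_j$ of agents maximizing $p_{i,j}/u_i$, and on these supports it is the proportional split for the Sinkhorn scalings $\boldsymbol{\delta}$ of the restricted-related instance (\Cref{thm:sinkhorn}); one shows these scalings can be taken in a range of the same form as the $u_i$, and we round each to the nearest grid point $\tilde\delta_i$. Fix a single exponent $\alpha=\Theta(\tfrac m\epsilon\log\tfrac m\epsilon)$, large enough (for every instance) that $(1+\epsilon)^\alpha\ge m\cdot\max_i\delta_i/(\epsilon\min_i\delta_i)$. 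For $i\notin S_j$ and $k\in S_j$ the ratio $\tfrac{p_{i,j}/u_i}{p_{k,j}/u_k}$ is a power of $\tfrac1{1-\epsilon}$ strictly below $1$, hence $\le 1-\epsilon$, so the \ep-allocation with parameters $w_i=\delta_i/u_i^\alpha$ places only an $O(\epsilon)$ fraction of each item outside $S_j$ and is $(1\pm\epsilon)$-close to the $\boldsymbol{\delta}$-proportional split inside $S_j$; hence it already gives every agent load $(1-O(\epsilon))\lsan(P)$. Now pass to the grid: $\tilde\bw=(\tilde\delta_i/u_i^\alpha)_i\in\dec(m,\epsilon)$ is $\tfrac1{1-\epsilon}$-approximate to $(\delta_i/u_i^\alpha)_i$, so by \Cref{lem:subaddF} (with transformation matrix $G_{i,j}=p_{i,j}^\alpha$ and $\eta=\tfrac1{1-\epsilon}$) every agent's load changes by at most a $\tfrac1{1-\epsilon}$ factor. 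Therefore $\tilde\bw\in\dec(m,\epsilon)$ yields load at least $(1-\epsilon)(1-O(\epsilon))\lsan(P)$, and undoing the weight rounding, a $1-\Omega(\epsilon)$ approximation for the original $P$. The \minmax case is identical with $\dec'(m,\epsilon)$.

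\textbf{Main obstacle.} The hard part is not the rounding-plus-noise-resilience step, which is routine, but showing that a \emph{single} exponent $\alpha$ and a geometric grid with only $K=O(\tfrac m\epsilon\log\tfrac m\epsilon)$ steps suffice \emph{uniformly over all instances $P$}. That amounts to bounding, uniformly, the spread of the ratio vector $\bu$ and of the Sinkhorn scalings $\boldsymbol{\delta}$: the former reduces (via the shortest-path description above) to making the per-column weight spread polynomially bounded, which is where the small-items assumption is invoked; the latter requires a quantitative version of Sinkhorn's theorem for the $0/1$ admissibility matrices arising here, and this is the technically delicate ingredient.
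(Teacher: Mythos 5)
Your high-level architecture matches the paper's proof: preprocess $P$ to a discrete grid, build a ratio vector $\bu$ from shortest paths in an auxiliary graph whose edge costs compare weight ratios on items in the support of $x^*$, obtain Sinkhorn scalings $\boldsymbol\delta$ on the induced restricted-related instance, pick one large exponent $\alpha$, set $w_i=\delta_i/u_i^\alpha$, and round everything to the grid paying only a $1\pm O(\epsilon)$ factor (e.g.\ via \Cref{lem:subaddF}). You also correctly identify that the crux is a \emph{uniform} bound on the spread of $\bu$ and $\boldsymbol\delta$. However, both of your proposed tools for those bounds are the wrong ones, and as stated the argument does not close.

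First, you attribute the bound on the spread of $\bu$ to the small-items assumption. That assumption only controls $p_{i,j}$ relative to $T$; it says nothing about the within-column ratios $p_{k,j}/p_{i,j}$ that determine the edge costs of $G_{x^*}$, so it cannot by itself give a polynomially bounded edge cost. The paper's actual mechanism is different: for \minmax it simply sets $p_{i,j}=\infty$ when some competitor $k$ has $p_{i,j}/p_{k,j}>m/\epsilon$; for \maxmin it first commits an $\epsilon/m$-fraction of \emph{every} item to \emph{every} agent, which (via a monopolist-value argument) lets it ignore agents with $a_i\ge(m/\epsilon)\cdot a_{\min}$, and then zeroes out $p_{i,j}$ whenever some $k$ has $p_{k,j}/p_{i,j}>m^3/\epsilon^2$, losing only an $\epsilon\cdot\lsan$ additive amount. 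Without a step of this kind the shortest-path costs can be arbitrarily negative, and no fixed $K$ works. Your phrase ``pre-assigning items that alone saturate an agent'' is not what happens here and would not control the ratios.

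Second, you flag bounding $\boldsymbol\delta$ as needing ``a quantitative version of Sinkhorn's theorem,'' which you call the delicate ingredient. The paper avoids this entirely with an elementary compression argument: sort $\delta_{i_1}\le\cdots\le\delta_{i_m}$, and whenever $\delta_{i_{k+1}}/\delta_{i_k}>m/\epsilon$, shrink that gap to exactly $m/\epsilon$ (scaling the larger block down). Because in a proportional split the agents below such a gap jointly receive at most an $\epsilon$-fraction of any item, the split restricted to $S_j$ changes by at most a $1-\epsilon$ factor. This gives $\delta'_i\le(m/\epsilon)^m$ directly and is both simpler and sufficient, whereas a ``quantitative Sinkhorn'' bound is neither what the paper uses nor obviously available. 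A smaller point: the auxiliary graph for \maxmin uses $\max_j p_{k,j}/p_{i,j}$ over the support of $x^*$, i.e.\ it is the mirror of the \minmax construction in \Cref{lem:negweight}, not that construction verbatim.
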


\begin{toappendix}
\subsection{Proof of \Cref{lem:limited}}\label{sec:proof-limited}

\subsubsection{Preprocessing: Modification of $P$}
We will not show \Cref{lem:limited} directly on an arbitrary matrix $P$. Instead, we will first ``preprocess'' $P$ to establish some properties that will help us show \Cref{lem:limited}. 

The first step performs discretization. For the \maxmin objective, we round down each $p_{i, j}$ value to an integer power of $\frac{1}{1-\epsilon}$. This changes the optimal \maxmin objective by at most $1-\epsilon$. Similarly, for the \minmax objective, we round up each $p_{i, j}$ value to an integer power $1+\epsilon$. This changes the optimal \minmax objective by at most a factor of $1+\epsilon$.

In the second step, the goal is to ensure that the ratio between any two entries $p_{i, j}$ and $p_{k, j}$ is bounded. For the \minmax objective, this is simple: if $\frac{p_{i,j}}{p_{k,j}} > \me$ for some $k\in [m]$, then we set $p_{i,j} = \infty$, i.e., $x_{i, j} = 0$. This transformation increases the \minmax objective by at most a factor of $1+\e$. 

For the \maxmin objective, the second step is more complicated. We modify the online allocation algorithm to assign an $\frac{\epsilon}{m}$ fraction of each item to every agent. Since we still have a $1-\epsilon$ fraction of every item left, this step changes the optimal \maxmin objective by at most a factor of $1-\epsilon$. But, what does this allocation of $\e$-fraction of each item achieve? Let $a_i = \sum_{j\in [n]} p_{i, j}$ denote the {\em monopolist value} of agent $i\in [m]$, i.e., the total utility if all items were assigned to agent $i$. We assume that the values of $a_i$ for all $i\in [m]$ are known to the algorithm -- in fact, these values can also be learned to sufficient accuracy but we ignore this additional learning step for simplicity and assume these values are known. Now, note that $\lsan \le \amin$ where $\amin$ is defined as $\min_{i\in[m]} a_i$.  The allocation of $\e$ fraction of every item ensures that every agent $i\in [m]$ with a {\em large} monopolist value satisfying $a_i \ge \frac{m}{\epsilon}\cdot \amin$ gets a load of at least $\frac{\epsilon}{m}\cdot a_i \ge \amin \ge \lsan$ just from this $\e$-allocation. Therefore, we ignore these agents in the rest of the analysis and assume $a_i < \frac{m}{\epsilon} \cdot \amin$ for every agent $i\in [m]$.

Now, fix any agent $i\in [m]$ and define $J_i$ to be the set of items $j\in [m]$ for each of which there exists another agent $k(j)$ such that $\frac{p_{k,j}}{p_{i,j}} > \met$. Then, we set $p_{i,j} = 0$.
This modification decreases the optimal \maxmin objective by a factor of at most $1-\epsilon$ because:
\[
    \sum_{j\in J_i} p_{i, j} 
    \le \frac{\sum_{j\in J_i} p_{k(j), j}}{\met}
    \le \frac{\sum_k a_k}{\met} 
    \le^{\rm (by~first~preprocessing~step)} \frac{m\cdot \frac{m}{\epsilon} \cdot \amin}{\met}
    = \epsilon\cdot \frac{\amin}{m}
    \le \epsilon\cdot \lsan,
\]    
where the last step follows from $\lsan \ge \frac{\amin}{m}$ by a uniform assignment.

So, in essence, we can assume for both the \minmax and \maxmin objectives, the following holds for any item $j\in [n]$: if $x^*_{i, j}, x^*_{k, j} \not= 0$ in an optimal solution $x^*$, then we can assume that $\frac{p_{i, j}}{p_{k, j}} \le \poly(m/\e)$.

\subsubsection{Proof of \Cref{lem:limited} for the \maxmin objective}
We now prove \Cref{lem:limited} for the \maxmin objective. The proof for the \minmax objective is similar, and we omit it for brevity.

Recall that to define any vector $\bw\in \dec(m, \e)$, we need to define two vectors $\bu$ and $\delta$. We define these vectors for the vector $\bw$ in \Cref{lem:limited} separately in the next two subsections. Note that \Cref{lem:limited} is existential; hence, we can use the optimal solution, for instance, in the proof. 

\smallskip\noindent{\bf The vector $\bu$.}
Given a preprocessed matrix $P\in \vps$ and an optimal solution $x^*$ for the \maxmin objective, we define an auxiliary directed graph $G_{x^*}(V,E)$ as follows:
\begin{itemize}
    \item The set of vertices $V = [m]\cup \{0\}$, i.e., the agents and a special vertex labeled $0$.
    \item The set of edges $E = ([m]\times [m]) \cup (\{0\} \times [m])$, i.e., all edges between (ordered) pairs of vertices representing the agents (including self loops) and edges from the special vertex to all the vertices representing the agents. Note that the set of vertices and edges does not depend on $x^*$.
    \item We now define a cost function on the edges that does depend on $x^*$. Edges in $[m]\times [m]$ have the following costs:
    \[ 
        c_{i,k} = \ln \left(\max_{j\in [n]}\left\{\frac{p_{k,j}}{p_{i,j}} \;\bigg\vert\; x^*_{i,j}>0\right\}\right).
    \]
    In other words, the cost of an edge $(i, k)$ is the logarithm of the maximum ratio of the weight of an item for $k$ to that for $i$ among those items that have a non-zero allocation to agent $i$ in $x^*$.
    In addition, all edges incident on the special vertex have cost $0$, i.e., $c_{0,k} = 0$ for all $k\in[m]$.
\end{itemize}

Similar to \Cref{lem:negweight}, one can verify that $G_{x^*}$ does not contain a negative cycle; if not, one can compute a different assignment in which the load of some agent increases while keeping all other agents at the same load.

\begin{lemma}
\label{lem:negweightSan}
    Given a processing matrix $P$ and an optimal solution $x^*$ resulting in an objective value of $\lsan(P)$ for the \maxmin problem, the auxiliary graph $G_{x^*}$ does not contain a negative cycle.
\end{lemma}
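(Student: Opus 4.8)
The plan is to mirror the proof of \Cref{lem:negweight}, replacing ``decrease the load of an agent'' by ``increase the load of an agent''. The one genuinely new ingredient is that, unlike the \minmax case, not every optimal \maxmin solution is balanced (an agent whose items are near-worthless to everyone else can be overloaded for free), so I first pin down a convenient $x^*$: I would take $x^*$ to be an optimal \maxmin solution that, among all optimal solutions, minimizes the total load $\sum_{i}\ell_i(P,x^*)$. For such an $x^*$ I claim (this is the \maxmin analogue of \Cref{lem:opt-equal}) that (i) every agent has load exactly $\lsan(P)$, and, more importantly, (ii) no agent's load can be strictly increased while the loads of all other agents are kept unchanged: such a modified allocation would again be optimal (its minimum load is still $\lsan(P)$) but would have strictly larger total load, contradicting the choice of $x^*$. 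Property (ii) is the statement that the negative cycle will contradict.

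Now suppose, for contradiction, that $G_{x^*}$ contains a negative cycle $i_1\to i_2\to\cdots\to i_k\to i_1$, and for each edge let $j_r=\arg\max_{j:\,x^*_{i_r,j}>0}\frac{p_{i_{r+1},j}}{p_{i_r,j}}$ be the item realizing its cost; note $x^*_{i_r,j_r}>0$ for every $r$. Exactly as in the proof of \Cref{lem:negweight}, I would build a perturbed allocation $x'$ by shifting an infinitesimal amount of each item $j_r$ between $i_r$ and $i_{r+1}$, with shift magnitudes $\epsilon_1,\dots,\epsilon_k$ cascading via $\epsilon_{r+1}=\epsilon_r\cdot\frac{p_{i_{r+1},j_r}}{p_{i_{r+1},j_{r+1}}}$ so that the loads of $i_2,\dots,i_k$ are unchanged; feasibility for a small enough $\epsilon_1>0$ is guaranteed by $x^*_{i_r,j_r}>0$. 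The same telescoping calculation as in \Cref{lem:negweight} shows that the net change in $\ell_{i_1}$ is a nonzero multiple of $\prod_{r=1}^{k}\frac{p_{i_{r+1},j_r}}{p_{i_r,j_r}}-1=e^{\sum_r c_{i_r,i_{r+1}}}-1$; choosing the orientation of the shifts appropriately and using that $\sum_r c_{i_r,i_{r+1}}<0$ (negative cycle), this yields $\ell_{i_1}(P,x')>\ell_{i_1}(P,x^*)$ while every other load is unchanged. This contradicts property (ii), so $G_{x^*}$ has no negative cycle, which proves \Cref{lem:negweightSan}.

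I expect two points to require care, and only one of them is substantive. The substantive point is the \maxmin analogue of \Cref{lem:opt-equal}: because the maximization objective does not force all optimal allocations to be balanced, one cannot quote \Cref{lem:opt-equal} verbatim and must instead select $x^*$ through the extremal (minimum-total-load) argument above and verify properties (i) and (ii) for it. The second point is routine but fiddly bookkeeping: matching the orientation of the cyclic mass-shift, the $\max$ in the definition of $c_{i,k}$, and the sign of the cycle so that the unbalanced agent $i_1$ ends up gaining rather than losing load, while keeping every individual shift feasible --- which is exactly why one works with the items $j_r$ realizing the edge costs, for which $x^*_{i_r,j_r}>0$.
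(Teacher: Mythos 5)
Your reduction to the minmax template has two sign problems, and neither is mere bookkeeping.

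First, the shift itself. With the paper's edge costs $c_{i,k}=\ln\max_{j:x^*_{i,j}>0}\frac{p_{k,j}}{p_{i,j}}$ and the only shift direction you are entitled to (remove $\epsilon_r>0$ of item $j_r$ from $i_r$, give it to $i_{r+1}$ --- this is the direction guaranteed feasible by $x^*_{i_r,j_r}>0$), the telescoping gives $\ell'_{i_1}-\ell^*_{i_1}=\epsilon_1 p_{i_1,j_1}\bigl(\prod_r\tfrac{p_{i_{r+1},j_r}}{p_{i_r,j_r}}-1\bigr)=\epsilon_1 p_{i_1,j_1}\bigl(e^{\sum_r c_{i_r,i_{r+1}}}-1\bigr)$ with $\epsilon_1 p_{i_1,j_1}>0$. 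A \emph{negative} cycle therefore makes this \emph{negative}: $\ell_{i_1}$ goes down, exactly as in \Cref{lem:negweight}. Your ``choosing the orientation of the shifts appropriately'' cannot flip this, because the reverse shift requires $x^*_{i_{r+1},j_r}>0$, which you do not know. And a drop in $\ell_{i_1}$ below $\lsan(P)$ while all others stay at $\lsan(P)$ is, for a maximization objective, just a strictly worse solution --- it contradicts nothing. (If you instead trace what \Cref{lem:restrcitedtransformationSan} actually needs, namely potentials with $\ln u_k-\ln u_i\ge c_{i,k}$, i.e.\ \emph{longest}-path distances from $0$, you will see the usable hypothesis is ``no \emph{positive} cycle,'' and under that hypothesis the same feasible shift \emph{raises} $\ell_{i_1}$. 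The sign discrepancy with the paper's one-line hint should have been a red flag.)

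Second, and independently, your extremal choice of $x^*$ argues in the wrong direction. You take $x^*$ to \emph{minimize} total load among optimal solutions and then assert (ii): no agent's load can be strictly increased while others are fixed, ``because the modified allocation would have strictly larger total load, contradicting the choice of $x^*$.'' But a modification with \emph{larger} total load is perfectly consistent with $x^*$ being a minimizer; minimality only rules out a strictly \emph{smaller} total. So (ii) is not a consequence of your choice, and the contradiction you announce does not materialize. What actually works: under the paper's standing assumption $p_{i,j}>0$, \emph{every} optimal \maxmin allocation is automatically balanced (if some $\ell_i>\lsan(P)$, hand an infinitesimal slice of any item $i$ holds to each other agent; every load strictly exceeds $\lsan(P)$, contradiction), so property (i) is free and no extremal selection is needed. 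Then, once a (positive) cycle lets you push $\ell_{i_1}$ strictly above $\lsan(P)$ while all others sit at $\lsan(P)$, you finish by that same redistribution trick: take a tiny amount from $i_1$ and give a sliver to each other agent, making every load exceed $\lsan(P)$ and contradicting that $\lsan(P)$ is the optimum. That final redistribution is the genuine \maxmin substitute for the \minmax appeal to \Cref{lem:opt-equal}, and it is the step missing from your write-up.
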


For any agent $i\in [m]$, \Cref{lem:negweightSan} allows us to define $c^*_i$ as the minimum cost of a path from vertex $0$ to vertex $i$ in the auxiliary graph $G_{x^*}$.
We now define a {\em ratio vector} $\bu \in \vps$, where $u_{i} = e^{c^*_{i}}$.
As in \Cref{lem:restrcitedtransformation}, one can show that:

\begin{lemma}
\label{lem:restrcitedtransformationSan}
Given a matrix $P$ and an optimal solution $x^*$ resulting in a \maxmin objective of $\lsan(P)$, we have that any $i\in [m]$, $j\in [n]$, if there exists some agent $k\in [m]$ such that $\frac{p_{k,j}}{u_k} > \frac{p_{i,j}}{u_i}$, then $x^*_{i,j} = 0$.
\end{lemma}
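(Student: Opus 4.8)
The plan is to follow the proof of \Cref{lem:restrcitedtransformation} almost verbatim, dualizing every inequality: for the \maxmin objective the arc costs of $G_{x^*}$ are defined through a \emph{maximum} over the items assigned to an agent (rather than a minimum), and items should flow to the agent with the \emph{largest} ratio $p_{i,j}/u_i$ (rather than the smallest), so every ``$\le$'' in the \minmax argument turns into a ``$\ge$'' and vice versa.

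Concretely, I would argue by contradiction. Suppose $x^*_{i,j}>0$ for some item $j$ and agent $i$, yet there is an agent $k$ with $p_{k,j}/u_k > p_{i,j}/u_i$. Since $x^*_{i,j}>0$, the item $j$ participates in the maximum defining the cost of the arc $(i,k)$, hence
\[
    c_{i,k}\;\ge\;\ln\left(\frac{p_{k,j}}{p_{i,j}}\right)\;>\;\ln u_k-\ln u_i,
\]
where the last step is just the hypothesis $p_{k,j}/u_k>p_{i,j}/u_i$ rewritten multiplicatively and then taking logarithms. On the other hand, \Cref{lem:negweightSan} guarantees that the $0$-rooted path potentials $c^*_i$ are well defined, that $c^*_i=\ln u_i$ and $c^*_k=\ln u_k$ by construction of $\bu$, and that they satisfy the optimality (triangle) inequality across the arc $(i,k)$, namely $c^*_k\ge c^*_i+c_{i,k}$. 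Chaining the two displays,
\[
    c^*_k\;\ge\;c^*_i+c_{i,k}\;>\;\ln u_i+(\ln u_k-\ln u_i)\;=\;\ln u_k\;=\;c^*_k,
\]
a contradiction; therefore $x^*_{i,j}=0$, which is exactly the claim.

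The only genuinely delicate point — and the step I expect to be the main obstacle — is the sign/orientation bookkeeping needed to justify $c^*_k\ge c^*_i+c_{i,k}$, which is the mirror image of the $c^*_k\le c^*_i+c_{i,k}$ used in \Cref{lem:restrcitedtransformation}. One has to set up $G_{x^*}$ (and state \Cref{lem:negweightSan} with the matching cycle orientation) for the \maxmin case so that the no-cycle property yields exactly this inequality; everything else is the two-line computation above. Once this is in place, the immediate corollary — that $x^*_{i,j}>0$ forces $i\in S_j:=\operatorname{argmax}_{i'\in[m]}\; p_{i',j}/u_{i'}$ — follows exactly as \Cref{col:restrcitedtransformation} follows from \Cref{lem:restrcitedtransformation}, and this is the form in which the statement will feed into the subsequent reduction to a restricted related instance and the Sinkhorn-based construction of the parameter vector $\bw$.
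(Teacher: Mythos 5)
Your chain of inequalities has the right shape, and the overall dualization is indeed the intended route, but the step you yourself flag as ``delicate'' --- namely $c^*_k \ge c^*_i + c_{i,k}$ --- is \emph{not} a bookkeeping afterthought: with the paper's stated definitions it is simply false, and the fix needs to be made explicit. The text defines $c^*_i$ as the \emph{minimum} cost of a path from $0$ to $i$, and a minimum-cost potential can only satisfy the relaxation inequality $c^*_k \le c^*_i + c_{i,k}$. Combined with your (correct) observations that $c_{i,k} \ge \ln(p_{k,j}/p_{i,j}) > \ln u_k - \ln u_i$, this yields $c^*_k - c^*_i \le c_{i,k}$ alongside $c_{i,k} > c^*_k - c^*_i$, which are perfectly consistent --- no contradiction arises, and the proof stalls. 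So you should not assert that \Cref{lem:negweightSan} ``guarantees'' $c^*_k \ge c^*_i + c_{i,k}$; as the lemma is phrased (no negative cycle, min-cost potentials) it guarantees the \emph{opposite} inequality.

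The correct patch is concrete. Keep $c_{i,k} = \ln\bigl(\max_{j}\{p_{k,j}/p_{i,j} : x^*_{i,j}>0\}\bigr)$ but define $c^*_i$ as the \emph{maximum} cost of a ($0$-rooted, simple) path to $i$, so that for every arc $(i,k)$ one has $c^*_k \ge c^*_i + c_{i,k}$. For these potentials to be finite, the relevant acyclicity condition is ``no \emph{positive} cycle,'' not ``no negative cycle'': redoing the cyclic-transfer calculation from the proof of \Cref{lem:negweight}, a cycle with $\sum_r c_{i_r,i_{r+1}} > 0$ gives $\prod_r p_{i_{r+1},j_r}/p_{i_r,j_r} > 1$, and the resulting reassignment \emph{raises} $\ell_{i_1}$ while keeping all other loads fixed; this leaves the minimum load unchanged (the modified allocation is still \maxmin-optimal) but violates the equal-load characterization of an optimal \maxmin solution, which is the contradiction. (Equivalently, one may negate the arc costs to $\ln\min_j\{p_{i,j}/p_{k,j}\}$, keep min-cost potentials and ``no negative cycle,'' and set $u_i = e^{-c^*_i}$.) Once you commit to one of these two equivalent conventions, your two-line computation --- $c_{i,k} \ge \ln(p_{k,j}/p_{i,j}) > c^*_k - c^*_i$ together with $c^*_k \ge c^*_i + c_{i,k}$ --- produces $c^*_k > c^*_k$ and the lemma follows exactly as you intend. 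In short: the argument is right, but the ``sign/orientation bookkeeping'' you defer is the whole content of \Cref{lem:negweightSan} for the \maxmin case and must be written down, because the literal wording of the paper's construction points the inequality the wrong way.
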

We also note the following property of $\bu$ that follows immediately from the third preprocessing step:
\begin{lemma}
\label{lem:power}
    Each coordinate of vector $\bu$ is an integer power of $\frac{1}{1-\e}$.
\end{lemma}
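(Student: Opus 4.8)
The plan is to trace the definition of $\bu$ back through the auxiliary graph $G_{x^*}$ and invoke the discretization carried out in the first preprocessing step, after which every weight $p_{i,j}$ is an integer power of $\frac{1}{1-\e}$. First I would observe that consequently every ratio $\frac{p_{k,j}}{p_{i,j}}$ is an integer power of $\frac{1}{1-\e}$ (with a possibly negative exponent), and hence so is $\max_{j\in[n]}\left\{\frac{p_{k,j}}{p_{i,j}} : x^*_{i,j}>0\right\}$, being a maximum of finitely many such powers. Taking logarithms shows that each edge cost $c_{i,k}$ with $i,k\in[m]$ is an integer multiple of $\ln\frac{1}{1-\e}$; the edges incident on the special vertex $0$ have cost $0 = 0\cdot\ln\frac{1}{1-\e}$, which is trivially an integer multiple.

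Next I would use \Cref{lem:negweightSan}: since $G_{x^*}$ contains no negative cycle, the minimum cost of a walk from $0$ to $i$ is attained by a simple path, so $c^*_i$ is a finite sum of edge costs, each an integer multiple of $\ln\frac{1}{1-\e}$. Therefore $c^*_i = k_i\cdot\ln\frac{1}{1-\e}$ for some integer $k_i$ (negative exponents are allowed, since individual edge costs may be negative). Exponentiating, $u_i = e^{c^*_i} = \left(\frac{1}{1-\e}\right)^{k_i}$, which is an integer power of $\frac{1}{1-\e}$, as claimed.

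There is essentially no obstacle here; the only point requiring a word of care is the sign convention for "integer power." Since the direct edge $(0,i)$ has cost $0$, we in fact have $c^*_i\le 0$ and thus $u_i\le 1$, so $k_i\le 0$; this is consistent with later using $\bu$ as the $u$-component of a parameter vector in $\dec(m,\e)$, where $u_i\in\left\{\left(\frac{1}{1-\e}\right)^r\right\}$ ranges over (in particular) nonpositive exponents. I would simply state and use the lemma with integer (possibly negative) exponents and not dwell on this.
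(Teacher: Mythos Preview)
Your argument is correct and is precisely the reasoning the paper leaves implicit: the paper merely states that the lemma ``follows immediately from the [discretization] preprocessing step,'' and you have spelled out why---after discretization every nonzero $p_{i,j}$ is an integer power of $\tfrac{1}{1-\e}$, hence so is every ratio appearing in an edge cost, hence every edge cost is an integer multiple of $\ln\tfrac{1}{1-\e}$, and so is every shortest-path cost $c^*_i$. Your invocation of \Cref{lem:negweightSan} to ensure $c^*_i$ is well-defined (attained by a simple path) is a nice touch the paper omits.
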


\smallskip\noindent{\bf Bounding the aspect ratio of the ratio vector.}
We show the following:
\begin{lemma}\label{lem:aspect}
For any $i, k\in [m]$, the aspect ratio of the ratio vector $\bu$ is bounded as follows:
\[
    \frac{u_k}{u_i} \le \left(\met\right)^m.
\]
\end{lemma}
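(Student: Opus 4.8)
The plan is to exploit the shortest-path structure underlying the definition of $\bu$. Recall that $u_i = e^{c^*_i}$, where $c^*_i$ is the minimum cost of a path from the special vertex $0$ to vertex $i$ in the auxiliary graph $G_{x^*}$, and that by \Cref{lem:negweightSan} this graph contains no negative cycle. Since $G_{x^*}$ contains the direct edge $0\to i$ of cost $0$, we have $c^*_i\le 0$, hence $u_i\le 1$, for every agent $i$; in particular $u_k\le 1$. It therefore suffices to establish the matching lower bound $u_i\ge (\met)^{-m}$ for every $i\in[m]$, since then $\frac{u_k}{u_i}\le (\met)^m$ as claimed.

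To lower bound $c^*_i$, I would first argue that the minimizing path can be taken to be simple: because there are no negative cycles, excising a cycle from a minimizing walk never increases its cost, and since $\bu\in\vps$ the value $c^*_i$ is finite, so every edge on this simple path has finite cost. As $G_{x^*}$ has $m+1$ vertices, a simple $0$--$i$ path uses at most $m$ edges; its first edge leaves $0$ and has cost $0$, so the path accrues cost only through its at most $m-1$ inter-agent edges.

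The crux is to show that every inter-agent edge $(a,b)$ on this path has cost $c_{a,b}\ge -\ln(\met)$. By definition $c_{a,b}=\ln\bigl(\max_{j\,:\,x^*_{a,j}>0}\tfrac{p_{b,j}}{p_{a,j}}\bigr)$, and since this edge has finite cost the maximum is positive, attained at some item $j^\star$ with $p_{a,j^\star}>0$ and $p_{b,j^\star}>0$. Here I invoke the third preprocessing step: since $p_{b,j^\star}$ was not zeroed out, no agent has weight for $j^\star$ exceeding $\met\cdot p_{b,j^\star}$; applying this to agent $a$ yields $\tfrac{p_{b,j^\star}}{p_{a,j^\star}}\ge \meti$, and hence $c_{a,b}\ge\ln(\meti)=-\ln(\met)$. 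Summing over the at most $m-1$ inter-agent edges gives $c^*_i\ge -(m-1)\ln(\met)\ge -m\ln(\met)$ (using $\met\ge 1$), i.e. $u_i\ge (\met)^{-m}$, which completes the argument.

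The one point that needs care is the possibility of edges of cost $-\infty$ --- an agent $b$ none of whose $x^*$-allocated items carries positive weight for $a$. This is harmless because $\bu\in\vps$ forces every $c^*_i$, and therefore every edge on a minimizing $0$--$i$ path, to be finite; such $-\infty$ edges are thus never used on the path realizing $c^*_i$.
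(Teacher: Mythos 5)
Your proof is correct and follows essentially the same route as the paper: $u_i\le 1$ from the zero-cost edge out of vertex $0$, a per-edge lower bound on cost derived from the preprocessing bound on weight ratios, and a sum over the at most $m$ edges of a shortest $0$--$i$ path. Your handling of possible $-\infty$ edges is a point the paper leaves implicit, though note that invoking ``$\bu\in\vps$'' to conclude $c^*_i$ is finite is a bit circular since that containment is part of what defines $\bu$; the cleaner justification is that a $-\infty$ edge together with its reverse would itself form a negative $2$-cycle, contradicting \Cref{lem:negweightSan}.
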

\begin{proof}
First, note that since there is a directed edge of zero cost from vertex $0$ to every the vertex for every agent $i\in [m]$, we have
\[
u_i\le 1 \text{ for all } i\in [m].
\]
Next, we bound the minimum value of $u_i$ for any agent $i$. Recall that by preprocessing, we have the following: 
if $p_{i,j},p_{k,j}>0$, then $\frac{p_{i,j}}{p_{k,j}} \le \frac{\epsilon^2}{m^3}$ if $x^*_{i, j} > 0$. Therefore, $c_{i, k}\ge \ln \frac{\epsilon^2}{m^3}$ for all $i, k\in [m]$. Since the shortest path contains at most $m$ edges, therefore $c^*_i \geq m\cdot \ln \frac{\epsilon^2}{m^3}$, i.e., 
\[
    u_i \ge \left(\frac{\epsilon^2}{m^3}\right)^m \text{ for all } i\in [m].
\]    
We can now conclude the lemma from the upper and lower bounds on $u_i$ for all $i\in [m]$.
\end{proof}

\smallskip\noindent{\bf The vector $\sw$.}
We first define a restricted related instance of the problems based on the value of $\bu$.
For any item $j \in [n]$, let $\gamma_j = \displaystyle\max_i \frac{p_{i,j}}{u_i} $
and $S_j =  \arg\max_i \frac{p_{i,j}}{u_i}
= \left\{i \in [m]: \frac{p_{i,j}}{u_i} = {\gamma_j} \right\}$.
By \Cref{lem:restrcitedtransformationSan}, there exists an optimal solution $x^*$ such that $x_{i,j} = 0$ if $i \notin S_j$.


\Cref{lem:restrcitedtransformationSan} allows us to convert any general matrix $P$ to a restricted related instance while preserving the value of $\lsan(P)$. 

We define the following:
\begin{itemize}
    \item an utility vector $\hat{\bp}  \in \vpsn$ where $\hat{p}_j = \gamma_j$.
    \item a speed vector $\hat{\mathbf{v}} \in \vps$ where $\hat{v}_i = 1/{u_i}$. 
    \item an admissibility matrix $\hat{E} \in \{0,1\}^{m \times n}$ where $\hat{E}_{i, j} = 1$ if and only if $i\in S_j$.
\end{itemize}
Note that by \Cref{lem:restrcitedtransformationSan}, $x^*$ is a feasible solution to this restricted related instance, and produces the same load for every agent:
\[
\ell_i(x^*, \hat{\bp},\hat{v},\hat{E}) = \sum_j x^*_{i,j} \cdot \frac{\hat{p}_j}{\hat{v}_i}= \sum_{j:x^*_{i,j}>0}  x^*_{i,j}  \gamma_j u_i = \sum_j x^*_{i,j} p_{i,j} \geq \lsan(P).
\]
Conversely, let $\hat{x}$ be a solution to the restricted related instance. We have:
\[
\ell_i(\hat{x}, \hat{\bp},\hat{v},\hat{E}) 
= \sum_{j:S_j\ni i} \hat{x}_{i,j}\cdot \frac{\hat{p}_j}{\hat{v}_i}
= \sum_{j:S_j\ni i} \hat{x}_{i,j} \gamma_j {u}_i
= \sum_{j:S_j\ni i} \hat{x}_{i,j} p_{i,j} 
= \ell_i(\hat{x}, P).
\]

We now invoke \Cref{thm:sinkhorn}, which yields:
\begin{corollary}
\label{cor:sinkhorn2}
There exists a vector of parameters $\sw \in \vps$ such that the following allocation
\[
    \hat{x}_{i,j}(\sw) = \begin{cases}
                    \frac{\sw_i}{\sum_{i'\in S_j} \sw_{i'}} & \text{ if } i\in S_j \\
                    0 & \text{ otherwise }
                    \end{cases}
\]                    
for the restricted related instance achieves the optimal \maxmin objective (denoted $\lsan(\hat{\bp},\hat{v},\hat{E})$).
\end{corollary}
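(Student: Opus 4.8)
The plan is to replay the proof of \Cref{cor:sinkhorn} essentially verbatim, with the \maxmin objective in place of \minmax and the restricted related instance $(\hat{\bp},\hat{\mathbf v},\hat E)$ constructed above. First I would fix an optimal \maxmin solution $x^*$ for the (preprocessed) matrix $P$. By the \maxmin counterpart of \Cref{lem:opt-equal} every agent's load under $x^*$ equals $\lsan(P)$; by \Cref{lem:restrcitedtransformationSan} we have $x^*_{i,j}=0$ whenever $i\notin S_j$, so $x^*$ is feasible for the restricted related instance, and the load-preservation identities displayed just before the corollary statement show that $x^*$ has the same per-agent load $\lsan(P)$ there as well (so in particular $\lsan(P)=\lsan(\hat{\bp},\hat{\mathbf v},\hat E)$).

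Next I would invoke Sinkhorn's theorem (\Cref{thm:sinkhorn}) with $Z=\hat E$, target column sums $\mathbf c=\hat{\bp}$ (one per item), and target row sums $\mathbf r=\lsan(P)\cdot\hat{\mathbf v}$ (one per agent). The witness matrix required by the hypothesis is $Y_{i,j}=\hat p_j\cdot x^*_{i,j}$: its $j$-th column sums to $\hat p_j\sum_i x^*_{i,j}=\hat p_j$ by feasibility, while its $i$-th row sums to $\sum_{j:\,i\in S_j}\hat p_j\,x^*_{i,j}=\hat v_i\cdot\ell_i(x^*)=\lsan(P)\,\hat v_i$, using $p_{i,j}=\hat p_j/\hat v_i$ for $i\in S_j$ together with the equal-load property of $x^*$; moreover $Y_{i,j}>0$ forces $x^*_{i,j}>0$, hence $i\in S_j$, hence $\hat E_{i,j}=1$. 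Thus the hypothesis of \Cref{thm:sinkhorn} holds, and it yields diagonal matrices $A\in\R^{m\times m}$ and $B\in\R^{n\times n}$ for which $A\hat E B$ has column sums $\hat{\bp}$ and row sums $\lsan(P)\,\hat{\mathbf v}$.

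Finally I would read off $\sw_i=A_{i,i}$. The column-sum constraint forces $B_{j,j}=\hat p_j\big/\sum_{i'\in S_j}\sw_{i'}$, so $(A\hat E B)_{i,j}=\sw_i\,\hat p_j\big/\sum_{i'\in S_j}\sw_{i'}$ for $i\in S_j$ and $0$ otherwise; dividing by $\hat p_j$ identifies $(A\hat E B)_{i,j}/\hat p_j$ with the claimed allocation $\hat x_{i,j}(\sw)$. The row-sum constraint then reads $\sum_j\hat x_{i,j}(\sw)\,\hat p_j=\lsan(P)\,\hat v_i$, i.e.\ $\ell_i(\hat x(\sw),\hat{\bp},\hat{\mathbf v},\hat E)=\lsan(P)$ for every agent $i$; since every load equals $\lsan(P)=\lsan(\hat{\bp},\hat{\mathbf v},\hat E)$, the minimum load is exactly the optimum and $\hat x(\sw)$ is an optimal \maxmin allocation.

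I do not anticipate a genuine obstacle here: the content is the \minmax argument of \Cref{cor:sinkhorn} transcribed with maxima replacing minima and with \Cref{lem:negweightSan}/\Cref{lem:restrcitedtransformationSan} in the roles of their \minmax analogues. The only points that need care are bookkeeping ones — matching the column/row roles in \Cref{thm:sinkhorn} to $\hat{\bp}$ and $\lsan(P)\,\hat{\mathbf v}$ respectively, and invoking the (omitted) \maxmin analogue of \Cref{lem:opt-equal} so that the witness $Y$ actually realizes the prescribed margins.
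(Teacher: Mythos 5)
Your proposal is correct and follows the paper's approach: the paper also constructs the restricted related instance, verifies feasibility and load preservation via \Cref{lem:restrcitedtransformationSan}, and then invokes \Cref{thm:sinkhorn} with witness $Y_{i,j}=\hat p_j\,x^*_{i,j}$ and $\delta_i=A_{i,i}$ (the paper is terse, stating only ``we now invoke Sinkhorn's theorem''). One small point in your favor: you correctly assign the column and row targets ($\mathbf c=\hat{\mathbf p}\in\R^n$, $\mathbf r=\lsan\cdot\hat{\mathbf v}\in\R^m$), whereas the paper's earlier application at \Cref{cor:sinkhorn} appears to swap the roles of $\mathbf c$ and $\mathbf r$ (likely a typo, since $\hat{\mathbf v}\in\R^m$ cannot serve as the column-sum vector $\mathbf c\in\R^n$).
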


Next, we approximate the vector $\delta$ in \Cref{cor:sinkhorn2} with a vector  $\sw'$ with a bounded aspect ratio, and show that this approximation only loses a factor of $1-\epsilon$. In fact, we will show that
$\hat{x}_{i,j}(\sw')  \geq (1-\epsilon) \cdot \hat{x}_{i,j}(\sw)$. 

We give an algorithm for computing $\sw'$. Let $i_1,\dots i_m$ be the ordered indices in increasing order of values of the coordinates of $\sw$, i.e., $\sw_{i_1} \le \sw_{i_2} \le \ldots \le \sw_{i_m}$. Initialize $\sw'_i = \sw_i$ for all $i\in [m]$. Next, we update the values of $\sw'$ iteratively using the following rule in each iteration: for each  $k\in[m-1]$ satisfying the condition $\frac{\sw'_{i_{k+1}}}{\sw'_{i_k}} > \me$, we multiply $\sw'_{i_r}$ by $\frac{\sw'_{i_{k}}}{\sw'_{i_{k+1}}} \cdot \me$ for every $r =\{k+1,\dots, m\}$. In effect, the ratio $\frac{\sw'_{i_{k+1}}}{\sw'_{i_k}}$ becomes $\me$ and the ratios between all other pairs $\frac{\sw'_{i_{k'+1}}}{\sw'_{i_{k'}}}$ for $k' \not= k$ remains unchanged. (A similar trick also appears in \cite{LiX21}.)

The following inequality holds for every item $j\in[m]$ and any agent $i\in S_j$:
\begin{align*}
\frac{\sw'_{i}}{\sum_{k\in S_j}\sw'_{k}} 
\ge (1-\epsilon)\cdot \frac{\sw_{i}}{\sum_{k\in S_j}  \sw_{k}}
\end{align*}
\eat{

Fix a job $j$, and let $i_r$ let $S_1 = S_j \cap 
\{r,\dots,m\}$,
$S_2 = S_j \cap \{i_a: a<r, w_{i_a}/w_{i_r} \geq \mei\} $,
$S_3 = S_j \cap \{i_a: a<r, w_{i_a}/w_{i_r} < \mei\} $,
Assume by scaling that $w_{i_r} = w'_{i_r} = 1$

\begin{align*}
\frac{\dsum_{i_k\in S_j} w_{i_k}}
{\dsum_{i_k\in S_j} w'_{i_k}}
& = \frac
{1+\dsum_{i_k\in S_1} w_{i_k} + \dsum_{i_k\in S_2} w_{i_k} + \dsum_{i_k\in S_3} w_{i_k}}
{1+\dsum_{i_k\in S_1} w'_{i_k} + \dsum_{i_k\in S_2} w'_{i_k} + \dsum_{i_k\in S_3} w'_{i_k}}
\\ & \geq \frac
{1+\dsum_{i_k\in S_1} w_{i_k} + \dsum_{i_k\in S_2} w_{i_k} + \dsum_{i_k\in S_3} w_{i_k}}
{1+\dsum_{i_k\in S_1} w_{i_k} + \dsum_{i_k\in S_2} w'_{i_k} + \dsum_{i_k\in S_3} w'_{i_k}}
\\ & \geq \frac
{1+\dsum_{i_k\in S_1} w_{i_k} + \dsum_{i_k\in S_2} w_{i_k} + \dsum_{i_k\in S_3} w_{i_k}}
{1+\dsum_{i_k\in S_1} w_{i_k} + \dsum_{i_k\in S_2} w_{i_k} + \dsum_{i_k\in S_3} w'_{i_k}}
\\ & \geq \frac
{1+ \dsum_{i_k\in S_3} w_{i_k}}
{1+ \dsum_{i_k\in S_3} w'_{i_k}}
 \geq \frac{1}{1+ \dsum_{i_k\in S_3} \mei}\geq
\frac {1}{1+ \epsilon}
\end{align*}
}

By scaling, we assume that $\min_i \sw'_i = 1$. By our construction we have:
\[
    \delta'_i \le \left(\frac{m}{\epsilon}\right)^m \text{ for all } i\in [m].
\]   

Let $\tilde{\sw}$ be derived from $\sw'_i$ by rounding up to the nearest integer power of $\frac{1}{1-\epsilon}$. Then for $i\in S_j$, we have
$\tilde{x}_{i,j} \geq (1-\epsilon) x'_{i,j}$. This completes the definition of $\sw$.

\newcommand{\tsw}{\tilde{\sw}}

\smallskip\noindent{\bf The vector $\bw$ in \Cref{lem:limited}.}
Now, given such $\bu,\tilde{\sw}$, for a fixed $\alpha$, we define the vector $\bw\in \dec(m,\epsilon)$ in \Cref{lem:limited}.
Set $w_{i} = \frac{\tilde{\sw}}{u_i^\alpha}$. Then, it is sufficient to show that, for $i\in S_j$, 
$$ x_{i,j}  = \frac{w_{i} \cdot p_{i,j}^\alpha}{\sum_{i'} w_{i'} \cdot p_{i',j}^\alpha} \geq (1-\epsilon) \cdot \tilde{x}_{i,j}$$
We have
$$ \frac{w_{i} \cdot p_{i,j}^\alpha}{w_{k} \cdot p_{k,j}^\alpha} 
=
\frac{\tsw_i \cdot  p_{i,j}^\alpha / {u_i}^\alpha}{ \tsw_k \cdot p_{k,j}^\alpha/{u_k}^\alpha } 
= \frac{\tsw_i}{\tsw_k} \cdot \left( \frac{{u_k}\cdot p_{i,j}}{ {u_i} \cdot p_{k,j}} \right)^\alpha.$$
Now, fix an item $j$ and an agent $i \in S_j$. We have the following two cases for any agent $k\in [m]$ (we use $\alpha = \frac{2m}{\e} \cdot \log_{1-\epsilon} \me$):
\begin{eqnarray*}
    \frac{w_{i} \cdot p_{i,j}^\alpha}{w_{k}\cdot p_{k,j}^\alpha} &=& \frac{\tsw_i}{\tsw_k} \quad  \text{ if } k\in S_j\\ 
    \frac{w_{i} \cdot p_{i,j}^\alpha}{w_k \cdot p_{k,j}^\alpha} &\leq&  \frac{\tsw_i}{\tsw_k} \cdot (1-\epsilon)^\alpha \leq \left(\frac{m}{\epsilon}\right)^m (1-\epsilon)^\alpha \leq \frac{\epsilon}{m} \quad  \text{ if } k \notin S_j,
\end{eqnarray*} 
where the first inequality is by our construction if $i\in S_j$ and $k\notin S_j$ then $\left( \frac{{u_k}\cdot p_{i,j}}{ {u_i} \cdot p_{k,j}} \right) \leq 1-\epsilon$, the second inequality is since $1 \leq \tilde{\sw}_i \leq (\frac{m}{\epsilon})^m$, the third inequality is by $\alpha$'s definition.
Therefore, 
$$ x_{i,j}  = \frac{1}{\sum_{i'} \frac{w_{i'} \cdot p_{i',j}^\alpha}{w_{i} \cdot p_{i,j}^\alpha}} \geq  
\frac{1}{1+\sum_{i'\neq i, i\in S_j} \frac{\tsw_i}{\tsw_k}+\sum_{i\notin S_j} \frac{\epsilon }{m}}
\geq  
\frac{1}{1+\sum_{i'\neq i, i\in S_j} \frac{\tsw_i}{\tsw_k}+\epsilon }
\geq (1-\epsilon) \cdot \tilde{x}_{i,j}$$

This completes the proof of \Cref{lem:limited}.

The rest of the proof, i.e. from \Cref{lem:limited} to \Cref{thm:learning}, uses standard PAC theory and closely follows Li and Xian~\cite{LiX21}. We include it for completeness in \Cref{sec:pac}.

\subsection{PAC Learning: From \Cref{lem:limited} to \Cref{thm:learning} for the \maxmin objective}
\label{sec:pac}

Let us consider a combination of all instances in the support of the distribution $\Ds$. For $L$ processing matrices $P^{(1)},P^{(2)},\dots,P^{(L)}$.
we define $P^{\text{all}} = \bigoplus_{r=1}^L P^{(r)}$ to be the instance defined by the $n\cdot L$ items.
For every $\ell \in [L]$ and $j\in [n]$, we have an item $j^{(\ell)}$ with utility vector $\bp_j^{(\ell)}$.

The following observation is immediate (superadditivity):

\begin{observation}
\label{obv:santasub}
$\lsan(P^{\text{all}}) \geq \sum_{r=1}^{L} \lsan(P^{(r)})$.
\end{observation}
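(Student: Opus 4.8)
The plan is to exhibit a single feasible allocation for $P^{\text{all}}$ whose \maxmin value is at least $\sum_{r=1}^L \lsan(P^{(r)})$, by simply concatenating optimal allocations of the individual instances. For each $r\in[L]$, fix a feasible allocation $X^{(r)} = (x^{(r)}_{i,j}) \in \cX$ that attains the optimum for $P^{(r)}$ with respect to the \maxmin objective, so that $\ell_i(P^{(r)}, X^{(r)}) \ge \lsan(P^{(r)})$ for every agent $i\in[m]$. (One could even take all these loads to be exactly $\lsan(P^{(r)})$ by the same reasoning as in \Cref{lem:opt-equal}, but the inequality is all that is needed.) Now define an allocation $X^{\text{all}}$ for $P^{\text{all}}$ by routing each copied item according to the allocation of the instance it originates from: set $x^{\text{all}}_{i, j^{(\ell)}} = x^{(\ell)}_{i,j}$ for all $i\in[m]$, $\ell\in[L]$, $j\in[n]$.

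Next I would verify feasibility and compute the resulting loads. Feasibility is immediate: for each item $j^{(\ell)}$ of $P^{\text{all}}$ we have $\sum_{i\in[m]} x^{\text{all}}_{i,j^{(\ell)}} = \sum_{i\in[m]} x^{(\ell)}_{i,j} = 1$, and every entry lies in $[0,1]$, so $X^{\text{all}}\in\cX$. For the loads, the weight of item $j^{(\ell)}$ for agent $i$ in $P^{\text{all}}$ is exactly $p^{(\ell)}_{i,j}$, and hence
\[
\ell_i(P^{\text{all}}, X^{\text{all}}) = \sum_{\ell=1}^L \sum_{j\in[n]} x^{(\ell)}_{i,j}\cdot p^{(\ell)}_{i,j} = \sum_{\ell=1}^L \ell_i(P^{(\ell)}, X^{(\ell)}) \ge \sum_{\ell=1}^L \lsan(P^{(\ell)}).
\]
Since this lower bound holds simultaneously for every agent $i\in[m]$, the allocation $X^{\text{all}}$ witnesses $\lsan(P^{\text{all}}) \ge \sum_{r=1}^L \lsan(P^{(r)})$, which is the claim.

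There is essentially no obstacle here; the only point requiring any care is that the per-agent guarantee $\ell_i(P^{(r)}, X^{(r)}) \ge \lsan(P^{(r)})$ must hold for \emph{all} agents at once, which is exactly what the definition of the \maxmin optimum provides, and that additivity of loads over the disjoint item sets of $P^{\text{all}}$ is used in the displayed computation. I would also remark that only this superadditive direction holds in general — one should not expect the reverse inequality — but superadditivity is precisely what the PAC-learning argument for \Cref{thm:learning} requires.
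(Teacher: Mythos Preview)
Your argument is correct and is exactly the natural concatenation argument that the paper has in mind; the paper does not spell out a proof at all, merely labeling the observation ``immediate (superadditivity),'' so your write-up simply fills in the obvious details.
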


Using this observation, we can prove the following lemma, by considering the combination of all instances in $\Ds$, scaled by their respective probabilities.

\begin{lemma}
\label{lem:d4}
There exists $\bw \in \dec(m,\epsilon)$, such that for every $i\in[m]$, we have
$$\displaystyle\Es_{P\sim \Ds} [\ell_i(P,\bw)] \geq (1-\epsilon)\cdot T.$$
\end{lemma}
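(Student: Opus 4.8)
The plan is to deduce this distributional statement from the single‑instance guarantee of \Cref{lem:limited} by folding the whole distribution $\Ds$ into one instance. Enumerate the support of $\Ds$ as $P^{(1)},\dots,P^{(L)}$ with probabilities $\pi_1,\dots,\pi_L$; if the support is infinite, truncate to a finite subset carrying all but an arbitrarily small fraction of the mass and pass to a limit at the end, which is legitimate because $\dec(m,\epsilon)$ is a \emph{finite} set, so the same $\bw$ works along a subsequence. Let $P^{\mathrm{all}}$ be the instance on $nL$ items obtained by concatenating, for each $r\in[L]$, a copy of $P^{(r)}$ in which every weight is multiplied by $\pi_r$ (this is the ``combination of all instances in $\Ds$, scaled by their probabilities'' alluded to before \Cref{obv:santasub}).

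The only structural fact needed about \ep-allocation is its invariance under scaling a single column of the weight matrix: replacing $\bp_j$ by $c\cdot \bp_j$ multiplies numerator and denominator of $x_{i,j}=p_{i,j}^\alpha w_i/\sum_{i'}p_{i',j}^\alpha w_{i'}$ by $c^\alpha$, leaving $x_{i,j}$ unchanged. Hence, for any fixed parameter vector $\bw$, the \ep-allocation on $P^{\mathrm{all}}$ allocates the items of the scaled copy of $P^{(r)}$ exactly as it would on $P^{(r)}$ alone, while their contributions to the agents' loads are multiplied by $\pi_r$. Combined with the independence of the columns of $P$ under $\Ds=\times_j\Ds_j$, this yields, for every $\bw$ and every agent $i$,
\[
    \ell_i\bigl(P^{\mathrm{all}},\bw\bigr)\;=\;\sum_{r=1}^{L}\pi_r\,\ell_i\bigl(P^{(r)},\bw\bigr)\;=\;\Es_{P\sim\Ds}\bigl[\ell_i(P,\bw)\bigr].
\]

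It remains to lower bound $\lsan(P^{\mathrm{all}})$ and then apply \Cref{lem:limited}. Applying \Cref{obv:santasub} to the matrices $\pi_1 P^{(1)},\dots,\pi_L P^{(L)}$ and using homogeneity of the \maxmin objective, i.e.\ $\lsan(\pi_r P^{(r)})=\pi_r\lsan(P^{(r)})$, gives
\[
    \lsan\bigl(P^{\mathrm{all}}\bigr)\;\ge\;\sum_{r=1}^{L}\pi_r\,\lsan\bigl(P^{(r)}\bigr)\;=\;\Es_{P\sim\Ds}\bigl[\lsan(P)\bigr]\;=\;T.
\]
Now \Cref{lem:limited} applied to the instance $P^{\mathrm{all}}$ produces a vector $\bw\in\dec(m,\epsilon)$ whose \ep-allocation on $P^{\mathrm{all}}$ makes every load at least $(1-\Omega(\epsilon))\,\lsan(P^{\mathrm{all}})\ge (1-\Omega(\epsilon))\,T$ (the $(1-\Omega(\epsilon))$ becomes $(1-\epsilon)$ after the usual constant rescaling of $\epsilon$). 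Feeding this $\bw$ into the load identity above gives $\Es_{P\sim\Ds}[\ell_i(P,\bw)]=\ell_i(P^{\mathrm{all}},\bw)\ge (1-\epsilon)\,T$ for every $i\in[m]$, as claimed. The only genuinely delicate points are the column‑scaling invariance — which is what makes it legitimate to merge all instances, weighted by their probabilities, into a single instance without disturbing the proportional allocation — and checking that the hypotheses of \Cref{lem:limited} survive the merge (they do: $P^{\mathrm{all}}$ is an ordinary weight matrix, and if the small‑items property is needed downstream it is inherited since each weight $\pi_r p^{(r)}_{i,j}\le p^{(r)}_{i,j}\le T/\zeta\le \lsan(P^{\mathrm{all}})/\zeta$); the rest is bookkeeping.
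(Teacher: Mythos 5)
Your proof matches the paper's: both construct the probability-weighted concatenated instance $\bigoplus_r \pi_r P^{(r)}$, use \Cref{obv:santasub} together with homogeneity of $\lsan$ to lower-bound $\lsan(P^{\mathrm{all}})$ by $T$, apply \Cref{lem:limited} to obtain $\bw\in\dec(m,\epsilon)$, and then unfold the concatenated load into an expectation over $P\sim\Ds$. Your writeup is somewhat more careful than the paper's in spelling out the column-scaling invariance of \ep-allocation (which is exactly why merging scaled instances doesn't change the fractional assignments) and in handling infinite support via truncation plus finiteness of $\dec(m,\epsilon)$, but the argument is the same.
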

\begin{proof}
Consider the instance $\mathbb{P} = \bigoplus \text{Pr}_D[P]\cdot P$
where $\text{Pr}_D[P]$ is the probability mass of $P$ in $\Ds$, and 
$\text{Pr}_D[P]\cdot P$ is the matrix $P$ multiplied by $\text{Pr}_D[P]$.
By \Cref{obv:santasub}, we have
\[
\lsan(\mathbb{P}) \geq \sum_{P} \text{Pr}_{\Ds}[P]\lsan(P) = \Es_{P\sim D}[\lsan(P)]= T.
\]

We can apply \Cref{lem:limited} to the combined instance to show there exists $\bw^* \in \dec(m,\epsilon)$ such that for every $i\in [m]$, we have,
\[
\sum_j x_{i,j}(\mathbb{P},\bw^*) \cdot p_{i,j} \geq (1-\epsilon) \lsan(\mathbb{P})\geq (1-\epsilon) \cdot T
\]
where $j$ indexes over all items in $\mathbb{P}$. Notice that $x_{i,j}(\mathbb{P},\bw^*)$ depends on the utility vector for item $j$, which is part of the instance $P\in \Ds$ that $j$ belongs to.
Therefore, the left side of the above inequality is exactly

\[
\sum_P \sum_{j\in[n]} x_{i,j} (P,\bw^*) \cdot \text{Pr}_{\Ds}[P] \cdot  p_{i,j}
= \Es_{P \sim \Ds}\sum_{j\in [n]}  x_{i,j}(P,\bw^*) p_{i,j} = 
\Es_{P \sim \Ds}\sum_{j\in [n]}  \ell_i(P,\bw^*),
\]
as required.
\end{proof}

For any real numbers $A,B,\epsilon,C$, we use $A\approx_{\epsilon,C} B$ to denote
$|A-B| \leq \epsilon\cdot \max (B,C)$.
The next lemma appears in \cite{LiX21}:
\begin{lemma}[Lemma D.6 in in \cite{LiX21}]
\label{lem:d6}
For any $\bw \in \dec(m,\epsilon)$, with high probability over $P\sim \Ds$, we have
\[
\forall i\in [m] : \ell_i(P,\bw) \approx_{\epsilon,T} \Es_{P'\sim D}\ell_i(P',\bw).
\]
\end{lemma}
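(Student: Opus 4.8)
The plan is to treat the statement as a standard concentration inequality for a sum of independent, bounded random variables, applied once per agent and then combined by a union bound. Fix the vector $\bw \in \dec(m,\epsilon)$ and an agent $i\in[m]$. Since the items are sampled independently (the $j$-th item's utility vector $\bp_j$ is drawn from $\Ds_j$ and the $\Ds_j$ are independent) and, for a fixed $\bw$, the allocation $x_{i,j}(\bp_j,\bw)$ depends only on $\bp_j$, I would write $\ell_i(P,\bw) = \sum_{j\in[n]} X_j$ where $X_j := x_{i,j}(\bp_j,\bw)\cdot p_{i,j}$ are independent across $j$. Because $x_{i,j}\in[0,1]$ and, by the small items assumption, $p_{i,j}\le T/\zeta$ for all $i,j$ with $\zeta = \Theta(\log m/\epsilon^2)$, each $X_j$ takes values in $[0,T/\zeta]$; this uniform per-item bound is what will make the concentration independent of $n$. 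Write $\mu_i := \Es_{P\sim\Ds}[\ell_i(P,\bw)] = \sum_j \Es[X_j]$.

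Next I would control the variance via $\sum_j \mathrm{Var}(X_j) \le \sum_j \Es[X_j^2] \le \frac{T}{\zeta}\sum_j\Es[X_j] = \frac{T}{\zeta}\,\mu_i$, using $X_j\le T/\zeta$, and apply Bernstein's inequality with deviation $t := \epsilon\max(\mu_i,T)$:
\[
\Pr\big[|\ell_i(P,\bw)-\mu_i|\ge t\big] \;\le\; 2\exp\!\left(-\frac{t^2/2}{\tfrac{T}{\zeta}\mu_i + \tfrac{T}{3\zeta}\,t}\right).
\]
Since $\mu_i\le\max(\mu_i,T)$ and $t\le\epsilon\max(\mu_i,T)$, the denominator is at most $\frac{T}{\zeta}(1+\epsilon)\max(\mu_i,T)\le\frac{2T}{\zeta}\max(\mu_i,T)$, while $t^2 = \epsilon^2\max(\mu_i,T)^2$; hence the exponent is at most $-\frac{\epsilon^2\zeta\max(\mu_i,T)}{4T} \le -\frac{\epsilon^2\zeta}{4}$, using $\max(\mu_i,T)\ge T$. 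Substituting $\zeta = c\log m/\epsilon^2$ bounds this probability by $2m^{-c/4}$, i.e. $m^{-\Omega(1)}$ for $c$ a sufficiently large constant; in other words $\ell_i(P,\bw)\approx_{\epsilon,T}\mu_i$ fails for a fixed $i$ with only polynomially small probability. (Alternatively one could split into the cases $\mu_i\ge T$ — multiplicative Chernoff — and $\mu_i<T$ — an additive bound with deviation $\epsilon T$ — which works equally well.)

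Finally, I would union bound over the $m$ agents: with probability at least $1-2m\cdot m^{-c/4} = 1-m^{-\Omega(1)}$ over the draw $P\sim\Ds$, the approximation $\ell_i(P,\bw)\approx_{\epsilon,T}\Es_{P'\sim\Ds}[\ell_i(P',\bw)]$ holds simultaneously for all $i\in[m]$, which is exactly the claim; enlarging the hidden constant in $\zeta=\Theta(\log m/\epsilon^2)$ drives the failure probability below any prescribed inverse polynomial in $m$. I do not expect a genuine obstacle here: the only delicate points are invoking the small items assumption precisely to get $X_j\le T/\zeta$ and handling the two-sided relative error against $\max(\mu_i,T)$ uniformly. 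I note that if one wanted this to hold simultaneously for \emph{every} $\bw$ in the net $\dec(m,\epsilon)$ one would further union bound over $|\dec(m,\epsilon)|=K^{2m}$ with $K=O(\me\log\me)$ — but that is precisely the step responsible for the sample complexity $O(\frac{m}{\log m}\cdot\log\me)$ in \Cref{thm:learning}, and is not part of this lemma.
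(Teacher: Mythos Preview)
Your proof is correct and is the standard concentration argument one would expect here. Note, however, that the paper does not actually prove this lemma: it is quoted verbatim as Lemma~D.6 from \cite{LiX21} and used as a black box, so there is no ``paper's own proof'' to compare against. Your argument --- independence of the per-item contributions $X_j = x_{i,j}\cdot p_{i,j}$, the uniform bound $X_j\le T/\zeta$ from the small items assumption, Bernstein (or Chernoff) with deviation $\epsilon\max(\mu_i,T)$, and a union bound over the $m$ agents --- is exactly the intended proof and presumably what appears in \cite{LiX21}. Your closing remark correctly distinguishes this lemma (fixed $\bw$, union over agents only) from \Cref{lem:d7} (additional union over the net $\dec(m,\epsilon)$ and over sampled instances).
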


\noindent\textbf{The learning algorithm.}
We sample $H = O \left( \frac{m}{\log m} \log \me \right)$ instances $P_1,P_2,\dots, P_H$ independently and randomly form $\Ds$. We output $\tilde{\bw} \in \dec(m,\epsilon)$ that maximizes $\min_{i\in[m]} \frac{1}{H} \sum_{h=1}^H \ell_i(P_h, \tilde{\bw})$.

The next lemma also appears in \cite{LiX21}:
\begin{lemma}[Lemma D.7 in \cite{LiX21}]
\label{lem:d7}
With probability at least $1-\frac{1}{K^m}$, for every $\bw \in \dec(m,\epsilon)$ and for every $i\in [m]$, we have
\[
    \frac{1}{H} \sum_{h=1}^H \ell_i(P_h,\bw) \approx_{\epsilon,T} \Es_{P\sim D}\ell_i(P,\bw).
\]
\end{lemma}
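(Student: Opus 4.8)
The plan is to read this as a uniform deviation bound over the finite hypothesis class $\dec(m,\epsilon)$, whose cardinality is $|\dec(m,\epsilon)| = K^{2m}$ with $K = O(\me\log\me)$. First I would fix a single pair $(\bw,i)$ with $\bw\in\dec(m,\epsilon)$ and $i\in[m]$ and establish a concentration bound for $\tfrac{1}{H}\sum_{h=1}^H\ell_i(P_h,\bw)$ around its mean $\mu_i := \Es_{P\sim\Ds}[\ell_i(P,\bw)]$; then I would union-bound over all $K^{2m}\cdot m$ such pairs; and finally I would substitute $\zeta = \Theta(\log m/\epsilon^2)$ (from the small-items assumption), $\log K = O(\log\me)$, and $H = \Theta(\tfrac{m}{\log m}\log\me)$ to see that the total failure probability is at most $K^{-m}$.

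For the per-pair bound I would write $\ell_i(P_h,\bw) = \sum_{j\in[n]} x_{i,j}(P_h,\bw)\cdot p^{(h)}_{i,j}$ and observe that $\sum_{h=1}^H\ell_i(P_h,\bw) = \sum_{h\in[H],\,j\in[n]} x_{i,j}(P_h,\bw)\,p^{(h)}_{i,j}$ is a sum of $Hn$ independent nonnegative random variables, each bounded by $p_{i,j}\le T/\zeta$ by the small-items assumption together with $x_{i,j}\le 1$. Its expectation is $H\mu_i$, and since $\Es[g^2]\le (T/\zeta)\,\Es[g]$ for every summand $g$, its variance is at most $H(T/\zeta)\mu_i$. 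Applying Bernstein's inequality at the level of these $Hn$ variables with deviation $t = \epsilon H\max(\mu_i,T)$ and splitting into the cases $\mu_i\ge T$ and $\mu_i< T$, in both cases the Bernstein exponent $t^2/\big(2\,\text{Var}+\tfrac{2T}{3\zeta}t\big)$ is $\Omega(\epsilon^2\zeta H)$, so
\[
  \Pr\!\left[\Big|\tfrac{1}{H}\textstyle\sum_{h=1}^H\ell_i(P_h,\bw) - \mu_i\Big| > \epsilon\max(\mu_i,T)\right] \le 2\exp\!\big(-\Omega(\epsilon^2\zeta H)\big),
\]
which is exactly the probability that $\tfrac{1}{H}\sum_h\ell_i(P_h,\bw)\approx_{\epsilon,T}\mu_i$ fails. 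Union-bounding over the $K^{2m}m$ pairs, the overall failure probability is at most $2mK^{2m}\exp(-\Omega(\epsilon^2\zeta H))$, which drops below $K^{-m}$ once $\epsilon^2\zeta H = \Omega(m\log K)$; since $\epsilon^2\zeta = \Theta(\log m)$ and $\log K = O(\log\me)$, the choice $H = O(\tfrac{m}{\log m}\log\me)$ made by the learning algorithm suffices.

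The main obstacle I anticipate is choosing the right concentration inequality in the per-pair step. A plain Hoeffding bound using only the range $[0,T/\zeta]$ over the $Hn$ summands would yield an exponent of order $\epsilon^2\zeta^2 H/n$, and the spurious $1/n$ factor would force the sample size to grow with $n$. Avoiding this requires exploiting that the \emph{sum of means} $\mu_i$, rather than $n$ times the worst-case range, controls the variance — i.e., using the Bernstein (equivalently, multiplicative Chernoff) form — which removes the $n$-dependence and, together with $\zeta=\Theta(\log m/\epsilon^2)$, produces the $\tfrac{1}{\log m}$ saving in the final sample complexity. Once this is in place, the case split inside Bernstein's bound and the union-bound arithmetic are routine.
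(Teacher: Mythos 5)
The paper does not prove this lemma at all; it simply cites Lemma D.7 of Li and Xian~\cite{LiX21} and uses it as a black box. Your reconstruction is correct and is the standard argument behind that result: a per-$(\bw,i)$ Bernstein (multiplicative Chernoff) bound over the $Hn$ independent item-level summands $x_{i,j}(P_h,\bw)\,p^{(h)}_{i,j}$ (independent because $x_{i,j}(P_h,\bw)$ depends only on column $j$ of $P_h$, and columns are drawn independently), with the small-items assumption giving both the $T/\zeta$ range and the $(T/\zeta)\,H\mu_i$ variance bound; the two cases $\mu_i\ge T$ and $\mu_i<T$ both give an exponent $\Omega(\epsilon^2\zeta H)$, and the union bound over $m\cdot K^{2m}$ pairs together with $\epsilon^2\zeta=\Theta(\log m)$ and $\log K=O(\log\tfrac{m}{\epsilon})$ yields the claimed sample complexity $H=O(\tfrac{m}{\log m}\log\tfrac{m}{\epsilon})$. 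Your observation that a plain Hoeffding bound over the $Hn$ summands would introduce a spurious $n$-dependence is exactly the right diagnosis of why the variance-sensitive form is needed.
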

Now assume the event in \Cref{lem:d7} happens. Then by \Cref{lem:d4}, there exists some $\bw\in \dec(m, \e)$ such that
\[
    \min_{i\in[m]} \frac{1}{H} \sum_{h=1}^H \ell_i(P_h,\bw) \geq (1-\epsilon)^2 \cdot T.
\]    
In particular, since $\tilde{\bw}$ maximizes $\min_{i\in[m]} \frac{1}{H} \sum_{h=1}^H \ell_i(P_h,\tilde{\bw})$
for $\tilde{\bw}\in \dec(m, \e)$, we can conclude that
\[
    \min_{i\in[m]} \frac{1}{H} \sum_{h=1}^H \ell_i(P_h,\tilde{\bw}) \geq (1-\epsilon)^2 \cdot T.
\]
Applying \Cref{lem:d7} again, we get
\[
    \min_{i\in [m]} \Es_{P\sim \Ds} \ell_i(P,\tilde{\bw}) \geq (1-\epsilon)^3 \cdot T.
\]    
We now apply \Cref{lem:d6} to $\tilde{\bw}$. We have that with high probability over $P\sim \Ds$, for every $i\in [m]$ the following holds:
\[ 
\ell_i(P,\tilde{\bw}) \geq \Es_{P'\sim \Ds} \ell_i(P',\tilde{\bw})  -\epsilon \cdot \max\{T, \Es_{P'\sim \Ds} \ell_i(P',\tilde{\bw})\} \geq (1-\epsilon)^4\cdot T.
\]
Therefore, $\lsan(P,\tilde{\bw}) \geq (1-\Omega(\epsilon))\cdot T$.
This completes the proof of \Cref{thm:learning}.

\end{toappendix}

\section{Generalization to Well-Behaved Objectives}\label{sec:general}
We first generalize \Cref{thm:epa} to all well-behaved functions.
\begin{theorem}
\label{thm:general}
    Fix any instance of an online allocation problem with divisible items where the goal is to maximize or minimize a monotone homogeneous objective function. Then, there exists an online algorithm and a learned parameter vector in $\vps$ that achieves a competitive ratio of~$1-\epsilon$ (for maximization) or $1+\epsilon$ (for minimization).
\end{theorem}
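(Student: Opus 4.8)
The plan is to reduce both the maximization and the minimization case to the \maxmin and \minmax objectives respectively, for which \Cref{thm:epa} already gives \ep-allocations with competitive ratio $1-\epsilon$ and $1+\epsilon$. The reduction exploits the fact that, after rescaling an instance by the optimal agent loads, a well-behaved objective behaves exactly like \maxmin (resp.\ \minmax). Concretely, fix an instance with weight matrix $P\in\mps$ and a monotone homogeneous objective $f$, and consider maximization first. Let $x^\star$ be an optimal allocation and set $s_i=\ell_i(P,x^\star)$ (the \emph{scaling parameter} of agent $i$); we may assume $s_i>0$ for all $i$ (otherwise replace $x^\star$ by a nearby allocation with all loads positive and let the perturbation vanish at the end). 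Define the transformed instance $Q$ by $q_{i,j}=p_{i,j}/s_i$; then $\ell_i(Q,x^\star)=\tfrac1{s_i}\sum_j x^\star_{i,j}p_{i,j}=1$ for every $i$, so $x^\star$ has uniform load on $Q$.

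The first key step is to show $\lsan(Q)=1$. Since $x^\star$ is feasible, $\lsan(Q)\ge 1$. For the reverse inequality, suppose some $x'$ satisfied $\ell_i(Q,x')\ge\beta>1$ for all $i$; then $\ell_i(P,x')=s_i\,\ell_i(Q,x')\ge\beta s_i$ coordinatewise, so by monotonicity and then homogeneity of $f$ we would get $f(\bell(P,x'))\ge f(\beta\mathbf{s})=\beta\,f(\mathbf{s})>f(\bell(P,x^\star))$, contradicting optimality of $x^\star$. (Here $\mathbf{s}=\bell(P,x^\star)$.) The second step applies \Cref{thm:epa} to $Q$ with the \maxmin objective: there is an exponentiation constant $\alpha$ and a learned parameter vector $\bw\in\vps$ whose \ep-allocation $X$ on $Q$ achieves $\ell_i(Q,\alpha,\bw)\ge(1-\epsilon)\,\lsan(Q)=1-\epsilon$ for all $i$. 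The crucial observation is that the \ep-allocation on $Q$ with parameters $\bw$ coincides with the \ep-allocation on the \emph{original} instance $P$ with parameters $w'_i:=w_i/s_i^{\alpha}$, since
\[
\frac{w_i\,q_{i,j}^{\alpha}}{\sum_{i'}w_{i'}q_{i',j}^{\alpha}}
=\frac{(w_i/s_i^{\alpha})\,p_{i,j}^{\alpha}}{\sum_{i'}(w_{i'}/s_{i'}^{\alpha})\,p_{i',j}^{\alpha}}
=\frac{w'_i\,p_{i,j}^{\alpha}}{\sum_{i'}w'_{i'}\,p_{i',j}^{\alpha}}.
\]
Thus the resulting algorithm on $P$ is a genuine online \ep-allocation (the scaling parameters are absorbed into the precomputed vector $\bw'$ and never need to be known at runtime). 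Transferring back, $\ell_i(P,X)=s_i\,\ell_i(Q,X)\ge(1-\epsilon)s_i$ for all $i$, so by monotonicity and homogeneity $f(\bell(P,X))\ge f((1-\epsilon)\mathbf{s})=(1-\epsilon)f(\mathbf{s})$, which is a $1-\epsilon$ competitive ratio.

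The minimization case is symmetric. Taking $s_i=\ell_i(P,x^\star)$ for an optimal minimization solution $x^\star$, the transformed instance $Q$ satisfies $\lmks(Q)=1$: it is $\le 1$ via $x^\star$, and if some $x'$ had $\ell_i(Q,x')\le\beta<1$ for all $i$, then $\bell(P,x')\le\beta\mathbf{s}$ coordinatewise and monotonicity plus homogeneity give $f(\bell(P,x'))\le\beta\,f(\mathbf{s})<f(\mathbf{s})$, a contradiction. Applying \Cref{thm:epa} with a sufficiently negative $\alpha$ and the same change of variables $w'_i=w_i/s_i^{\alpha}$ yields $\ell_i(P,X)\le(1+\epsilon)s_i$, hence $f(\bell(P,X))\le(1+\epsilon)f(\mathbf{s})$.

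The step I expect to require the most care is the equivalence between $f$ and \maxmin/\minmax on the rescaled instance: it must invoke \emph{both} monotonicity (to pass from a coordinatewise load comparison to a comparison of $f$-values) and homogeneity (to pull the scalar $\beta$, resp.\ $1\pm\epsilon$, out of $f(\mathbf{s})$), and the strict inequality driving the contradiction must come from $\beta\neq 1$ rather than from the monotonicity step. The only other technical point is the boundary case of an agent with zero optimal load, where $q_{i,j}=p_{i,j}/s_i$ is undefined and $f$ is not even defined on $\bell(P,x^\star)$; this is dispatched by the standard perturbation argument mentioned above, together with the fact that feasible allocations with all loads strictly positive always exist (e.g.\ $x_{i,j}=1/m$).
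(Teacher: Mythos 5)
Your proof is correct and follows essentially the same reduction as the paper: rescale $P$ by the optimal per-agent loads $s_i$, observe that the rescaled instance is (near-)optimal for \maxmin (resp.\ \minmax), apply \Cref{thm:epa}, and absorb the scaling into the learned parameter via $w'_i = w_i/s_i^\alpha$. One small remark: your careful contradiction showing $\lsan(Q)\le 1$ (which, as you correctly note, uses homogeneity in addition to monotonicity, whereas the paper's one-line justification mentions only monotonicity) is actually more than is needed---the feasibility direction $\lsan(Q)\ge 1$ already suffices, since \Cref{thm:epa} then gives $\ell_i(Q)\ge(1-\epsilon)\lsan(Q)\ge 1-\epsilon$ and the transfer argument goes through unchanged.
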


\begin{proof}
Fix an objection function $f$ and a matrix $P\in \mps$. Let $\ell^f_i$ denote the load of agent $i$ in an optimal solution for objective function $f$. Also, let $x_{i, j}$ denote the fraction of item $j$ assigned to agent $i$ in this optimal solution.
Now, consider the matrix $\tilde{P}$, where $\tilde{p}_{i,j} = \frac{p_{i,j}}{\ell^f_i}$.
By the monotonicity property of $f$, the optimal objective value for $\tilde{P}$ is $1$. Therefore,
by \Cref{thm:exponential}, there exist $\alpha$ and $\tilde{\bw}$,
such that using an \ep-allocation, we get 
$\ell^*(\tilde{P}, \alpha, \tilde{\bw}) \ge 1-\epsilon$ for maximization and
$\ell^*(\tilde{P}, \alpha, \tilde{\bw}) \le 1+\epsilon$ for minimization.
Let $x^*_{i, j}$ be the fraction of item $j$ assigned to agent $i$ in this approximate solution.
By the definition of \ep-allocation, $x^*_{i,j}$ is proportional to $\tilde{p}_{i,j}^{\alpha} \cdot \tilde{w}_i 
= \left(\frac{p_{i,j}}{\ell^f_i}\right)^{\alpha} \cdot \tilde{w}_i
= p_{i,j}^\alpha\cdot \frac{\tilde{w}_i}{{(\ell^f_i)}^{\alpha}}$. Thus, if we define $\bw$ such that $w_i  = \frac{\tilde{w}_i}{{(\ell^f_i)}^{\alpha}}$, then the corresponding \ep-allocation gives a $(1-\e)$-approximate solution for maximization and $(1+\e)$-approximate solution for minimization.
\end{proof}

\subsection{Noise Resilience}
Next, we consider noise resilience for well-behaved functions, i.e., we generalize \Cref{thm:noise-basic} to all well-behaved objective functions. 
This follows immediately from \Cref{lem:subaddF} and the observation that if all loads are scaled by $\eta$, then the objective value for a well-behaved objective is also scaled by $\eta$. We state this generalized theorem below:

\begin{theorem}
Fix any $P,G \in \mps$ and any monotone, homogeneous function $f$. Let $\bw$ be a learned parameter vector that gives a solution of objective value $\gamma$ using \ep-allocation. Let $\tilde{\bw}$ be $\eta$-approximate to $\bw$ for some $\eta > 1$. Then, the \ep-allocation for  $\tilde{\bw}$ gives a solution with value at least $\gamma/\eta$ for maximization and at most $\eta\gamma$ for minimization. 
\end{theorem}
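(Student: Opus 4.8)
The plan is to reduce everything to \Cref{lem:subaddF} together with the two defining properties of a well-behaved objective (monotonicity and homogeneity). Write $\bell^* = \bell(P,G,\bw)$ for the load vector induced by the parameter vector $\bw$ under the \ep-allocation, so that $\gamma = f(\bell^*)$, and write $\tilde{\bell} = \bell(P,G,\tilde{\bw})$ for the load vector induced by $\tilde{\bw}$. Since $\tilde{\bw}$ is $\eta$-approximate to $\bw$, \Cref{lem:subaddF} applies (it is stated for an arbitrary transformation matrix $G$, hence in particular for $G_{i,j}=p_{i,j}^\alpha$) and gives, for every agent $k\in[m]$,
\[
    \frac{\ell^*_k}{\eta} \;\le\; \tilde{\ell}_k \;\le\; \eta\cdot \ell^*_k .
\]
This is the only place the $\eta$-approximation hypothesis enters; the remainder is bookkeeping with $f$.

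For the maximization case I would use the left-hand inequality: $\tilde{\ell}_k \ge \ell^*_k/\eta$ for all $k$, i.e.\ $\tilde{\bell}$ dominates the vector $\tfrac1\eta\bell^*$ coordinate-wise. By monotonicity of $f$, $f(\tilde{\bell}) \ge f(\tfrac1\eta \bell^*)$, and by homogeneity $f(\tfrac1\eta\bell^*) = \tfrac1\eta f(\bell^*) = \gamma/\eta$; hence the \ep-allocation for $\tilde{\bw}$ has objective value at least $\gamma/\eta$. Symmetrically, for the minimization case I would use the right-hand inequality: $\tilde{\ell}_k \le \eta\cdot \ell^*_k$ for all $k$, so $\tilde{\bell}$ lies coordinate-wise below $\eta\cdot\bell^*$; monotonicity gives $f(\tilde{\bell}) \le f(\eta\cdot\bell^*)$ and homogeneity gives $f(\eta\cdot\bell^*) = \eta\cdot f(\bell^*) = \eta\gamma$, which is the claimed upper bound.

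Since the genuinely nontrivial step — controlling the per-agent loads under an $\eta$-perturbation of the parameters — is already carried out in \Cref{lem:subaddF}, there is no real obstacle here. The only points to be careful about are (i) applying monotonicity before homogeneity, i.e.\ rescaling the \emph{comparison} vector $\bell^*$ rather than $\tilde{\bell}$ itself, and (ii) noting that homogeneity as defined in the introduction is exactly the statement $f(\alpha\bell)=\alpha f(\bell)$ for scalar $\alpha>0$, which is what both the $\alpha=1/\eta$ and $\alpha=\eta$ invocations require.
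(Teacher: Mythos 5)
Your proof is correct and follows essentially the same route as the paper: invoke \Cref{lem:subaddF} to bound each $\tilde{\ell}_k$ between $\ell^*_k/\eta$ and $\eta\ell^*_k$, then apply monotonicity followed by homogeneity of $f$. The paper states this step tersely ("the observation that if all loads are scaled by $\eta$, then the objective value for a well-behaved objective is also scaled by $\eta$"), and your write-up simply makes the coordinate-wise comparison and the order of the two properties explicit.
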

\subsection{Learnability}

Finally, we consider learnability of parameters for well-behaved functions, i.e., we generalize \Cref{thm:learning} and by assuming additional property of the objective function:
\begin{itemize}
    \item For a maximization objective $f$, we need {\em superadditivity}: $f(\sum_r \ell_r) \ge \sum_r f(\ell_r)$.
    \item For a minimization objective $f$, we need {\em subadditivity}: $f(\sum_r \ell_r) \le \sum_r f(\ell_r)$.
\end{itemize}
\begin{noappendix}

\end{noappendix}


\begin{theorem}
\label{thm:learningGen}
Let $f$ be a well-behaved function. If $f$ is superadditive, the following theorem holds for maximization of $f$, while if $f$ is subadditive, the following theorem holds for minimization of $f$. Let $T$ be the expectation of the maximum value of $f$ over instances sampled from $\Ds$. Fix an $\epsilon > 0$ for which the small items assumption holds. Then, there is an (learning) algorithm that samples $O(\frac{m}{\log m} \cdot \log \me)$ independent instances from $\Ds$ and outputs (with high probability) a prediction vector $\bw$ such that using $\bw$ in the \ep-allocation gives a value of $f$ that is at least $(1-\Omega(\epsilon))\cdot T$ for maximization and at most $(1+O(\epsilon))\cdot T$ for minimization, in expectation over instances $P\sim \Ds$.
\end{theorem}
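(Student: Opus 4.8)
The plan is to run the proof of \Cref{thm:learning} essentially verbatim, replacing its two problem‑specific ingredients by analogues for a general well‑behaved $f$: the superadditivity fact \Cref{obv:santasub}, and the bounded‑hypothesis‑class lemma \Cref{lem:limited}. Everything in between — the combined‑instance argument of \Cref{lem:d4} and the uniform‑convergence estimates \Cref{lem:d6} and \Cref{lem:d7} — then carries over with only notational changes.

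For the first ingredient, the superadditivity analogue is immediate from the hypothesis on $f$: concatenating optimal solutions for instances $P^{(1)},\dots,P^{(L)}$ produces a solution for $\bigoplus_r P^{(r)}$ whose load vector is $\sum_r \bell^{(r)}$, so for superadditive $f$ we get $\opt_f(\bigoplus_r P^{(r)}) = f(\sum_r \bell^{(r)}) \ge \sum_r f(\bell^{(r)}) = \sum_r \opt_f(P^{(r)})$, and with homogeneity this upgrades to $\opt_f(\bigoplus_P \text{Pr}_{\Ds}[P]\cdot P) \ge \sum_P \text{Pr}_{\Ds}[P]\,\opt_f(P) = T$ (for subadditive minimization objectives every inequality reverses). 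For the analogue of \Cref{lem:limited} I would invoke the reduction of \Cref{thm:general}: given a preprocessed instance $P$ with optimal $f$‑loads $\ell^f_i$, the scaled instance $\tilde p_{i,j} = p_{i,j}/\ell^f_i$ has \maxmin (and $f$‑) optimum $1$, so \Cref{lem:limited} applied to $\tilde P$ gives $\tilde\bw \in \dec(m,\epsilon)$ whose \ep-allocation is a $(1-\Omega(\epsilon))$‑approximation for \maxmin on $\tilde P$; translating back, $w_i = \tilde w_i/(\ell^f_i)^\alpha$ defines the same \ep-allocation on $P$, hence a $(1-\Omega(\epsilon))$‑approximation to $\opt_f(P)$. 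Rounding each $\ell^f_i$ to an integer power of $\tfrac{1}{1-\epsilon}$ (another $1-\epsilon$ factor) places $\bw$ in a discretized class; the preprocessing steps of \Cref{lem:limited} (discretization, removal of entries and agents with $\poly(m/\epsilon)$‑large ratios or monopolist values, and the $\tfrac{\epsilon}{m}$‑fractional baseline allocation) use only monotonicity and homogeneity and so apply unchanged, and they bound the aspect ratio of $(\ell^f_i)$ just as \Cref{lem:aspect} bounds that of the ratio vector, keeping the class at size $\exp(O(m\log \tfrac{m}{\epsilon}))$.

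With these in hand I would set $\mathbb{P} = \bigoplus_P \text{Pr}_{\Ds}[P]\cdot P$; the superadditivity analogue gives $\opt_f(\mathbb{P}) \ge T$, and since the \ep-allocation is invariant under per‑item rescaling we have $\bell(\mathbb{P},\bw) = \Es_{P\sim\Ds}[\bell(P,\bw)]$ for every $\bw$, so applying the \Cref{lem:limited}‑analogue to $\mathbb{P}$ produces $\bw^*$ in the discretized class with $f(\Es_P[\bell(P,\bw^*)]) \ge (1-\Omega(\epsilon))\,T$ (this plays the role of \Cref{lem:d4}). The small‑items assumption, $\zeta = \Theta(\log m/\epsilon^2)$, then yields via Bernstein's inequality and a union bound over the $m$ agents that with high probability over $P\sim\Ds$ each coordinate satisfies $\ell_i(P,\bw) \approx_{\epsilon,T} \Es_{P'}[\ell_i(P',\bw)]$ (the roles of \Cref{lem:d6} and \Cref{lem:d7}). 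Since a well‑behaved $f$ is multiplicatively Lipschitz — monotonicity and homogeneity give $(1-\epsilon)f(\bell) \le f(\bell') \le (1+\epsilon)f(\bell)$ whenever $\bell'$ is coordinatewise within a $(1\pm\epsilon)$ factor of $\bell$ — this converts the bound on $f(\Es_P[\bell(P,\bw^*)])$ into $\Es_P[f(\bell(P,\bw^*))] \ge (1-\Omega(\epsilon))\,T$. Finally the learning algorithm samples $H = O(\tfrac{m}{\log m}\log\tfrac{m}{\epsilon})$ instances and outputs the member $\tilde\bw$ of the discretized class optimizing the empirical objective; uniform convergence over the finite class, chained with the $\approx_{\epsilon,T}$ estimates exactly as in the proof of \Cref{thm:learning}, gives $\Es_P[f(\bell(P,\tilde\bw))] \ge (1-\Omega(\epsilon))\,T$ for maximization and $\le (1+O(\epsilon))\,T$ for minimization.

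The hard part will be the passage from $f(\Es[\bell])$ to $\Es[f(\bell)]$. In the \maxmin case \Cref{lem:d4} guarantees that every coordinate of $\Es_P[\bell(P,\bw^*)]$ is $\Theta(T)$, so the additive $\epsilon T$ concentration error is a $(1\pm\epsilon)$ multiplicative perturbation on every coordinate; for a general $f$ an optimal solution can have coordinates as small as $\Theta(\epsilon T/m)$, on which that same additive error is a large multiplicative perturbation. I would therefore need to argue that such small coordinates contribute only a $1+O(\epsilon)$ factor to $f$ — this is where superadditivity (resp.\ subadditivity) is used a second time, via $f(\bell) \ge f(\bell|_{\mathrm{big}})$ together with the preprocessing bounds that limit both the number and the magnitude of the small coordinates. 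Making this quantitative, and checking that the rounding of $(\ell^f_i)$ leaves the hypothesis class of size $\exp(O(m\log\tfrac{m}{\epsilon}))$ so that the sample complexity matches \Cref{thm:learning}, are the only genuinely new steps; the rest is a routine adaptation.
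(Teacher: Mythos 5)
You correctly identify the gap yourself but do not close it: your plan arrives at the per-coordinate concentration bound $\ell_i(P,\bw) \approx_{\epsilon,T} \Es_{P'}[\ell_i(P',\bw)]$, whose additive error is $\epsilon\cdot\max(\Es[\ell_i],T)$, and then tries to turn this into a \emph{multiplicative} $(1\pm\epsilon)$ perturbation of every coordinate so that the ``multiplicative Lipschitzness'' of a monotone homogeneous $f$ applies. This fails exactly where you say it does — for a general $f$ the optimal loads $\ell^f_i$ need not be $\Theta(T)$, so the additive $\epsilon T$ error can swamp a small coordinate. The repair you sketch (use super/subadditivity to show small coordinates contribute only $1+O(\epsilon)$) is not sound as stated: $f(\bell)\ge f(\bell|_{\mathrm{big}})$ is just monotonicity, and for, e.g., Nash welfare or $\min$, a single small coordinate can dominate $f$, so there is no generic ``ignore small coordinates'' argument. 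You would need a fundamentally different way to make the concentration respect the per-agent scale.

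The paper's proof avoids this issue entirely by moving the scaling \emph{outside} the existence lemma and into the learning pipeline. Concretely: after forming the combined instance $\mathbb{P}$ and using superadditivity to get $f(\ell^f(\mathbb{P}))\ge T$, it uses the samples (and the small-items assumption) to \emph{estimate} the per-agent optimal loads $\hat\ell^f_i \approx \ell^f_i(\mathbb{P})$, then rescales every sampled instance $P$ to $\tilde P$ with $\tilde p_{i,j}=p_{i,j}/\hat\ell^f_i$, and finally runs the \maxmin learning algorithm (Theorem~\ref{thm:learning}, i.e.\ Lemmas~\ref{lem:d4}–\ref{lem:d7}) on the scaled instances. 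In that space the relevant ``$T$'' is $\Theta(1)$ and every target coordinate is near $1$, so the additive $\epsilon$-error from concentration is automatically a multiplicative $(1\pm\Omega(\epsilon))$ error on every agent; the output $\bw^*$ then satisfies $\ell_i(\tilde P,\bw^*)\ge 1-\Omega(\epsilon)$ for all $i$ with high probability. Undoing the scaling via $w_i = w^*_i/(\hat\ell^f_i)^\alpha$ (an \ep-allocation is invariant under this) gives $\ell_i(P,\bw)\ge (1-\Omega(\epsilon))\hat\ell^f_i$ for all $i$, and monotonicity plus homogeneity of $f$ then yield $f(\bell(P,\bw)) \ge (1-\Omega(\epsilon))\,f(\hat{\bell}^f)\ge(1-\Omega(\epsilon))T$. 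So the $f(\Es[\bell])\leadsto\Es[f(\bell)]$ hurdle never appears: the conclusion is coordinatewise rather than via $f$ applied to an expectation. Your proposal would become correct if you replaced your ``small coordinates'' patch with this instance-rescaling step, i.e.\ learn $\hat\ell^f_i$ first and feed the scaled samples to the \maxmin learner.
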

\begin{toappendix}
\begin{proof}
Fix a maximization objective function $f$ and distribution $\Ds$ (the proof for a minimization objective is similar and omitted for brevity). 
Consider the instance $\mathbb{P} = \bigoplus \text{Pr}_D[P]\cdot P$
where $\text{Pr}_D[P]$ is the probability mass of $P$ in $\Ds$, and 
$\text{Pr}_D[P]\cdot P$ is the matrix $P$ multiplied by $\text{Pr}_D[P]$.
By our superadditivity assumption, we have \[
f(\ell^f(\mathbb{P})) \geq \sum_{P} \text{Pr}_{\Ds}[P]f(\ell^f(P)) = \Es_{P\sim D}[f(\ell^f(P))]= T.
\]

Suppose we sample $H = O \left( \frac{m}{\log m} \log \me \right)$ instances $P^{(1)},P^{(2)},\dots, P^{(H)}$ independently and randomly from $\Ds$.
Now, using the small items assumption, it is possible to compute $\hat{\ell}^f_i$ which is a $(1+\epsilon)$ approximation to $\ell^f_i(\mathbb{P})$ for all $i\in [m]$.
Similar to the previous construction, given a matrix $P$, we define $\tilde{P}$ as $\tilde{p}_{i,j} = \frac{{P}_{i,j}}{\hat{\ell}^f_i}$.
By the monotonicity property of $f$, we have: 
$\Es_{P\sim D}[\lsan(\tilde{P})]\geq 1-\epsilon$.


We output $\bw^* \in \dec(m,\epsilon)$ that maximizes $\min_{i\in[m]} \frac{1}{H} \sum_{h=1}^H \ell_i(\tilde{P}_h, \tilde{\bw})$. 
Then according to the proof of \Cref{thm:learning}, for $P\sim D$, we have with high probability for every $i\in [m]$:
\[
\ell_i(\tilde{P},\bw^* ) \geq 1-\Omega(\epsilon).
\]

Let us now define $\bw$ such that $w_i  = \frac{w^*_i}{({\hat{\ell}^f_i)}^{\alpha}}$. Then, by the homogeneity property, for a random $P\sim \Ds$, the objective function $f$ corresponding to the assignment
 $x_{i,j}(P,\bw)$ is at least $(1-\Omega(\epsilon)) \cdot T$ with high probability.
\end{proof}
\end{toappendix}

\section{Conclusion and Future Directions}\label{sec:conclusion}
In this paper, we gave a unifying framework for designing near-optimal algorithm for fractional allocation problems for essentially all well-studied minimization and maximization objectives in the literature. The existence of this overarching framework is rather surprising because the corresponding worst-case problems exhibit a wide range of behavior in terms of the best competitive ratio achievable, as well as the techniques required to achieve those bounds. It would be interesting to gain further understanding of the optimal learned parameters introduced in this paper. One natural conjecture is that these are optimal dual variables for a suitably defined convex program (for instance, such convex programs are known for restricted assignment and $b$-matching~\cite{agrawal2018proportional}). Another interesting direction of future work would be to explore other polytopes beyond the simple assignment polytope considered in this paper, such as that corresponding to congestion minimization problems.

\bibliographystyle{alpha} 
\bibliography{refs}

\end{document}